\newtheorem{theorem}{Theorem}
\newtheorem{claim}[theorem]{Claim}
\newtheorem{corollary}[theorem]{Corollary}
\newtheorem{definition}[theorem]{Definition}
\newtheorem{lemma}[theorem]{Lemma}
\newtheorem{proposition}[theorem]{Proposition}
\newcommand{\OPT}{\mbox{\textsc{opt}}}
\newcommand{\opt}{\mbox{\textsc{opt}}}
\newcommand{\sol}{\mbox{\textsc{sol}}}
\newcommand{\optn}{\mbox{\textsc{optn}}}
\newcommand{\pr}{\mbox{\textsc{ooebp}}}
\newcommand{\C}{{\cal{C}}}
\newcommand{\Xomit}[1]{ }
\newenvironment{proof}[1][Proof]{\textbf{#1.} }{\ \rule{0.5em}{0.5em}}
\mathchardef\mhyphen="2D
\newcommand{\eps}{\upvarepsilon}
\begin{document}

\title{More on ordered open end bin packing}

\date{}

\author{J\'anos Balogh \thanks{Institute of Informatics,
     University of Szeged, Szeged, Hungary. \texttt{baloghj@inf.u-szeged.hu}. Supported by the project ``Integrated program for
training new generation of scientists in the fields of computer
science'', no. EFOP-3.6.3-VEKOP-16-2017-0002.}  \and Leah
Epstein\thanks{ Department of Mathematics, University of Haifa,
Haifa, Israel. \texttt{lea@math.haifa.ac.il}. } \and Asaf
Levin\thanks{Faculty of Industrial Engineering and Management, The
Technion, Haifa, Israel. \texttt{levinas@ie.technion.ac.il.}
Partially supported by grant number 308/18 of ISF - Israeli
Science Foundation.}}

\maketitle

\begin{abstract}
We consider the Ordered Open End Bin Packing problem. Items of
sizes in $(0,1]$ are presented one by one, to be assigned to bins
in this order. An item can be assigned to any bin for which the
current total size strictly below $1$. This means also that the
bin can be overloaded by its last packed item. We improve lower
and upper bounds on the asymptotic competitive ratio in the online
case. Specifically, we design the first algorithm whose asymptotic
competitive ratio is strictly below $2$ and it is close to the
lower bound. This is in contrast to the best possible absolute
approximation ratio, which is equal to $2$. We also study the
offline problem where the sequence of items is known in advance,
while items are still assigned to bins based on their order in the
sequence. For this scenario we design an asymptotic
polynomial time approximation scheme.
\end{abstract}

\section{Introduction}
We study Ordered Open End Bin Packing (\pr). The input for this
problem is a sequence of items of positive sizes. An item can be
assigned to any bin that has a current total size strictly smaller
than $1$. In the online problem, items are presented one by one to
be packed in this way. In contrast, an offline algorithm knows the
sequence of  input items in advance. Since the input is ordered,
it also has to process the input as a sequence when it creates a
packing, where the input is ordered in the same way as it would
have been presented to an online algorithm.

We analyze algorithms via worst-case analysis. The absolute
competitive ratio (for online algorithms) or absolute
approximation ratio (for offline algorithms) is the worst-case
ratio between the cost of the algorithm and the optimal (offline)
cost (for the same input). The asymptotic measures are the
superior limits of these values when we let the optimal cost grow
to infinity. The asymptotic measures are known to be the
meaningful ones for bin packing problems, and thus, in this paper,
we will sometimes omit the word asymptotic. An optimal offline
solution is denoted by $\OPT$, and its cost for an input $I$ is
denoted by $\OPT(I)$.

There are several variants for open end bin packing. The total
size of items packed into a bin is called {\it load}, and in all
these variants it is possible to pack sets of items into bins with
loads above $1$ under certain conditions. In the maximum variant,
it is required that every bin has some ordering of items such that
the removal of the last item results in load strictly below $1$.
Thus, in this version, it is required that the load will be below
$1$ after the removal of the largest item. In the minimum variant,
it is required that for every bin, the removal of any item causes
the load to be below $1$. Thus, in this version, the condition is
on the removal of the smallest item. These two variants and ones
that are equivalent to them were studied under different names
\cite{Zhang98,LDY01,EL08,LYX10A,LYX10T,GZ09}. These variants are
generally very different from \pr. In the maximum variant, an
asymptotic fully polynomial time approximation scheme (AFPTAS) was
obtained by applying the AFPTAS for standard bin packing
\cite{FerLue81,KK82,BJK18} and packing the largest items as the
last items of their bins \cite{LDY01}. Such a scheme is a family
of algorithms, where for every $\eps>0$ there is an algorithm of
asymptotic approximation ratio at most $1+\eps$ and the running
time is polynomial in the input size and in $\frac 1{\eps}$. In
the minimum variant, an AFPTAS was designed as well
\cite{LYX10T,EL08}, which required additional ideas, but it cannot
be adapted for \pr\ due to the ordered input. More precisely, in
the ordered variant, in some cases the last item of a bin may be
the item of maximum size, in other cases it can only be the item
of minimum size, and typically, the last item (which is just the
item of the largest index) is not the maximum or the minimum.
Thus, offline algorithms need to be designed carefully such that
there is full knowledge on the identity of the last item. Another
variant for online algorithms is where the last item is the one
whose removal should bring the load below $1$, where the online
algorithm is compared to an offline algorithm that can reorder the
items. We will refer to this variant as the unfair variant.

The problem which we study, \pr, was studied Yang and Leung
\cite{YangL03}. For this problem, we will use the term $1$-items
for items of size $1$ or larger. As explained in \cite{YangL03},
such an item has a special role since once it is packed into a
bin, even if the bin is empty, no additional items can be packed
into this bin. We assume that all these items have size $1$
exactly, since all items of sizes at least $1$ are equivalent with
respect to the action of any algorithm. Thus, input items have
rational sizes in $(0,1]$. In addition to average-case analysis,
Yang and Leung \cite{YangL03} design an algorithm whose asymptotic
competitive ratio is strictly below $2$ for the case without
$1$-items, and they show that the algorithm cannot perform much
better if there are $1$-items. They also proved lower bounds on
the asymptotic competitive ratio for the online case: $1.630297 $
for inputs with $1$-items, and $1.415715$ for inputs without such
items. The lower bound results are proved based on a computer
assisted proof, where all packing patterns are enumerated. While
the reader has no access to this analysis, we have verified these
results independently using a different method (see discussion
below on our results). One can observe that upper bounds for the
unfair variant are valid for \pr, since the actions of the online
algorithm and its objective function value are the same for both
variants, while an offline algorithm for the unfair variant can
perform all actions which it can do for \pr, and possibly other
actions, so it may have a smaller cost for the unfair variant but
not a larger cost. For this variant, tight asymptotic competitive ratio bounds of
$2$ and $1.5$ are known for the cases with and without $1$-items
respectively \cite{LDY01,Zhang98} (the bound of $2$ is tight also
for the absolute competitive ratio).

We consider both the offline variant of the problem for which we
design an asymptotic approximation scheme and the online case for
which we design the current best online algorithms and improved
lower bounds on the possible asymptotic competitive ratio that can
be achieved by online algorithms.

Note that \pr\ is different from classic online bin packing, for
which the current best lower and upper bounds on the asymptotic
competitive ratio are $1.542780906$ and $1.57828956$, respectively
\cite{BBDEL_newlb,BBDEL_ESA18}. Recall that the absolute
competitive ratio is usually seen as a less  interesting measure
for bin packing problems. It is known that its value for classic
online bin packing is $\frac 53$ \cite{Zhang,BBDSS19}. There are
several other packing problems where an offline solution still
needs to process the input as a sequence
\cite{FM,E_LIB,dosa2013bin,CSW11,BBDEKLT15,BBDEKT15,BDESV18}.

\paragraph{Our results.}

We design an asymptotic polynomial time approximation scheme
(APTAS) for the offline variant. Such a scheme still has an
algorithm of asymptotic approximation ratio at most $1+\eps$ for
every $\eps>0$, but the running time is not necessarily polynomial
in $\frac 1{\eps}$, that is, $\eps$ is seen as a constant and for every fixed value of $\eps$ the time complexity is polynomial. As
explained above, the model for \pr\ is very different from other
variants of open end bin packing, and other bin packing problems.
The authors are not aware of any asymptotic approximation schemes for packing
problems over sequences, and previously known approximation
schemes are for the variants where offline solutions can reorder
the input \cite{BBDEKLT15,LDY01,LYX10T,EL08}.

For the online problem, we briefly discuss the relation between
variants, and show that the absolute competitive ratio for \pr\ is
exactly $2$. Then, we analyze the asymptotic competitive ratio
using a combination of new and old methods. We define a new class
of algorithms, which allows us to improve the upper bound on the
asymptotic competitive ratio from $2$ \cite{LDY01} to
approximately $1.691561$. We show that the obtained ratio is tight
for the class of algorithms which we define. We design a similar
algorithm for the case without $1$-items, which yields an
asymptotic competitive ratio of at most $1.44465$, improving over
the previous bound of $1.5$ \cite{Zhang98}. As mentioned above,
lower bounds on the asymptotic competitive ratio were given with
partial proofs \cite{YangL03}. We fill this gap and show that they
can be improved slightly using a different input \cite{BBG}. Thus,
the gaps for the asymptotic competitive ratios are now between
approximately $1.630483$ and approximately $1.691561$ for the case
with $1$-items, and between approximately $1.415752$ and $1.44465$
for the case without $1$-items.

\section{An Asymptotic polynomial time approximation scheme (APTAS) for the (offline) \pr}
Let $\eps>0$ be such that $\frac 1{\eps} \geq 3$ is an integer
(and in particular, $\eps \leq \frac 13$). When we will consider
multiple instances of \pr, it will be useful to denote by
$\opt(I)$ the optimal cost for instance $I$, but when the instance
is clear by context we use $\opt$ to denote this optimal cost.

In order to design an asymptotic polynomial time approximation
scheme (APTAS), it suffices to show the existence of a polynomial
time algorithm that always returns a solution of cost at most
$(1+\eps)^c\opt +f(1/\eps)$ for a positive constant $c \geq 1$ and
some function $f$ (where $\opt$ is the optimal cost for the same
instance). The degree of the polynomial upper bounding the time
complexity of this algorithm may depend on $\eps$.

Our scheme applies a guessing step, where this step is followed by
a pre-processing step that applies a linear grouping type of
rounding \cite{FerLue81,KK82}. Then, it uses an algorithm for
solving a fixed-dimension integer program (IP) based on a
configuration IP, in order to create a plan of the output
\cite{Len83,Kan83}. This last plan is transformed into a feasible
solution for \pr\ in the final post-processing step.  The guessing
step (together with a modification of the solution based on it) is
the novel step which allows us to adjust the known methods, which
were previously used for problems without sequence-dependent
information, for our problem. This step allows us to overcome the
complications of designing algorithms for inputs that are
sequences rather than sets.

\subsection{The guessing step}
Let $1,2,\ldots ,n$ be the sequence of items that is given as
input, and let $s_i$ be the size of item $i$.  We define the
exceeding item of a bin in the following way.
\begin{definition}
Fix a bin $B$ in a feasible solution. The {\em exceeding item} of
$B$, if the total size of items in $B$ is at least $1$, is the
item of maximum index packed into $B$.  For a bin $B$ with total
size of items strictly smaller than $1$, its exceeding item is
undefined (and it has no exceeding item).
\end{definition}

We note that based on this definition, a 1-item is always an
exceeding item of a bin, even if it is the unique item of a bin.

Let $\opt$ be a fixed optimal solution.  We next show that we can
assume that an exceeding item has size of at least $\eps^2$, while
we still get a maintainable near-optimal solution, denoted by
$\OPT'$.
\begin{lemma}
Given an optimal solution $\OPT$ of cost $\opt$, there is a feasible solution $\OPT'$ of cost at most $(1+\eps^2)\opt+1$ such that every exceeding item has size at least $\eps^2$.
\end{lemma}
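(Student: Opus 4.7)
The plan is to build $\OPT'$ directly from $\OPT$ by relocating every small exceeding item (one of size less than $\eps^2$) into a modest number of freshly-opened \emph{auxiliary} bins, leaving the rest of $\OPT$'s assignment intact. First I would let $E_s$ denote the set of exceeding items of $\OPT$ of size less than $\eps^2$ and put $k=|E_s|$; since each bin contributes at most one exceeding item, $k\le\opt$. The construction would then process the items in their given order: an item outside $E_s$ is placed in the same bin it occupies in $\OPT$, while items of $E_s$ are placed one by one, in their arrival order, into auxiliary bins, starting a new auxiliary bin once the current one has received $1/\eps^2$ items of $E_s$. The hypothesis that $1/\eps$ is integer, which makes $1/\eps^2$ integer, is essential here and will drive the final bin count.

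Next I would verify feasibility and the elimination of small exceeding items. For an original $\OPT$-bin $B$, the items $\OPT'$ assigns to $B$ form a prefix of what $\OPT$ put into $B$ (the exceeding item is dropped iff it lies in $E_s$); since in $\OPT$ the load of $B$ was strictly below $1$ just before its exceeding item was placed, the same holds in $\OPT'$, and whenever the exceeding item is dropped the final load of $B$ in $\OPT'$ is below $1$, so $B$ carries no exceeding item in $\OPT'$. For each auxiliary bin, the load just before placing its $j$th item is less than $(j-1)\eps^2\le 1-\eps^2<1$, so the placement is legal, and after all $j\le 1/\eps^2$ items the total load is strictly less than $(1/\eps^2)\cdot\eps^2=1$, so the auxiliary bin carries no exceeding item either. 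Consequently every exceeding item of $\OPT'$ is an unchanged exceeding item of $\OPT$, of size at least $\eps^2$.

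Finally, I would bound the cost: the number of auxiliary bins is $\lceil k\eps^2\rceil\le k\eps^2+1\le\eps^2\opt+1$, so $\OPT'$ uses at most $\opt+\eps^2\opt+1=(1+\eps^2)\opt+1$ bins. The most delicate point is exactly the choice of grouping size: the integrality of $1/\eps^2$ is what lets me pack $1/\eps^2$ arbitrarily small items into a single auxiliary bin while keeping every prefix load strictly below $1$, and hence fit all $k$ items of $E_s$ into only $\lceil k\eps^2\rceil$ bins. Without this integrality the per-bin capacity would drop slightly and the bound would degrade to roughly $\opt/(1-\eps^2)+O(1)$ rather than $(1+\eps^2)\opt+1$.
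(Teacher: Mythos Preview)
Your proof is correct and follows essentially the same construction as the paper: remove the small exceeding items from their original bins and repack them $1/\eps^2$ to a bin into fresh bins, then count. You are actually more careful than the paper on two points that matter for \pr: you explicitly process items in arrival order so that the ordered packing constraint is visibly respected, and you verify that each auxiliary bin's load stays strictly below $1$ at every prefix (hence the auxiliary bins have no exceeding item), whereas the paper states this conclusion without spelling out the prefix argument.
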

\begin{proof}
Consider the set of exceeding items of sizes smaller than $\eps^2$
in $\OPT$.  We repack these items into new bins such that there
are $\frac{1}{\eps^2}$ such items packed into each bin, except
perhaps for the last bin that may have a smaller number of items.
Let $\OPT'$ denote the resulting solution.  Observe that by
definition, in $\OPT'$ all exceeding items are of size at least
$\eps^2$. This holds as all the smaller items that were exceeding
items are repacked into bins where no such bin has an exceeding
item. Furthermore, the number of new bins is at most
$\eps^2\opt+1$, and thus the claim follows.
\end{proof}

We will establish the existence of a near optimal solution that
has a {\em certificate} which we define as follows.
\begin{definition}
Let $\sol$ be a feasible solution for \pr.  We say that $\sol$ has
a certificate $$(e_0,e_1,e_2,\ldots ,e_{1/\eps})$$ if the
following conditions hold:
\begin{enumerate}
\item $0 = e_0 < e_1 \leq e_2 \leq \cdots \leq e_{1/\eps}=n$ and
$e_1,e_2,\ldots ,e_{1/\eps-1}$ are integers. \item For every bin
$B$ in $\sol$ (exactly) one of the following cases holds:
\begin{itemize}
\item Either the total size of the items in $B$ is strictly
smaller than $1$, i.e., $B$ does not have an exceeding item, \item
or there is an integer $i(exceed)$ such that the exceeding item
has an index strictly larger than $e_{i(exceed)}$ and all other
items in $B$ (if there are any such items) have indices at most
$e_{i(exceed)}$.
\end{itemize}
\end{enumerate}
If this holds for every bin, we say that $(e_0,e_1,e_2,\ldots
,e_{1/\eps})$ is a {\em certificate} of $\sol$.
\end{definition}

Every input has at least one solution with a certificate.
Specifically, the certificate $(0,n,\ldots,n)$ is a certificate of
a solution where every item is packed into a different bin. This
holds since the only bins with exceeding items are those with
$1$-items, and for each such item, as $e_0=0$, its index is
larger.

\begin{definition}
A solution $\sol$ for \pr\ is called a {\em nice solution} if it
satisfies the following conditions.  First, for every bin $B$ in
$\sol$ that has an exceeding item, the  size of the exceeding item
of $B$ is at least $\eps^2$, second, $\sol$ has a certain
certificate $(e_0,e_1,e_2,\ldots ,e_{1/\eps})$.
\end{definition}

Let $\optn$ be an optimal nice solution (i.e., a solution of
minimal cost among the nice solutions). Next, we show that we can
approximate $\optn$.  With a slight abuse of notation we denote by
$\optn$ both the solution and its cost, and we note that in every
case the distinction between the two will be clear by context.
Similarly, we will use $\opt'$ to denote the cost of $\OPT'$.

The next lemma also shows in particular that there is at least one
nice solution for every input.

\begin{lemma}\label{nice_lem}
We have $\optn \leq (1+\eps)\cdot \opt' +\frac{1}{\eps}$.
\end{lemma}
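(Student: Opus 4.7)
The plan is to construct, from $\OPT'$, a specific nice feasible solution of cost at most $(1+\eps)\opt'+1/\eps$; since $\optn$ is the minimum cost over nice solutions, this yields the claim.

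To define the certificate, let $m\leq \opt'$ be the number of bins of $\OPT'$ with an exceeding item, and let $p_1<p_2<\cdots<p_m$ be the sorted indices of those exceeding items. Setting $K=\lceil \eps m\rceil$, take $e_0=0$, $e_{1/\eps}=n$, and $e_j=p_{jK}$ for $j=1,\ldots,1/\eps-1$ (with $e_j=n$ if $jK>m$). Each section $I_j=(e_{j-1},e_j]$ then contains at most $K\leq \eps m+1$ exceeding indices of $\OPT'$. Next, I would transform $\OPT'$ bin by bin: for each bin $B$ of $\OPT'$ with $\ell_B\in(e_j,e_{j+1}]$, if all non-exceeding items of $B$ have index at most $e_j$, keep $B$ unchanged (it is certified by $i(exceed)=j$); otherwise, remove from $B$ the ``problematic'' non-exceeding items (those with indices strictly greater than $e_j$, which lie in $(e_j,\ell_B)$), so that the reduced bin is certified by $e_j$, and repack the removed items into additional bins.

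The hard part is bounding the number of extra bins by $\eps\opt'+1/\eps$. Within each section $I_{j+1}$, the problematic items come from at most $K$ bins of $\OPT'$, and because the pre-exceeding load of each $\OPT'$ bin is strictly below $1$, the problematic items in each section have total size strictly below $K$. A naive per-section repack of these items gives $\Theta(K)$ extras and thus $\Theta(m)$ aggregate extras, which is too many by a factor of $1/\eps$. Obtaining the sharper bound requires a careful amortization: one exploits that problematic items can either be absorbed as non-exceeding items into the reduced bins already kept (using the slack created by removing items), or consolidated into new ``open'' bins of load approaching $1$, or taken as exceeding items of newly opened bins whose only other items are from earlier sections. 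This produces only $O(1)$ extras per section (contributing the $1/\eps$ additive term) together with a small $\eps$-fraction loss from imperfect consolidation (contributing the $\eps\opt'$ multiplicative term). The $\eps^2$ lower bound on exceeding-item sizes inherited from the previous lemma is what controls how many problematic items can legitimately play the role of exceeding items in the new bins, and hence is what makes this consolidation step go through.
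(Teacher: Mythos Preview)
Your certificate definition is essentially the paper's: place the $e_j$'s at evenly spaced positions in the sorted list of exceeding-item indices. But your transformation is the wrong one, and the ``hard part'' you flag is not just hard---it does not go through as sketched. Problematic non-exceeding items can have arbitrary sizes in $(0,1)$, so in each section the $K$ affected bins may contribute problematic items requiring $\Theta(K)$ new bins, and there is no mechanism to avoid the $\Theta(m)$ blow-up. Your ``absorption into reduced bins'' fails because a problematic item from section $j{+}1$ cannot be re-inserted as a non-exceeding item into any bin certified by $e_j$ (its index exceeds $e_j$), and absorbing it into a later section's bin only works if the later sections' slack dominates the earlier sections' problematic mass, which need not hold. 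Your appeal to the $\eps^2$ lower bound is also misplaced: that bound constrains exceeding items of $\OPT'$, not the problematic items (which are non-exceeding and may be tiny), and in fact the paper's proof of this lemma does not use the $\eps^2$ bound at all---it merely preserves it.

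The idea you are missing is that the paper leaves all non-exceeding items in place and instead \emph{shifts the exceeding items}. Sort the bins $O_1,\ldots,O_m$ of $\OPT'$ by exceeding-item index; move the exceeding items of the first $\beta=m-(\tfrac{1}{\eps}-1)\alpha$ bins (with $\alpha=\lfloor\eps m\rfloor$) into fresh singleton bins, then for each $p>\beta$ move the exceeding item of $O_p$ into $O_{p-\alpha}$. Every surviving bin $O_\ell$ now has its original non-exceeding items (all of index below its old exceeding index) together with a new exceeding item whose index is strictly larger, so feasibility is automatic; and choosing $e_i$ to be the old exceeding index of $O_{\beta+(i-1)\alpha}$ certifies every such bin. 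The cost increase is exactly $\beta\le\eps m+\tfrac{1}{\eps}\le\eps\,\opt'+\tfrac{1}{\eps}$, with no amortization or repacking needed.
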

\begin{proof}
Recall that $\OPT'$ of cost $\opt'$ satisfies the first condition in the definition of nice solutions.  We create a nice solution $\sol$ by modifying $\OPT'$ such that the first condition will be maintained while the second condition will be satisfied as well.  Then we will show that the cost of $\sol$ is at most $ (1+\eps)\cdot \opt'+\frac{1}{\eps}$.  The claim will follow by the optimality of $\optn$ among nice solutions.

Consider the solution $\OPT'$.  Some bins in this solution have
exceeding items while other bins do not have exceeding items.  The
packing of items that were packed into bins (of $\OPT'$) without
exceeding items is left without modification.  Consider the $m$
bins $O_1,O_2,\ldots ,O_m$ of $\OPT'$ containing exceeding items,
such that these bins are sorted according to an increasing order
of the indices of the exceeding items of these bins.  Let the
integer $\alpha=\lfloor \eps m \rfloor$ be the result of integer
division of $m$ by $\frac 1{\eps}$. The remainder of this division
is $m-\frac{\alpha}{\eps}$. Let $\beta = m-(\frac 1{\eps}-1)\cdot
\alpha$ be the sum of $\alpha$ and the above remainder.

%If $m\leq \frac{1}{\eps}$ we modify $\OPT'$ by repacking the
%exceeding items of $\OPT'$ in new bins one item per bin, and let
%$\sol$ be the new solution.  We define a certificate vector
%$(e_0,e_1,e_2,\ldots ,e_{1/\eps})$ by letting $e_i=0$ for all
%$i<\frac 1{\eps}$ (and $e_{1/\eps}=n$ as required). Observe that
%this is a certificate of $\sol$ as every bin in $\sol$ either does
%not have an exceeding item or contains an exceeding item but no
%other item (this can happen for items of size $1$, their bins will
%not have other items, so the condition will hold for $e_1$). In
%this case $\sol$ is indeed nice and its cost is at most
%$\opt'+\frac{1}{\eps}$
%

In the case where $m \geq \frac{1}{\eps}$, it holds that $\alpha
\geq 1$ and otherwise $\alpha=0$. In both cases we have $\beta
\leq \eps\cdot m +\frac{1}{\eps}$ and $\beta \geq \eps \cdot m
\geq \alpha$.

In order to modify $\OPT'$ we do the following. The exceeding
items of $O_1,\ldots, O_{\beta}$ are packed into new bins, one
item per bin. If the unique item packed into the bin is a 1-item,
then it is of size at least $\eps^2$, and it is an exceeding item
but no other item is packed there so both conditions hold by using
$i(exceed)=0$ for these bins. Otherwise, the bin does not contain
an exceeding item so both conditions hold trivially. Thus, the
bins obtained in this way satisfy both conditions no matter which
certificate vector we consider. In the case $m < \frac 1{\eps}$ we
are done as $\alpha = 0$ and $\beta = m$ hold in this case. In
particular, in the last case $\sol$ is indeed nice and its cost is
at most $\opt'+\frac{1}{\eps}$.

Otherwise, $\beta<m$, and we deal with the remaining $m-\beta$
bins with exceeding items. Prior to the last step all bins
$O_1,\ldots, O_{\beta}$ had exceeding items as well, where after
this step only bins $O_{\beta+1},O_{\beta+2},\ldots, O_{m}$ have
exceeding items.

For every bin of $O_p$ such that $\beta+1 \leq p \leq m$, we
repack the exceeding item of $O_{p}$ into the bin $O_{p-\alpha}$,
where we apply this for every such $p$. The smallest index of any
bin receiving an exceeding item is $\beta+1-\alpha \geq 1$.
Observe that the index of the new item joining a bin is larger
than the index of its original exceeding item due to the sorting
of these bins, so the packing remains valid. Thus, every repacked
item is the new exceeding item of its new bin, if this new bin
indeed has an exceeding item after the transformation.

We denote the resulting solution by $\sol$. Thus, every bin $B$ in
$\sol$ satisfies that if $B$ has an exceeding item, then the size
of the exceeding item of $B$ is at least $\eps^2$ because every
exceeding item of a bin in $\sol$ was an exceeding item of a bin
in $\OPT'$. This applies for the bins of $\OPT'$ that did not have
exceeding items as the packing of items into these bins is the
same as well. Note that after the transformation, bins
$O_{m-\alpha+1}, O_{m-\alpha+2}, \ldots, O_{m}$ have no exceeding
items.

We define the vector  $(e_0,e_1,e_2,\ldots ,e_{1/\eps})$ by
letting $\eps_0=0$ and $e_{1/\eps}=n$ as required, and for every
$1 \leq i \le \frac{1}{\eps}-1$ letting $e_i$ be the index of the
exceeding item of $O_{\beta+(i-1)\alpha}$ in the solution $\OPT'$
(i.e., the solution before the transformation). Observe that this
vector has a monotonically increasing list of components due to
the sorting of bins with exceeding items in $\OPT'$.

It suffices to show that for every bin $B$ in $\sol$, containing
both non-exceeding items as well as an exceeding item, we have a
value of $i$, such that all non-exceeding items have indices at
most $e_i$ while the exceeding item has index strictly larger than
$e_i$.

We next argue that this vector is a certificate of $\sol$. We only
consider the bins out of
$$O_{\beta+1-\alpha},O_{\beta+2-\alpha},\ldots,O_{m-\alpha}$$ (where
$m-\alpha=\beta+(\frac{1}{\eps}-2)\alpha$) with exceeding items
after the transformation, as other bins with exceeding items were
discussed already (those are only bins of $\sol$ with $1$-items as
their only items). For a given bin $B$, let $i'$ be such that its
index $\ell$ as a bin $O_{\ell}$ is in
$(\beta+(i'-2)\alpha,\beta+(i'-1)\alpha]$ (where $i' \in
\{1,2,\ldots,\frac 1{\eps}-1\}$). The item of index $e_{i'}$ was
the exceeding item of $O_{\beta+(i'-1)\alpha}$ in $\OPT'$,
therefore the index of the original exceeding item of $B=O_{\ell}$
in $\OPT'$, was at most $e_{i'}$ due to the sorting, and all its
other items have smaller indices since the bin was valid. The new
exceeding item of $B$ was previously the exceeding item of a bin
$O_{\ell'}$ where $\ell'$ is in
$(\beta+(i'-1)\alpha,\beta+i'\alpha]$, so its index is strictly
above $e_{i'}$. Thus, indeed this vector is a certificate of
$\sol$.

%Let $i$ be the maximum integer value such that the (new) exceeding
%item of $B$ has an index strictly larger than $e_i$.
%Observe that $e_i$ is well defined as the components of the
%certificate we consider are all distinct, and by our our sorting
%of the bins in $\OPT'$ with exceeding items we know that the
%exceeding item of $B$ was an exceeding item of a bin $O_q$ of
%$\OPT'$ with $q>\beta$.  By definition, the index of the exceeding
%item of $B$ is larger than $e_i$.   The non-exceeding items of $B$
%were packed into the bin $O_{q-\alpha}$ of $\OPT'$ and thus all of
%them have indices of at most the index of the item that used to be
%the exceeding item of $O_{q-\alpha}$ that is at most $e_i$ (by the
%sorting of the bins with exceeding items in $\OPT'$).

Furthermore, the process of transforming $\OPT'$ into the new solution, \sol, creates only $\beta \leq \eps \cdot m+ \frac{1}{\eps}$ new bins so the cost of $\sol$ is at most $(1+\eps) \opt' +\frac{1}{\eps}$ as we argued.
\end{proof}

\paragraph{The guessing.} We guess the certificate vector $(e_0,e_1,e_2,\ldots ,e_{1/\eps})$
of $\optn$ where the components of the vectors are integers in
$[1,n]$ ($\frac{1}{\eps} -1$ elements are guessed).  Thus, the number of
different values of the guesses is $O(n^{1/\eps})$. Each guess
will be examined in an iteration step of a loop in our (guessing)
procedure. For every value of the guess, we apply the algorithm in
the next step that returns a feasible solution for \pr, and among
all the solutions obtained in the different iterations of this
loop, we pick the cheapest one as the output of the algorithm.  In
order to analyze our algorithm it suffices to consider the
iteration of this loop in which we use the value of the guess
corresponding to a certificate of $\optn$, and show that for this
iteration the cost of the returned solution is at most
$(1+\eps)^c\optn +f(1/\eps)$ for a constant $c \geq 1$, and some
function $f$.

\subsection{The pre-processing step}
The pre-processing step which we apply is described as rounding of
large items, but we apply this operation separately for each
subsequence of items of indices in $(e_i,e_{i+1}]$ for every $i$.
Similarly to approximation schemes for the bin packing problem,
this rounding of large items is carried out using the so-called
{\it linear grouping} rounding method.  %%%%We let $e_0=0$.

In what follows we sometimes introduce (dummy) items in the middle
of the input sequence and sometimes we delete items from the
sequence.  In order to maintain the guessed certificate in these
operations, we treat the certificate as a collection of pointers
to a (doubly) linked list of the items in the input. Lists are
initialized by items of indices in $(e_i,e_{i+1}]$, where items
appear in the lists sorted by increasing indices.  Now, inserting
items means we insert items to the corresponding position in this
linked list, and deleting items is done as in linked lists (where
deletion can also be of the first or last item).
% that were pointed
%by one of the pointers (or more pointers) we mean that this
%pointer will point to the last item (appearing just before it in
%the linked list) that was not deleted prior to the previously
%pointed item.

In the remainder of this step we find an upper bound and a lower
bound on $\optn(I)$ of an instance $I$ of \pr\, by using two
different nice solutions (for the two bounds).

An item is called a {\it large item of interval $(e_i,e_{i+1}]$}
if its index is in the interval $(e_i,e_{i+1}]$ and its size is at
least $\eps^2$.  An item is called a {\it small item of interval
$(e_i,e_{i+1}]$} if its index is in the interval $(e_i,e_{i+1}]$
and its size is smaller than $\eps^2$.  An item is {\it large} if
it is a large item for some interval, and an item is an {\it item
of interval $(e_i,e_{i+1}]$} if its index is in the interval
$(e_i,e_{i+1}]$, that is, if it is either a large item for this
interval or a small item for this interval.  Throughout the rest
of the scheme, we keep the certificate vector as a requirement of
nice solutions in the sense that an exceeding item will belong to
a linked list with a larger value of $i$.
%%(this vector is modified if we introduce items or delete items according to the above rules).

We denote by $n_i$ the number of large items of interval
$(e_i,e_{i+1}]$ for input $I$.

\begin{lemma}\label{le} Without loss of generality, we assume that for every
$i$, we have that $\eps^3 n_i$ is an integer.
\end{lemma}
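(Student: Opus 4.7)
The plan is to enforce the divisibility condition by padding each interval with a controlled number of dummy large items, and then to argue that the padding increases the cost of an optimal nice solution only by an additive function of $\eps$. The padded instance is what subsequent steps of the scheme are applied to, and the dummies are discarded at the very end to recover a feasible packing of the original sequence whose cost is no larger.

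Concretely, for each $i\in\{0,1,\ldots,1/\eps-1\}$ I would set $r_i = n_i \bmod (1/\eps^3)$, which is well-defined because $1/\eps$, and hence $1/\eps^3$, is a positive integer. Then I would insert $1/\eps^3 - r_i$ dummy items, each of size exactly $\eps^2$, into the linked list corresponding to the interval $(e_i,e_{i+1}]$, appending them at the end of that sub-list (i.e., immediately before the pointer representing $e_{i+1}$). After these insertions, the number of large items of interval $i$ becomes a multiple of $1/\eps^3$, so $\eps^3 n_i$ is a non-negative integer. The total number of dummies introduced over all intervals is at most $(1/\eps)(1/\eps^3) = 1/\eps^4$.

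Next I would bound the blow-up in the optimal nice cost. Starting from an optimal nice solution of the original instance, I would pack every dummy into its own new bin. Each such singleton bin has load $\eps^2 < 1$, hence no exceeding item, and the two conditions in the definition of a nice solution are met vacuously, independently of which certificate is considered. Since each new bin contains only its dummy and is inserted at exactly the position of the dummy in the linked list, ordered-packing validity is preserved. This yields a nice solution of the padded instance of cost at most $\optn(I) + 1/\eps^4$, so $\optn$ of the padded instance exceeds that of the original by at most this additive term.

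To recover a packing of the original sequence, I would run the remaining steps of the scheme on the padded instance and then delete the dummies from whichever bins end up containing them. Removing items can only lower bin loads, so ordered feasibility is preserved and the resulting cost does not exceed the cost produced for the padded instance. Hence the additive error $1/\eps^4$, possibly scaled by the later $(1+\eps)^c$ factors, is absorbed into the function $f(1/\eps)$ of the overall asymptotic guarantee. The main obstacle I anticipate is purely bookkeeping: one has to verify that treating dummies as size-$\eps^2$ large items appended at the end of their interval is compatible with the linear-grouping rounding, the configuration IP, and the post-processing, and that the certificate pointers are maintained under insertion and deletion. This is straightforward because dummies never need to be combined with other items to remain feasible and they are simply discarded last.
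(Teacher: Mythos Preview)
Your argument is correct and follows essentially the same padding strategy as the paper: add at most $1/\eps^3$ dummy large items per interval (at most $1/\eps^4$ in total), pack each into its own bin to bound the increase in $\optn$ additively, and absorb this into $f(1/\eps)$. The only cosmetic difference is that the paper chooses dummies of size $1$ (so the singleton bins have an exceeding item, handled via $i(exceed)=0$) whereas you choose size $\eps^2$ (so the singleton bins have no exceeding item); both choices make the padded instance nice and yield the same additive bound.
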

\begin{proof}
For values of $i$ for which the claim does not hold, we add up to
$\frac{1}{\eps^3}$ items, each of which has size $1$, and they
will appear in the sequence of items just before the item
$e_{i+1}$. Note that applying this transformation for all values
of $i$ that had not satisfied the claim may add up to
$\frac{1}{\eps^4}$ items so it increases the optimal cost of nice
solutions by an additive term of at most $\frac{1}{\eps^4}$. Thus,
it suffices to approximate the resulting instance after adding
these items.
\end{proof}

By slightly abusing notation, the input with the modification
described in the proof of Lemma \ref{le} is still denoted by $I$.

For every $i$, let $\ell^i(1),\ell^i(2),\ldots,\ell^i(n_i)$ be the
large items of interval $(e_i,e_{i+1}]$ sorted in a non-increasing
order of their size. That is,  $$s_{\ell^i(1)} \geq s_{\ell^i(2)}
\geq \cdots \geq s_{\ell^i(n_i)} \geq \eps^2 .$$  For every $i$,
and for every $k=1,2,\ldots ,\frac{1}{\eps^3}$, the $k$-th group
of interval $(e_i,e_{i+1}]$ denoted as $G(i,k)$ is the set of
$\eps^3 \cdot n_i$ items of indices $\ell^i((k-1)\cdot \eps^3
\cdot n_i+1),\ell^i((k-1)\cdot \eps^3 \cdot n_i+2), \ldots,
\ell^i(k\cdot \eps^3 \cdot n_i)$.

The {\it rounded instance} is the instance that we obtain by
rounding up the size of all large items such that for every $i,k$,
the items of $G(i,k)$ are rounded up to $s_{\ell^i((k-1)\cdot
\eps^3 \cdot n_i+1)}$ (that is the largest size of an item in
$G(i,k)$), while small items keep their original size. The small
items are not included in $G(i,k)$.

Recall that $I$ denotes the instance prior to this rounding, and
let $I'$ be the rounded instance.  Furthermore, we denote by $I''$
the instance obtained from $I'$ by deleting all items in
$\bigcup_{i=0}^{1/\eps-1} G(i,1)$, and observe that $I''$ can be
obtained from $I$ by rounding the size of each item of $G(i,k)$
down to the size of the largest item of $G(i,k+1)$ (for all $i$
and all $k<1/\eps^3$) and deleting all items of $G(i,1/\eps^3)$
(for all $i$). Note that we keep the small items separately, and
they are included in $I'$ and $I''$.

The use of this rounding is justified by the following lemma.

\begin{lemma}
We have $\optn(I'') \leq \optn(I) \leq \optn(I') \leq (1+3\eps)
\cdot \optn(I'')$.
\end{lemma}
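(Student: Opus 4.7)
The plan is to prove the three inequalities in the chain separately. For the two monotonicity bounds $\optn(I'') \leq \optn(I) \leq \optn(I')$, I would reuse the original bin assignment: since $I'$ only enlarges the sizes of large items of $I$ while keeping indices unchanged, any nice solution for $I'$ should yield a nice solution for $I$ (bin loads only shrink, a surviving exceeding item retains its size of at least $\eps^2$ in $I$, and the certificate vector continues to work because indices are untouched). Symmetrically, a nice solution of $I$ should become a nice solution of $I''$ after deleting every item of $G(i,1/\eps^3)$ and shrinking the remaining large items of $G(i,k)$ to their $I''$-sizes. The subtle case will be when the deleted item was the exceeding item of its bin: feasibility of the original packing guarantees that the pre-exceeding-item load was already strictly below $1$, so after deletion and shrinkage the bin has load strictly below $1$ and carries no exceeding item, making the certificate and the size-$\eps^2$ conditions vacuous there.

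For the main inequality $\optn(I') \leq (1+3\eps)\,\optn(I'')$, I would start from an optimal nice solution of $I''$ and append $\sum_i \eps^3 n_i$ singleton bins, each holding one item from some $G(i,1)$ with its $I'$-size (which is at most $1$). Each new bin trivially satisfies the certificate (choose $i(exceed)=i$ if the item is a $1$-item; otherwise the bin has load below $1$ and no condition applies) and the size-$\geq\eps^2$ condition on exceeding items; the existing bins remain valid for $I'$ because the sizes of items in $G(i,k)$ for $k\geq 2$ are identical in $I''$ and in $I'$, so their packing is unchanged.

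It remains to bound the number of new bins. Every large item of $I''$ has size at least $\eps^2$, since its rounded value is at least $s_{\ell^i(n_i)} \geq \eps^2$, and there are $(1-\eps^3)\sum_i n_i$ such items, giving a total item size in $I''$ of at least $\eps^2(1-\eps^3)\sum_i n_i$. Since every bin in any feasible packing holds total load at most $2$ (pre-exceeding-item load below $1$ and last item of size at most $1$), we get $\optn(I'') \geq \tfrac{1}{2}\eps^2(1-\eps^3)\sum_i n_i$, so
\[
\sum_i \eps^3 n_i \;\leq\; \frac{2\eps}{1-\eps^3}\,\optn(I'') \;\leq\; 3\eps\,\optn(I''),
\]
where the last step uses $\eps \leq 1/3$. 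The hardest part will be carefully maintaining niceness through the rounding and deletion steps in the monotonicity bounds, but the observation that the pre-exceeding-item load is already strictly below $1$ resolves this cleanly.
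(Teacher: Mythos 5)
Your proposal is correct and follows essentially the same route as the paper: monotonicity of $\optn$ under size decrease and deletion for the first two inequalities, and for the third, packing the $\sum_i \eps^3 n_i$ items of $\bigcup_i G(i,1)$ into dedicated bins and bounding their number by $3\eps\,\optn(I'')$ via the total-size-at-most-$2$-per-bin argument. Your extra care about the case where a deleted item was the exceeding item of its bin is a correct (and welcome) elaboration of a point the paper passes over quickly.
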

\begin{proof}
The first two inequalities, i.e.,  $\optn(I'') \leq \optn(I) \leq
\optn(I')$ follow as when we decrease the size of some items and
perhaps delete some of those items a feasible nice solution with
respect to the certificate $(e_0,e_1,e_2,\ldots ,e_{1/\eps})$  for
the instance before the transformations remains feasible nice
solution with respect to the same certificate (some bins stop
having exceeding items but this does not hurt the property of
being nice). Thus the solution $\optn(I')$ is a feasible nice
solution for $I$ and so $ \optn(I) \leq \optn(I')$, and the
solution $\optn(I)$ is a feasible nice solution for $I''$ and so
$\optn(I'') \leq \optn(I)$.

It remains to prove the last inequality.  Given the solution
$\optn(I'')$ we create a solution for $I'$ by packing each item
that does not exist in $I''$ in its dedicated bin.  Note that the
resulting solution is obviously a feasible nice solution as
packing an item into a dedicated bin keeps it feasible and
maintain the property of being nice, and in total we added $\eps^3
\sum_{i=0}^{1/\eps-1} n_i$ bins.  Therefore,
\begin{equation}\optn(I') \leq \optn(I'')+ \eps^3
\sum_{i=0}^{1/\eps-1} n_i \label{eqq1} \ . \end{equation}

However, the instance $I''$ contains at least
$(1-\eps^3)\sum_{i=0}^{1/\eps-1} n_i$ large items, each of which
of size at least $\eps^2$.  Therefore, the optimal cost of a nice
solution (or any solution) is at least half the total size of
these items, since no bin can contain items of total size above
$2$.

Thus, $\optn(I'') \geq \eps^2/2 \cdot
(1-\eps^3)\sum_{i=0}^{1/\eps-1} n_i \geq \frac{13}{27} \cdot
\eps^2 \cdot \sum_{i=0}^{1/\eps-1} n_i$, as by $\eps \leq 1/3$, it
holds that $1-\eps^3 \leq \frac{26}{27}$. From this we obtain that
the following holds:
\begin{equation}\eps^2 \cdot \sum_{i=0}^{1/\eps-1} n_i \leq
\frac{27}{13} \cdot \optn(I'') \ .\label{eqqq}\end{equation}

Therefore, $\optn(I') \leq \optn(I'')+ \eps^3
\sum_{i=0}^{1/\eps-1} n_i \leq (1+3\eps) \optn(I'')$, where the
first inequality follows by the upper bound on the cost of an
optimal nice for $I'$ we derived from $\optn(I'')$, i.e.
(\ref{eqq1}), while the second inequality holds by our last bound
on the total size of large items, i.e. (\ref{eqqq}).
\end{proof}

The last lemma shows that it is sufficient to approximate $I'$, i.e., it is sufficient to approximate $\optn(I')$.  This is the goal of the last two steps of the scheme.

\subsection{The configuration IP}
Throughout this  section, we deal with input $I'$. A configuration
is a vector that encodes the packing of one bin (for $I'$). The
intuition is that since we restrict our packings to be nice
solutions, the packing of one bin (called its configuration) is
characterized by the number of items of each group of each
interval (including the exceeding item if it exists), and the
total size of small items of each interval. Observe that this
information allows us to verify that there is at most one
exceeding item by checking that if we delete an item from the last
interval (with respect to the index) for which there is such
non-zero component (of the configuration) then the total size is
below $1$. Second, it allows us to verify for a configuration that
if the total size of all items is at least $1$, for the last
interval where the number of items is positive, this number is
$1$, and the unique item of the last interval is large.  Thus,
these components allow us to check that the conditions of nice
solutions for the given certificate are satisfied by this
configuration.  By limiting the number of configurations we will
obtain our configuration integer program (IP) that can be solved
in polynomial time for fixed values of $\eps$.

For group $G(i,k)$ denote by $s(i,k)$ the common size of the items in this group.

Formally, a configuration of a bin is a vector $C$ consisting of
the following components.  For every interval $(e_i,e_{i+1}]$ and
every group $G(i,k)$ of this interval, we have a component
$C_{i,k}$ denoting the number of (exceeding or non-exceeding)
items of group $G(i,k)$ in the bin. We note that in the case of an
exceeding item, $C_{i,k}=1$ based on the earlier concepts.
Furthermore, for every interval $(e_i,e_{i+1}]$, we have a
component $C_i$ denoting the total size of small items of the
interval rounded down to the next
integer multiple of $\eps^3$. %%%%%%%$\frac{1}{\eps^3}$.
We will show later that this modification of small items and
rounding (down) of the total size of small items works in a sense
that it is enough to approximate $\optn$ well (in a suitable way)
by the solution an IP.

Such a vector $C$ is a feasible configuration if one of the following conditions hold:
\begin{itemize} \item the total size of the items (including small items) is strictly smaller than $1$, i.e., if $$\sum_{i=0}^{1/\eps -1} \left( C_i+ \sum_{k=1}^{1/\eps^3} C_{i,k}\cdot s(i,k) \right)< 1 ,$$
\item or there is a unique item of the maximum interval (possibly
packed into the bin with some items of smaller intervals, and
after removing this item, the total size of remaining item is
strictly smaller than $1$, that is, if there is $i_{max}$ such
that
$$\sum_{i=i_{max}}^{1/\eps-1} \sum_{k=1}^{1/\eps^3} C_{i,k} =1, $$
where $i_{max}$ can be seen as the index of the interval of the
exceeding item (but it is possible that there are other smaller
indices satisfying this property, but they will not satisfy the
second property). By the definition of nice solutions, there may
or may not be other items of smaller intervals packed into the
bin. It is also required that $C_i=0$ for all $i\geq i_{max}$, and
$$\sum_{i=0}^{i_{max} -1} \left( C_i+ \sum_{k=1}^{1/\eps^3}
C_{i,k}\cdot s(i,k) \right) < 1 . $$
 \end{itemize}

Denote by $\C$ the set of feasible configurations.
\begin{lemma}
The number of feasible configurations is at most $O((1/\eps)^{O(1/\eps^4)})$.
\end{lemma}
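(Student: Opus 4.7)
The plan is to bound the number of values each component of $C$ can take, and then multiply. A configuration consists of $\frac 1{\eps} \cdot \frac 1{\eps^3} = \frac 1{\eps^4}$ integer components of the form $C_{i,k}$ (one per pair of an interval and a group inside it), plus $\frac 1{\eps}$ components of the form $C_i$ (one per interval) which are nonnegative multiples of $\eps^3$.

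The key uniform bound I would use is that every feasible configuration has total item size strictly below $2$. This is immediate from the definition: in the first case the total size is below $1$; in the second case removing the unique item of the interval $i_{max}$ leaves total size below $1$, and that removed item has size at most $1$ (we assumed all $1$-items have size exactly $1$ and all other items have size in $(0,1)$). Hence the total size is strictly below $2$.

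From this I would derive the two component-wise bounds. First, each item of a group $G(i,k)$ has (rounded) size at least $\eps^2$, so $C_{i,k}\le \lfloor 2/\eps^2 \rfloor$, giving $O(1/\eps^2)$ possible values. Second, $C_i$ is a nonnegative integer multiple of $\eps^3$ that is at most $2$ (since the small items alone have total size below $2$), giving $O(1/\eps^3)$ possible values. Multiplying over all components yields at most
\[
\left( O(1/\eps^2) \right)^{1/\eps^4} \cdot \left( O(1/\eps^3) \right)^{1/\eps}
\;=\; (1/\eps)^{O(1/\eps^4)} \cdot (1/\eps)^{O(1/\eps)}
\;=\; (1/\eps)^{O(1/\eps^4)},
\]
which is the claimed bound.

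There is no real obstacle here; the only subtlety is to justify the upper bound of (strictly below) $2$ on the total size of items in a feasible configuration, which is what allows us to use a uniform bound on each $C_{i,k}$ and $C_i$ rather than reasoning separately about the two cases in the definition of feasibility. Once this is observed, the proof is a direct counting argument.
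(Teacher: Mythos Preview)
Your proposal is correct and follows essentially the same approach as the paper: count the $O(1/\eps^4)$ components and bound the number of values each can take. The only minor difference is that you use the uniform bound ``total size $<2$'' to get $C_{i,k}\le 2/\eps^2$ and $C_i\le 2$, whereas the paper implicitly uses that the non-exceeding part has total size $<1$ to get the slightly sharper ranges $[0,1/\eps^2]$ and $[0,1/\eps^3]$; this has no effect on the asymptotic conclusion.
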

\begin{proof}
A configuration is a vector with $O(1/\eps^4)$ components, due to
the following. There are $O(\frac 1{\eps})$ intervals, and for
each one there are $O(\frac 1{\eps^3})$ different sizes in $I'$,
each having a separate component, plus one component for small
items. The components for small items are integers in
$[0,1/\eps^3]$, and other components are integers in
$[0,1/\eps^2]$.
\end{proof}

Our IP uses the integral decision variables $x_C$ for all $C\in
\C$. For a fixed solution, every variable $x_C$ represents the
number of bins for each configuration $C\in \C$. There will be no
additional variables, so the dimension of the IP will be a fixed
constant and we will be able to solve it in polynomial time.  For
interval $(e_i,e_{i+1}]$, denote by $\sigma_i$ the total size of
small items of the interval.

The objective is to minimize the number of bins and by grouping
the bins according to configurations this is equivalent to $\min
\sum_{C\in \C} x_C$.   We have two types of constraints, the first
family is that we need to pack all large items, so for every
interval  $(e_i,e_{i+1}]$ and every group $G(i,k)$ we have the
constraint that all items of $G(i,k)$ are indeed packed, so
$\sum_{C\in \C} C_{i,k}\cdot x_C = \eps^3 n_i$ since the number of
items in the group is $\eps^3 n_i$.  The second family of
constraints is that the total size of small items of each interval
which we pack is approximately the total size of small items of
the interval.  Here, the constraint that we introduce will create
some slack in the right hand side, but we will be able to bound
its impact.  Thus, for every interval  $(e_i,e_{i+1}]$ we have the
constraint $\sum_{C\in \C} C_{i}\cdot x_C \geq \sigma_i$. The
description of the IP is completed by the non-negativity
constraints on the variables.  Thus, we solve the following
configuration IP denoted as (ConfIP):

\begin{eqnarray*}
\min & \sum_{C\in \C} x_C & (ConfIP) \\
s.t. & \sum_{C\in \C} C_{i,k}\cdot x_C = \eps^3 n_i & \forall i,k, \\
& \sum_{C\in \C} C_{i}\cdot x_C \geq \sigma_i & \forall i\\
& x_C \geq 0 & \forall C\in \C .
\end{eqnarray*}

Before presenting our post-processing step that receives an
optimal solution for (ConfIP) and constructs a feasible packing of
the items into bins, we first find an upper bound on the cost of
an optimal solution for (ConfIP) using the cost of $\optn(I')$.
\begin{lemma}
There is a feasible solution for (ConfIP) whose cost as a solution for this program is at most $(1+\eps)\optn(I')$.
\end{lemma}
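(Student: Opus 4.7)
The plan is to construct an explicit feasible IP solution by reading configurations off an optimal nice solution for $I'$ and then patching the small-items constraints with a handful of auxiliary bins. The main obstacle is verifying that the rounded configurations remain feasible and that the patching costs only $O(\eps\cdot\optn(I'))$ extra bins; both reduce to careful bookkeeping once the certificate structure is exploited.

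For each bin $B$ in an optimal nice solution attaining $\optn(I')$, I would define a configuration $C^B$ by setting $C^B_{i,k}$ equal to the number of items of $G(i,k)$ packed in $B$ and $C^B_i=\eps^3\lfloor \sigma^B_i/\eps^3\rfloor$, where $\sigma^B_i$ is the actual total size of small items of interval $(e_i,e_{i+1}]$ packed in $B$. Feasibility of $C^B$ is inherited from the bin: the rounding only decreases the virtual small-items total, and the certificate property of a nice solution guarantees that if $B$ has an exceeding item whose index lies in interval $(e_{i^*},e_{i^*+1}]$ then every other item of $B$ lies in an interval with strictly smaller index, which is exactly what the second case of the configuration definition requires with $i_{max}=i^*$. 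Setting $x_C$ equal to the number of bins $B$ with $C^B=C$ then uses exactly $\optn(I')$ bins and satisfies the equality constraints because each item of $G(i,k)$ sits in exactly one bin.

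The only delicate point is the small-items inequality. Rounding down costs at most $\eps^3$ per bin per touched interval, so the per-interval shortfall $\sigma_i-\sum_C C_i x_C$ is bounded by $\eps^3 m_i$, where $m_i\leq\optn(I')$ is the number of bins holding some small item from interval $i$. To repair it I would introduce a single auxiliary configuration $D$ with $D_i=\eps-\eps^3$ for every $i$ and zero in every other component (a valid multiple of $\eps^3$ since $1/\eps^2$ is an integer); the configuration is feasible because its total $1-\eps^2$ is strictly below $1$, while each copy contributes $\eps-\eps^3\geq\eps/2$ to every small-items constraint simultaneously (using $\eps\leq 1/3$). Adding $\lceil \eps^3\optn(I')/(\eps-\eps^3)\rceil\leq 2\eps^2\optn(I')+1$ copies of $D$ therefore saturates every small-items constraint at once, making the total IP cost at most $(1+2\eps^2)\optn(I')+1\leq(1+\eps)\optn(I')$ once $\optn(I')$ exceeds a constant depending only on $1/\eps$; instances smaller than that threshold can be solved optimally by direct enumeration and do not affect the overall scheme.
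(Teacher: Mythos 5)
Your proposal is correct and follows essentially the same route as the paper: read off a configuration from each bin of $\optn(I')$ with small-item totals rounded down to multiples of $\eps^3$, observe that the equality constraints hold exactly, and repair the per-interval shortfall of at most $\eps^3\optn(I')$ by adding $O(\eps^2\optn(I'))$ filler configurations consisting only of small-item mass. The only (harmless) differences are that the paper uses one filler configuration type per interval with component $\frac{1}{\eps^3}-1$ rather than your single configuration covering all intervals at once, and that the paper avoids your additive $+1$ by not rounding the number of filler copies to an integer.
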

\begin{proof}
Consider the solution $\optn(I')$ which is an optimal nice
solution for the rounded instance $I'$. We show that this nice
optimal solution of $I'$ induces a feasible solution to
(ConfIP). We first define a configuration for
each bin $B$ in $\optn(I')$.  The configuration $C(B)$
corresponding to $B$ is defined as follows.  $C(B)_{i,k}$ is the
number of items of $G(i,k)$ packed into $B$, and to compute
$C(B)_i$ we first compute the total size of small items of
interval $(e_i,e_{i+1}]$ that are packed into $B$ and then round
down this value to an integer multiple of $\eps^3$.
%%%%%$\frac{1}{\eps^3}$.
Our solution for the IP will be based on the configurations
corresponding to the bins with some additional configurations. For
every interval  $(e_i,e_{i+1}]$ we add to the collection of
configurations (one for each bin of $\optn(I')$) $2\eps^3 \cdot
\optn(I')$ configurations that are copies of the configuration
with all components being zero except for the unique component
$C_i$ that equals $\frac{1}{\eps^3}-1$. In
total we add at most $2\eps^2  \cdot \optn(I') \leq \eps
\optn(I')$ configurations. Next we define the vector $x^{\optn}$
by setting for every $C\in \C$, the value $x^{\optn}_C$ to be the
number of times $C$ appears in the collection of configurations we
defined (the collection with the configurations corresponding to
the bins and the additional configurations).

Clearly, the resulting vector is non-negative, and its cost as a
solution for the IP is at most $(1+\eps)\cdot \optn(I')$. Since
every item of each group of large items is indeed packed in
$\optn(I')$ we conclude that  $ \sum_{C\in \C} C_{i,k}\cdot
x^{\optn}_C = \eps^3 n_i$.  Furthermore, for every interval
$(e_i,e_{i+1}]$, every bin $B$ may contain a total size of small
items of the interval that is larger than $C(B)_i$ by at most
$\eps^3$. Thus, by adding configurations of total size of the
small items of this interval of at least $(\frac{1}{\eps^3}-1)
\cdot 2 \eps^3 \optn(I')$, we guarantee that the constraints
$\sum_{C\in \C} C_{i}\cdot x^{\optn}_C \geq \sigma_i$ are also
satisfied. \end{proof}

\subsection{The post-processing step}
Let $x^*$ denote an optimal solution for (ConfIP).  It remains to
show that we are able to construct a feasible packing of the items
with cost at most $(1+\eps)\cdot \sum_{C\in \C} x^*_C +\frac{1}{\eps}$.

We first open $\sum_{C\in \C} x^*_C$ bins where we associate
$x^*_C$ bins with configuration $C$ for every $C\in \C$.
Furthermore, we have additional set of bins where for every $i$ we
open additional $\lceil \eps^2 \cdot \sum_{C\in \C} x^*_C \rceil$ bins
associated with the interval, each of which with up to
$\frac{1}{\eps^2}$ small items of the interval $(e_i,e_{i+1}]$.
In this way we open
at most $\eps\cdot \sum_{C\in \C} x^*_C +\frac{1}{\eps}$ additional bins. Let us consider the packing of items into
these bins.

For every $C\in \C$ the packing of large items into bins associated with $C$ is carried out such that each such bin is allocated $C_{i,k}$ items of group $G(i,k)$ of interval $(e_i,e_{i+1}]$.  Observe that since $x^*$ satisfies the constraint $\sum_{C\in \C} C_{i,k}\cdot x^*_C = \eps^3 n_i$, this allocation of large items into bins associated with configurations allocates all large items.

Next, consider the small items, and for every interval
$(e_i,e_{i+1}]$ we allocate small items of this interval to the
bins associated with the interval as well as to every bin
associated with configurations $C \in \C$ in the following Next
Fit type approach where we pack the small items of the interval,
one by one in increasing order of their indices. We iterate over
the bins associated with $C$, and pack the small items of the
interval $(e_i,e_{i+1}]$ one by one to the current bin associated
with $C$ as long as adding the next item does not exceed the upper
bound of $C_i$ on the total size of small items of the interval
that are allocated to this bin. Recall that $C_i$ was the total
size of small items for interval $i$, round down to an integer
multiple of $\eps^3$. When we are about to pack an item in a way
that exceeds the total size upper bound, we allocate this small
item to one of the bins associated with the interval, and move to
the next bin associated with $C$ (if there is one) or to the next
configuration. We note that if $C_i=0$ then we allocated no small
items of the interval $(e_i,e_{i+1}]$ to bins associated with $C$.
This is done for every interval and we guarantee that the total
size of small items that are packed into a bin does not exceed the
bound defined by the configuration so the packing of such a bin is
feasible. Furthermore, the packing of the bins associated with
intervals is feasible as every bin among the bins associated with
configurations causes at most one small item (with size below
$\eps^2$) of the interval to be packed into a bin associated with
the interval.  Since each bin associated with the interval has
room for $\frac{1}{\eps^2}$ such items, we have enough room for
all the small items. We conclude the following.
\begin{corollary}
There exists a linear time ($O(n)$ time) algorithm that given a solution $x^*$ for (ConfIP), returns a feasible packing of the items into at most $(1+\eps)\cdot \sum_{C\in \C} x^*_C+\frac{1}{\eps}$ bins.
\end{corollary}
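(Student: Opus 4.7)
The corollary packages the construction of the post-processing step, so my plan is to verify three claims about that construction: (i) the number of bins opened is at most $(1+\eps)\sum_{C\in\C} x^*_C + 1/\eps$; (ii) the resulting packing is feasible for \pr; and (iii) the procedure runs in $O(n)$ time.

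For (i), I would simply add the bin counts: the $\sum_{C\in\C} x^*_C$ configuration bins, plus $\lceil \eps^2 \sum_{C\in\C} x^*_C \rceil$ spare bins for each of the $1/\eps$ intervals. Using $\lceil \eps^2 T \rceil \leq \eps^2 T + 1$ with $T = \sum_C x^*_C$, the spare-bin contribution is at most $\eps T + 1/\eps$, yielding the stated bound. Claim (iii) is equally short once $\eps$ is treated as a constant: the large-item allocation loops once over all configuration bins with a constant amount of work per bin, while the Next-Fit small-item distribution visits each small item exactly once via the linked-list representation, giving $O(n)$ total work.

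The main obstacle is (ii), since \pr\ is sequence-sensitive: whenever an item is presented, the bin it enters must have current load strictly less than $1$. I would split the analysis by bin type. For a configuration bin associated with a feasible configuration $C\in\C$, the two cases in the IP's definition of feasibility apply. If the total load of $C$ is strictly less than $1$, then every prefix load in index order is also below $1$ and the \pr\ rule holds trivially. Otherwise there is an index $i_{\max}$ with $\sum_{i \geq i_{\max}}\sum_k C_{i,k}=1$ and $C_i=0$ for $i \geq i_{\max}$, so the bin contains a unique item from interval $(e_{i_{\max}},e_{i_{\max}+1}]$, while the remaining items (from strictly smaller intervals) have total size strictly below $1$. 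Since items of interval $(e_{i_{\max}},e_{i_{\max}+1}]$ have strictly larger indices than items of smaller intervals, this unique item is the last to arrive in the bin and plays the role of the exceeding item; all earlier arrivals see a strictly smaller total and are legal. For a spare bin of interval $i$, at most $1/\eps^2$ small items of size below $\eps^2$ are placed in it, so its load never reaches $1$.

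A secondary point under (ii) is that the Next-Fit procedure never leaves a small item unplaced. Each configuration bin spills at most one small item of interval $i$ into the corresponding spare pool, because the procedure advances to the next configuration bin immediately after a spill; hence at most $\sum_C x^*_C$ small items of interval $i$ are spilled, while the $\lceil \eps^2 \sum_C x^*_C \rceil$ spare bins of interval $i$ together hold at least $\sum_C x^*_C$ such items (at $1/\eps^2$ per bin). The aggregate small-item capacity inside the configuration bins, $\sum_C C_i x^*_C$, is at least $\sigma_i$ by the IP constraint, so the total small-item mass fits. Combining the two counts completes the feasibility argument and therefore the proof.
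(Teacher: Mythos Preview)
Your proposal is correct and mirrors the paper's own argument, which is simply the construction described in the paragraphs immediately preceding the corollary; the paper gives no separate proof beyond that description. Your treatment of the sequence-sensitive feasibility (the exceeding item being the last to arrive because it lies in a strictly later interval) is in fact more explicit than what the paper writes.

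One small point worth tightening: your sentence ``the aggregate small-item capacity \dots\ is at least $\sigma_i$, so the total small-item mass fits'' does not by itself rule out running out of configuration bins while small items remain, since Next Fit may leave slack below $C_i$ in each bin. The clean way to close this is to note that whenever a configuration bin triggers a spill, the mass it consumed (packed plus the spilled item) strictly exceeds $C_i$; summing over all configuration bins that were processed gives consumed mass $> \sum_C C_i x^*_C \geq \sigma_i$, so the small items of interval $i$ are exhausted no later than the last configuration bin. This is the implicit reasoning behind both your argument and the paper's, and with it your proof is complete.
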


Thus, we established our main result in the offline setting as
follows.
\begin{theorem}
Problem \pr\ admits an asymptotic approximation scheme.
\end{theorem}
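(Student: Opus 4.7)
The plan is to assemble the four components of the section---the guessing step, the linear-grouping rounding, the configuration IP, and the post-processing---into a single algorithm and then to chain the approximation guarantees already proved. The algorithm, parameterized by $\eps>0$ with $1/\eps\geq 3$ an integer, enumerates all $O(n^{1/\eps})$ candidate certificate vectors $(e_0,e_1,\ldots,e_{1/\eps})$; for each candidate it introduces the dummy $1$-items guaranteed by Lemma~\ref{le} to make $\eps^3 n_i$ integral, performs the linear-grouping rounding within each interval to obtain $I'$, solves (ConfIP) to optimality by the Lenstra--Kannan algorithm for integer programming in fixed dimension, and converts the optimal $x^*$ into a packing of $I$ by the post-processing step. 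The algorithm returns the cheapest packing produced across all iterations of the guessing loop.

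For the approximation analysis I would focus on the single iteration whose guess matches a certificate of an optimal nice solution for a fixed $\OPT'$. In this iteration, composing the bound $\opt'\leq (1+\eps^2)\opt+1$ with Lemma~\ref{nice_lem} (giving $\optn\leq(1+\eps)\opt'+1/\eps$), with the rounding lemma ($\optn(I')\leq(1+3\eps)\optn(I'')\leq(1+3\eps)\optn(I)$), with the IP lemma (an IP solution of cost at most $(1+\eps)\optn(I')$), and with the post-processing corollary (a packing of cost at most $(1+\eps)\sum_C x^*_C+1/\eps$) yields a packing whose cost is at most $(1+\eps)^c\opt(I)+f(1/\eps)$ for a small constant $c$ and a function $f$ depending only on $1/\eps$. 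Since sizes in $I'$ are only rounded up from the sizes in $I$, every packing feasible for $I'$ is also feasible for $I$, and the dummy items can be removed from their bins without affecting validity or increasing the bin count. Taking the cheapest output over all iterations then preserves this bound, and a standard rescaling $\eps\mapsto\eps/c$ delivers the canonical APTAS form.

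For the running time I would observe that the guessing loop contributes a factor $n^{O(1/\eps)}$. Within each iteration the rounding, the linked-list manipulations used to carry the certificate through insertions and deletions, and the post-processing are all polynomial in $n$. The program (ConfIP) has only $O((1/\eps)^{O(1/\eps^4)})$ variables and $O(1/\eps^4)$ nontrivial constraints---both quantities depending only on $\eps$---so the Lenstra--Kannan algorithm solves it in time polynomial in its encoding length and hence polynomial in $n$. The total complexity is therefore polynomial in $n$ for every fixed $\eps$, matching the APTAS requirement.

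The main obstacle I anticipate is bookkeeping rather than a genuinely hard estimate: one must verify that the certificate-based niceness property is preserved through every modification (dummy insertion, rounding within each interval, deletion of the first group of each interval for the $I''$ analysis, IP solution, and post-processing), that the linked-list implementation of the certificate continues to point to legitimate items at each stage, and that the final packing respects the order in which items are delivered. Since the preceding lemmas each establish a portion of this invariant in isolation, the theorem follows once the invariants are composed in the order indicated above.
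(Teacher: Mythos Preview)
Your proposal is correct and is exactly the intended assembly of the section's components; the paper itself offers no separate argument for the theorem beyond the sentence ``Thus, we established our main result,'' so what you have written is precisely the chaining of the preceding lemmas in the natural order (guess $\to$ dummy insertion of Lemma~\ref{le} $\to$ rounding lemma $\to$ IP lemma $\to$ post-processing corollary), together with the observation that a packing valid for the rounded instance $I'$ remains valid for the original sizes and that removing the dummy $1$-items can only help. The only point worth making explicit in your write-up is the additive $1/\eps^4$ term incurred by Lemma~\ref{le}, which you absorb correctly into $f(1/\eps)$ but do not name.
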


\section{Online OEBP}
We recall that OEBP is simply the unfair version of OOEBP. We will
briefly discuss the unfair variant where the algorithm processes a
sequence but an optimal solution can reorder. As mentioned
earlier, any upper bound for this model is also an upper bound for
\pr, since an offline algorithm has more power while an online
algorithm has the same power.

For the case with $1$-items, it is known that the tight asymptotic
bound is $2$ \cite{LDY01} (and this is in fact also an absolute
bound). The algorithm is simply Next Fit (NF), that moves to the
next bin when the current one has load of at least $1$ and cannot
receive additional items. In fact this ratio of $2$ is tight also
for the ratio between \pr\ and the unfair model. To see this fact
that is implied by examples in previous work \cite{LDY01,Zhang98},
consider an input that starts with $N$ items of size $1$ followed
by $N(M-1)$ items of size $\frac 1M$ (for integer $N,M>1$).  When
we consider an optimal solution that can reorder the items, we
have a solution with $N$ bins each of which with $M-1$ items of
size $1/M$ followed by an item of size $1$. When we consider an
optimal solution that cannot reorder the items it consists of $N$
bins each of which with one $1$-item and another $\lceil
N(M-1)/M\rceil $ bins each of which with at most $M$ items of size
$1/M$.  The ratio between these costs approaches $2$ when $M$
grows to infinity, and thus an algorithm for \pr\ that is analyzed
with respect to an optimal algorithm that can reorder the items
cannot have (asymptotic or absolute) approximation ratio smaller
than $2$ (this applies not only for online algorithms but also e.g. for
exponential time offline algorithms).

For the case without $1$-items, this algorithm still has
asymptotic competitive ratio $2$ for \pr. Let $N>0$ be a large
integer, let $M=2N$, and consider the following input. There are
$4N$ items in total, where these items have sizes of $1-\frac 1M$
(large items) and of $\frac 1M$ (smaller items), and the sizes are
alternating (so there are $M=2N$ items of each size). NF creates
$2N$ bins, each with two items of different sizes. An optimal
solution (which does not even need to reorder the input) has $N$
bins with two large items, and one bin with all small items.

We define a different algorithm as follows. Apply NF separately on
items of sizes in $(0,\frac 12)$ and on items of sizes in $[\frac
12,1)$. This is a variant of Harmonic, and we call it NF2. It was
studied by Zhang \cite{Zhang98} and we provide a short alternative
proof.

\begin{proposition}
NF2 has an asymptotic competitive ratio of at most $\frac 32$ for
inputs without 1-items.
\end{proposition}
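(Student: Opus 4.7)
The plan is to use a weight function argument. Define $w(x)=\tfrac{1}{2}$ for $x\geq\tfrac{1}{2}$ and $w(x)=x$ for $x<\tfrac{1}{2}$, and establish two claims: (i) the total weight $\sum_i w(s_i)$ lower bounds NF2's cost up to an additive constant; and (ii) the items inside every bin of any feasible packing of the input have total weight strictly less than $\tfrac{3}{2}$. Together these give $\text{NF2}(I)\leq \sum_i w(s_i)+O(1)\leq \tfrac{3}{2}\opt(I)+O(1)$, establishing the asymptotic bound.

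For claim (i), let $n_L$ be the number of large items (those of size at least $\tfrac{1}{2}$) and let $S$ be the total size of the small items, so that $\sum_i w(s_i)=\tfrac{n_L}{2}+S$. On the large subsequence, NF packs exactly two items per bin except possibly in its last bin, since any two large items sum to at least $1$ and thus close the current bin; hence the number of bins used for large items is $\lceil n_L/2\rceil\leq \tfrac{n_L}{2}+\tfrac{1}{2}$. On the small subsequence, every bin except possibly the last closes only once its load reaches $1$, so if $b_S$ denotes the number of small bins, then $S\geq b_S-1$. Summing, the total number of bins used by NF2 is at most $\sum_i w(s_i)+\tfrac{3}{2}$.

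For claim (ii), fix a bin $B$ in a feasible packing. Since every item has size below $1$, at most two large items fit in $B$, because any two large items already have total size at least $1$ and close the bin. If $B$ contains two large items its weight is exactly $1$. If $B$ contains no large items, feasibility of $B$ means that removing the item of largest index in $B$ leaves total size strictly below $1$; adding back the last item (of size below $\tfrac{1}{2}$) shows that the total size of $B$, which equals its weight, is strictly below $\tfrac{3}{2}$. The main case is a bin with a single large item $\ell$ together with some small items. If $\ell$ is the item of largest index in $B$, feasibility forces the small items of $B$ to sum to below $1$, giving weight below $\tfrac{1}{2}+1=\tfrac{3}{2}$. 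If instead some small item $s^*$ has the largest index, feasibility implies that the items of $B$ other than $s^*$ sum to below $1$, which forces the small items of $B$ other than $s^*$ to sum to less than $1-s_\ell\leq\tfrac{1}{2}$; combined with $s^*<\tfrac{1}{2}$, the total weight of $B$ is again below $\tfrac{3}{2}$.

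The main obstacle is precisely this last subcase. If one only uses the global bound that any bin has total size below $2$, the resulting weight bound is below $2$ rather than below $\tfrac{3}{2}$, which would be insufficient. Exploiting the OOEBP feasibility condition applied to the items of $B$ excluding the one of largest index is what makes the weighting tight and yields the claimed asymptotic competitive ratio of $\tfrac{3}{2}$.
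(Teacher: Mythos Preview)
Your overall approach---the weight function $w(x)=\min\{x,\tfrac12\}$ together with a per-bin upper bound for $\OPT$ and a per-bin lower bound for NF2---is the same as the paper's, and claim~(i) is fine. However, your case analysis for claim~(ii) has a gap. In the case ``$B$ contains two large items'' you assert that the weight is \emph{exactly} $1$, implicitly assuming the bin contains nothing else. That need not be true: a feasible bin could hold, say, a small item of size $0.1$ followed by two large items of sizes $0.6$ and $0.7$, and its weight would be $1.1$. What feasibility does force when there are two large items is that the last-indexed item is large (otherwise both large items remain after deleting the last item and their sum is already $\geq 1$), and then the remaining items---one large plus possibly some small---have total size below $1$, hence total weight below $1$; adding the last item's weight $\tfrac12$ gives $<\tfrac32$. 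So the conclusion survives, but your stated reason does not.

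The paper avoids this case analysis entirely by the one-line observation that $w(x)\leq x$ and $w(x)\leq\tfrac12$ for every item. Hence in any feasible bin the last-indexed item contributes weight at most $\tfrac12$, and the remaining items have total weight at most their total size, which is below $1$; total at most $\tfrac32$. This handles all cases uniformly, including the two-large-items-plus-small-items case you missed, and is worth adopting.
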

\begin{proof}
We use the standard approach of a weight based analysis for the
proof (see e.g. \cite{LeeLee85}). For every item of size below
$\frac 12$, its weight is equal to its size. For every item of
size $\frac 12$ or more, its weight is $\frac 12$. Thus, the
weight of an item never exceeds its size and never exceeds $\frac
12$.

For a bin of $\OPT$, the weight is at most $\frac 32$. This holds
since the weight of the last item is at most $\frac 12$, and the
other items have total size below $1$.

For every bin of the algorithm, except for possibly two bins, the
total size is at least $1$ and so is the total weight.
\end{proof}

Once again for \pr\ if we compare ourselves to an offline optimal
solution that can reorder the items this ratio of $\frac{3}{2}$ is
tight (for inputs that do not contain $1$-items) as shown by Zhang
\cite{Zhang98}. For a large integer $N>0$, let $M'=2N^2$. Consider
an input with $2N$ items of size $\frac 12$ (large items) followed
by $2(M'-1)N$ items of size $\frac 1{M'}$ (small items). An
optimal solution that can reorder the items consider reordering
where the small items appear before the large items.  It packs
$M'-1$ small items and one large item into every bin, and it has
$2N$ bins. For an algorithm that is constrained to consider the
input sequence and cannot reorder the items, no matter how many of
the large items are packed in pairs and how many are packed alone,
as $M'$ is divisible by $2$, no bin will have a load above $1$, so
it has at least $\frac{2(M'-1)N}{M'}+N=3N-\frac 1N$ bins.

\section{Absolute competitive ratio for online \pr}
In this section we provide a short discussion regarding the
absolute competitive ratio for \pr. It is easily seen that NF has an
absolute competitive ratio of at most $2$ since every bin has a
load of at least $1$ (except for possibly the last bin, which
still has a positive load), while an optimal solution has load
smaller than $2$ for every bin. We show that this is the best
possible ratio.

\begin{proposition}
The absolute competitive ratio of any online algorithm for \pr\ with
$1$-items is at least $2$.
\end{proposition}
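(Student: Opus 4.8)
The plan is to construct an adversarial input showing that no online algorithm can beat the ratio $2$, even in the absolute sense. The natural candidate is an input built from $1$-items combined with tiny items, exploiting the key structural weakness of \pr: once a $1$-item lands in a bin, that bin is permanently dead, so an online algorithm that has "wasted" a $1$-item on an otherwise-empty bin cannot recover. Concretely, I would present a single $1$-item first, and then decide the remainder of the input based on what the algorithm does with it.

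\textbf{Case analysis on the first item.} Let the first item have size $1$. An online algorithm must place it in some bin $B$; since it is a $1$-item, $B$ can receive nothing more, so $B$ is used up with load exactly $1$. Now the adversary stops: the input consists of just this one $1$-item. Then $\sol = 1$ and $\OPT = 1$, ratio $1$ — not enough. So a one-shot input will not work for the \emph{absolute} bound with a single unit; I would instead amplify. The cleaner route: feed the algorithm a long input and force a $2$-to-$1$ deficit on every "block."

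\textbf{The amplified construction.} For a large parameter $k$, I would proceed in $k$ rounds. In round $j$, first present one $1$-item, then present a small item of size $\eps$ (for suitably small $\eps$, say $\eps = 1/k$). The adversary's goal is to force the algorithm to use two bins per round. The point is that the $1$-item kills whatever bin it enters, so the subsequent $\eps$-item cannot share a bin with it; the $\eps$-item must go to a (fresh or partially-filled) bin, but across the whole input the $\eps$-items total only $k\eps = 1$, so the algorithm could in principle pack all $\eps$-items into a single bin. Hence the naive version gives only $\sol \ge k + 1$ against $\OPT$: and $\OPT$ here is also about $k$ (each bin: one $1$-item, nothing else, since $1$-items can't share), so the ratio tends to $1$ again. \textbf{This is the main obstacle:} a $1$-item is "useless" to \emph{both} $\sol$ and $\OPT$ in the same way, so $1$-items alone create no gap. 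The gap in the cited Next-Fit example comes from \emph{small} items that the offline solution packs densely (many per bin) while the online algorithm, having already moved past those bins, scatters them. So the right construction must make the algorithm commit bins \emph{before} it learns how many small items are coming.

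\textbf{Refined construction.} I would use the following: choose large integers $N$ and $M$. Present $N$ items of size $1/M$, call this phase one. An online algorithm with absolute ratio $c < 2$ must pack phase one into roughly $N/M$ bins (else it is already beaten by the all-small-items optimum); more precisely, some bin ends phase one with load in $[1 - 1/M, 1)$, and in fact \emph{most} bins do, or the algorithm uses $\ge N/M + $ too many bins. Now in phase two present $N$ items of size $1$ (the $1$-items). Each $1$-item must go into a bin with current load strictly below $1$ — possible for the phase-one bins since their loads are below $1$ — but it immediately seals that bin, and only one $1$-item fits per phase-one bin (a second would need load $< 1$ again, impossible). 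So the $N$ $1$-items need $\ge N$ bins total; the algorithm can reuse at most $\approx N/M$ phase-one bins, forcing $\ge N - N/M$ fresh bins. Total $\sol \gtrsim (N/M) + (N - N/M) = N$, but we want to compare with $\OPT$. For $\OPT$: pack each $1$-item alone — that is $N$ bins — and the small items fit into the \emph{empty space below} $1$ in those bins? No: a $1$-item seals its bin. So $\OPT$ must also use separate bins for small items: $N + \lceil N/M \rceil$. Ratio $\to 1$ once more. \textbf{So the genuine obstacle is that $1$-items never share with anything, so they contribute equally to both sides.} The correct fix, following \cite{LDY01,Zhang98}, is to reverse the order: small items \emph{after} a structure that the offline optimum fills. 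I would present $N$ $1$-items first, then $N(M-1)$ items of size $1/M$; $\OPT$ reorders—wait, \emph{\pr\ forbids reordering even offline}. So $\OPT$ for \pr: the $N$ $1$-items each take a bin (sealed), then the $N(M-1)$ small items need $\lceil N(M-1)/M \rceil$ bins, total $\approx 2N$. The algorithm does the same, ratio $1$.

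\textbf{Resolution.} The working input must be: \emph{first} small items, \emph{then} force the algorithm to have sealed them inefficiently, \emph{then} reveal that the optimum could have interleaved. Since offline \pr\ cannot reorder either, the leverage must be purely online: the algorithm commits without knowing the future. Take $N$ rounds; in round $j$, present a small item of size $\eps$, then "threaten" — but the adversary must commit to the input. The standard online lower bound of $2$ here (Proposition statement) is surely proved by: present items of size $\eps$ one at a time; an online algorithm either consolidates them (risking that a $1$-item comes and the consolidated bin, having load $\ge$ something, ...) — actually a $1$-item needs load $< 1$, which a bin of $\eps$-items easily satisfies. I would therefore present, for each $j = 1, \dots, N$: one item of size $\frac{1}{2} + \eps$ followed by one $1$-item. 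The $\frac12+\eps$ items: two cannot share ($>1$). A $1$-item after a $\frac12+\eps$ bin: load $\frac12+\eps < 1$, fits, seals it. So online uses $N$ bins for the medium items, each sealed by a $1$-item: $\sol = N$. Offline \pr: item $\frac12+\eps$, then $1$-item — same, $\OPT = N$. Ratio $1$!!

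\textbf{The actual mechanism} must be that the online algorithm does \emph{not} know to put the $1$-item with the medium item. Present: $N$ items of size $\frac12 + \eps$, then adversary \emph{either} stops (so $\OPT = \sol = N$, useless) \emph{or} sends $N$ $1$-items. If the algorithm, anticipating, spread the medium items into $N$ bins, fine; but then maybe adversary sends $N/2$ more medium items expecting pairing... they can't pair ($1 + 2\eps > 1$). I will trust that the intended proof presents a concrete two-phase input forcing the ratio; given the difficulty of reconstructing it cleanly here, \textbf{the hard part is identifying the input}, and I would search among inputs of the form "$a$ items of size $x$, then $b$ items of size $y$, then $c$ $1$-items" with $x + y < 1 \le x + 2y$ or similar, so that offline can pack $\{x, y, y, \dots\}$ or $\{x, y\}$ densely while online, committing to bins during the $x$-phase, is stuck. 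I would then verify $\sol / \OPT \to 2$ by direct counting and conclude.

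$\rule{0.5em}{0.5em}$
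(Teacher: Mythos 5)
Your proposal does not contain a proof: after correctly diagnosing why each of your candidate inputs fails (in every one of them the $1$-items cost the offline solution exactly as much as they cost the algorithm), you end by saying you would ``search among inputs'' of a certain form and ``trust that the intended proof presents a concrete two-phase input.'' No working construction is exhibited and no ratio is verified, so the statement is not proved. Moreover, the search space you propose (non-adaptive inputs of the form ``$a$ items of size $x$, then $b$ items of size $y$, then $c$ $1$-items'') does not contain the right construction, precisely because of the obstacle you yourself identified: since the offline solution also cannot reorder, a fixed two-phase input gives the adversary no leverage.

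The missing idea is to exploit the \emph{absolute} ratio hypothesis at every prefix of the input, turning the lower bound into an adaptive, inductive argument. The paper uses $N$ batches, where batch $i$ consists of items of sizes $\eps$, $2i\eps$, $1-2i\eps$, $1$ (with $\eps=2^{-N}$). The offline solution keeps its cost at most $i$ just before the $1$-item of batch $i$ by a staggered packing: each bin takes the \emph{second} item of batch $j-1$ together with the first, third and fourth items of batch $j$, which have total size $1-\eps$ before the final $1$-item. Inductively, the algorithm enters batch $j$ with $2j-2$ bins all of load at least $1$ (hence dead); since an algorithm with ratio $q<2$ may use at most $2j-1$ bins while $\OPT\leq j$, it is \emph{forced} to put the first three items of batch $j$ into a single new bin --- but their total size is $1+\eps$, so that bin overflows and dies, and the $1$-item then opens a second new bin of load $1$. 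Thus each batch costs the algorithm two dead bins versus roughly one for $\OPT$, giving $2N$ versus $N+1$ and a ratio tending to $2$. This mechanism --- using the ratio constraint on prefixes to force consolidation, and sizing the items so that consolidation overloads the bin --- is exactly what resolves the obstacle you kept running into, and it is absent from your write-up.
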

\begin{proof}
Assume that there is an algorithm with absolute competitive ratio
$1\leq q <2$. In particular, when an optimal solution has $k$
bins, the algorithm cannot have more than $2k-1$ bins.

The input is as follows. Let $N \geq 3$ be an integer and let
$\eps=\frac 1{2^N}$. For $i=1,2,\ldots,N$, there are four items in
the $i$th batch, arriving in the order they are stated here, where
their sizes are: $\eps$, $2\cdot i\cdot \eps$, $1-2\cdot i\cdot
\eps$, $1$.

An optimal solution has the following properties. Its cost just
before the arrival of the $1$-item of the $i$th batch is at most
$i$. This holds for any $i$ due to the following packing. For
$i=1$, the first three items of the first batch can be packed into
one bin. Consider the case $i>1$. In this case, the first bin
contains the first, third, and fourth items of the first batch.
For $1<j<i$, there is a bin containing the second item of batch
$j-1$ and three items from the $j$th batch, which are the first,
third, and fourth items. The total size of the second item of
batch $j-1$ and the first and third items of batch $j$ is $2(\cdot
(j-1)\cdot \eps)+\eps+1- 2\cdot j\cdot \eps=1-\eps$, and therefore
such a bin is valid. For batch $i$, the second item of batch $i-1$
and the first three items of batch $i$ are packed into one bin.
The total size of the second item of batch $i-1$ and the first two
items of batch $i$ is $2\cdot (i-1)\cdot \eps+\eps+2\cdot i\cdot
\eps=(4\cdot i -1)\cdot \eps < 4N\eps <1$, so this bin is valid as
well.

Next, we prove that the items of each batch are packed into two
bins by the algorithm, and these bins cannot receive additional
items later. Thus, for every batch $j$, we assume that $2j-2$ bins
were already created, and these bins have loads at least $1$, so
batch $j$ must be packed into new bins, and we show that two bins
are created. Since an optimal solution has at most $j$ bins when
the first three items are presented, and there are already $2j-2$
bins used by the algorithm, only one bin can be used by the
algorithm for these three items. After these three items are
packed, their bin has load above $1$ (the load is $1+\eps$), and
the fourth item is packed into another bin, which will have load
$1$ as a result. Thus, after all four items of batch $j$ arrived,
there are two bins created for this batch, both with loads of at
least $1$.

After all items have arrived, an optimal solution has at most
$N+1$ bins (since the last item which is a $1$-item has to be
packed too), while the algorithm has $2N$ bins. Letting $N$ grow
without bound shows that the absolute competitive ratio cannot be
$q$.\end{proof}

\section{An online algorithm for \pr\ for the case with $\boldsymbol{1}$-items}

We will use a method which was used in the past for classic online
bin packing. In this approach, one partitions items to types, and
tries to combine some types of relatively large items with other
types \cite{LeeLee85,RaBrLL89}. The novelty in our method lies in
the adaptation of this idea to inputs where the order of items
matters. In our analysis, we will split the input at a point where
the behavior of the algorithm changes as a result of a different
input type. More specifically, the packing is different when those
large items stop arriving, and the analysis is different too.
While in work for classic bin packing there are two or more
scenarios \cite{LeeLee85,RaBrLL89,BBDEL_ESA18}, where each of them
may happen for some input, here the two scenarios frequently
happen for one input.

We start with defining the algorithm (or actually a class or kind
of algorithms), and elaborate on the analysis later. The algorithm
uses an integer parameter $M \geq 2$. The algorithm classify items
into classes based on the size of the items.  The item classes are
as follows.

\begin{itemize}
\item Items of size $1$, also called $1$-items are class $0$.
\item For $1 \leq i \leq M-1$, class $i$ consists of items of
sizes in $[\frac 1{i+1},\frac 1i)$. Such items are called {\it
regular items}, or regular items of class $i$ if all of them
belong to this class. \item Class $M$ consists of items of sizes
in $(0,\frac 1 M)$. Such items are called {\it tiny items}.
\end{itemize}
For every class $1 \leq i \leq M$, bins for this class will
contain items of the class and possibly also a $1$-item. There may
be two kinds of bins, called {\it large} and {\it small}. For $1
\leq i \leq M-1$, a large bin is planned to have $i+1$ items of
the class, which is always a feasible bin as the total size of $i$ items
of this class is below $1$. A small bin is planned to have $i$
items of the class, and possibly also a $1$-item that arrives
after the $i$ items of the class have already been packed into
this bin. For tiny items, a large bin will have a total size of at
least $1$, and a small bin will have a total size in $[1-\frac
1M,1)$.

For every class $i\geq 1$, there will be at most one large bin and
at most one small bin that did not receive the required total size
or total number of items of class $i$. These bins will be called
{\it active}, and every class may have an active small bin and an
active large bin. Other bins are called {\it inactive}, and there
may be an arbitrary number of inactive small bins and inactive
large bins. The small inactive bins are partitioned into ready
bins, which are bins that did not receive $1$-items, and used
bins, which are bins that each one of them received a $1$-item as
its last item (this was done after the bin became inactive).

There is a parameter $0 \leq \beta_i \leq 1$, which is the
approximate fraction of bins for class $i \geq 1$ (active and
inactive) that are small (and the fraction of large bins is approximately
$1-\beta_i$).

The algorithm is defined as follows, and its action is based on
item classes.
\begin{itemize}
\item When a $1$-item arrives, act as follows. If there is a ready
bin of some class $i \geq 1$, pack the new item into one such bin
(and the bin becomes used). Otherwise, pack it into an empty bin
of class 0.

\item When a regular item of class $i$ arrives, act as follows. If
there is an active bin for this class, pack it into such a bin.
Otherwise, let $n_i$ be the current number of bins for this class
(large and small, all inactive). Let $n^{\ell}_i$ and $n^s_i$
(where $n_i=n^{\ell}_i+n^s_i$) be the numbers of large and small
bins for class $i$, respectively (all numbers are calculated
excluding the new bin that will be opened). If $n^s_i \leq \beta_i
\cdot n_i$, open a new small active bin for class $i$, and
otherwise (in which case $n^s_i > \beta_i \cdot n_i $ and
therefore $n^{\ell}_i = n_i-n^s_i < n_i - \beta_i \cdot n_i
=(1-\beta_i) n_i $) open a new large active bin for class $i$. The
new item is packed into the new bin.

No matter which bin received the item (new or not), if the bin has
its planned number of items ($i+1$ items if it is large, $i$ items
if it is small), define the bin to be inactive, and if it is
small, additionally define it to be ready.

\item When a tiny item arrives, act as follows.  If there is an
active bin for this class, pack it into such a bin. A large active
bin has a total size below $1$ so it can receive a new item, and a
small active bin has a total size below $1-\frac 1M$ so it can
also receive a new item and remain small.

Otherwise, let $n_M$ the current number of bins for this class
(large and small, all inactive). Let $n^{\ell}_M$ and $n^s_M$
(where $n_M=n^{\ell}_M+n^s_M$) be the numbers of large and small
bins for class $M$, respectively.  If $n^s_M \leq \beta_M \cdot
n_M $, open a new small active bin for class $M$, and otherwise
(in which case $n_M^{\ell}  < (1-\beta_M) \cdot n_M $) open a new
large active bin for class $M$. The new item is packed into the
new bin.

No matter which bin received the item, if the bin has total size
of at least $1$ and it is large, or if it has total size above
$1-\frac 1M$ and it is small, define the bin to be inactive, and
if it is small, additionally define it to be ready.
\end{itemize}

In what follows, we consider only algorithms defined in this way.
For the analysis, we would like to split the input $I$ into two
parts $I_1$ and $I_2$ by removing items packed by the algorithm
into a constant number of bins and partitioning the remaining
items. After this removal of some items and partitioning, the
remaining bins of the algorithm will not contain items of both
sub-inputs simultaneously, and as we will base our weights on this
partition, the partitioning property will also be used in the
analysis.

Let $x$ denote the last $1$-item that is packed into a new bin. If
there is no such item, the first part $I_1$ of the input is
defined to be empty. The removal of bins which defines a removal
of items from $I$ is defined as follows. For all bins that are
active at the time of arrival of $x$, remove their items from $I$
(including items packed after the arrival of $x$). Similarly,
remove all bins that are active at termination. For the remaining
items $I'$ define a partition as follows: $I_1$ consists of all
remaining items arriving before $x$ and including $x$, and $I_2$
consists of all other items that were not removed (those arriving
strictly after $x$).

\paragraph{Properties of the partition of the input.}
Since the order of items in $I'$ is the same as their order in
$I$, and packing of $I$ can be used as a packing for $I'$, we
conclude that $\OPT(I') \leq \OPT(I)$. On the other hand, since
there are at most $2M$ active bins at each time, the cost of the
algorithm is at most $4M$ plus the number of remaining bins.

\begin{claim}
Items of $I_1$ and items of $I_2$ are packed into different bins
by the algorithm. Bins containing these items are not active at
termination.
\end{claim}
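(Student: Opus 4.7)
The plan is to argue the claim by a case analysis on when the bin in question was opened relative to the arrival of $x$. The whole proof hinges on one key observation which I would establish first.

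First I would show that at the moment $x$ arrives, there is no ready bin. This follows from the algorithm's rule for $1$-items: a new $1$-item is packed into a ready bin whenever one exists, and into a fresh class $0$ bin only otherwise. Since by definition $x$ is a $1$-item that is packed into a new bin, there cannot have been any ready bin at that moment. From this I would derive the second, crucial, observation: every bin that is inactive at the arrival of $x$ can no longer receive any item, neither $x$ itself nor any later item. Indeed, such a bin, not being ready, is either a class $0$ bin, a used small bin, or a large inactive bin; in each of these three situations the bin's current load is at least $1$ (a full $1$-item for class $0$; $i$ regular items of class $i$ plus a $1$-item for used small; $i+1$ items of class $i$ or a tiny-item bin already at load $\ge 1$ for large inactive), so the admissibility rule of \pr\ (load strictly below $1$) forbids any further packing.

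Next, take an arbitrary bin $B$ containing at least one item of $I_1\cup I_2$; by construction $B$ was not removed, so $B$ is neither active at the arrival of $x$ nor active at termination, which immediately yields the second sentence of the claim. For the first sentence I would split into three cases according to when $B$ was opened. If $B$ exists already at the arrival of $x$, then (being not active) it is inactive at that moment, and by the observation above every item of $B$ was packed before $x$ arrived, so every item of $B$ that survives in $I'$ belongs to $I_1$. If $B$ is opened exactly when $x$ is packed, then $B$ is the new class $0$ bin created for $x$, so $B$ contains only $x$ and $x\in I_1$. Finally, if $B$ is opened strictly after $x$ arrives, then every item of $B$ arrived strictly after $x$, so every item of $B$ in $I'$ belongs to $I_2$. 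In every case $B$ contains items from at most one of $I_1,I_2$.

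The only real obstacle is the second observation, namely checking that an inactive non-ready bin is completely frozen against further packing; this requires inspecting the load of each of the three inactive subtypes and invoking the strict-inequality admissibility rule. Once that is established, the case analysis on the opening time of $B$ is immediate and the rest of the claim follows essentially from unpacking the definitions of $I_1$, $I_2$, and the removal procedure.
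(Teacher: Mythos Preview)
Your proof is correct and rests on the same underlying idea as the paper's, namely that a bin not removed must have been inactive at the arrival of $x$ and therefore cannot mix items from before and after $x$. The paper argues this by contradiction in two lines (``the bin of $z$ has to be active at the arrival of $x$, but active bins were removed''), leaving the fact that an inactive bin at time $x$ is permanently frozen implicit; you instead unfold this into a forward case analysis on the opening time of $B$ and spell out the load-$\ge 1$ check for each inactive subtype. You also fold in the observation that there are no ready bins when $x$ arrives, which the paper states and proves as its next separate claim; otherwise the two arguments coincide.
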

\begin{proof}
We start with the first part. Assume by contradiction that this is
not the case, and there is an item $y$ of $I_2$ packed with an
item $z$ of $I_1$. As all items of $I_2$ arrive after all items of
$I_1$, $y$ is packed after $z$. Thus, at the time of arrival of
$x$ the bin of $z$ has to be active. However, all active bins of
this time were removed to obtain $I'$, a contradiction.

Assume by contradiction that there is an active bin. It cannot be
of $I_2$ as bins that are active at termination were removed. It
cannot be of $I_1$ since at the time of arrival of $x$ all items
of $I_1$ already arrived, and active bins were removed then too.
\end{proof}

\begin{claim}
At the time of arrival of $x$, there are no ready bins.
\end{claim}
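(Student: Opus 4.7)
The plan is to derive the claim immediately from the algorithm's rule for handling $1$-items together with the defining property of $x$. Recall that $x$ is defined to be the last $1$-item that was packed by the algorithm into a new bin, which (by the algorithm's description) must be a new bin of class $0$. The algorithm's rule on $1$-items states exactly two cases: if there is a ready bin of some class $i \geq 1$ at the time of arrival, the $1$-item is packed into such a ready bin (which then becomes used); otherwise, the $1$-item is packed into a new (empty) bin of class $0$. So a $1$-item opens a fresh class-$0$ bin only when no ready bin exists at its arrival time.

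I would then simply apply this dichotomy to $x$: since $x$ was packed into a new bin, we must be in the second case of the rule, meaning that at the moment $x$ arrives, no ready bin exists. This gives the claim. I do not expect any obstacle here; the argument is purely a matter of reading the definition of $x$ against the first bullet of the algorithm. No counting or packing analysis is needed, and we do not need to examine the subsequent items of $I_2$ or the removal of active bins used to produce $I'$.
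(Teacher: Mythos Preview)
Your proposal is correct and takes essentially the same approach as the paper: the paper's proof is the single line ``This holds since $x$ is packed into a new bin,'' which is exactly the dichotomy you spell out.
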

\begin{proof}
This holds since $x$ is packed into a new bin.
\end{proof}

\begin{claim}
Let $1 \leq i \leq M$. Let $n_i(1)$, $n_i(2)$, $n^{\ell}_i(1)$,
$n^{\ell}_i(2)$, $n^{s}_i(1)$, and $n^{s}_i(2)$ denote the total
number of bins for class $i$ out of bins of $I_1$, the total
number of bins for class $i$ out of bins of $I_2$, the number of
large bins for class $i$ out of bins of $I_1$, the number of large
bins for class $i$ out of bins of $I_2$, the number of small bins
for class $i$ out of bins of $I_1$, and the number of small bins
for class $i$ out of bins of $I_2$, respectively.

Then, it holds that $$(1-\beta_i) \cdot n_i(1) -1  \leq
n^{\ell}_i(1) \leq (1-\beta_i)\cdot n_i(1) + 1, \ \ \ \  \beta_i
\cdot n_i(1) -1 \leq n^{s}_i(1) \leq \beta_i \cdot n_i(1) + 1 \
,$$
$$(1-\beta_i) \cdot n_i(2) - 3 \leq n^{\ell}_i(2) \leq
(1-\beta_i)\cdot n_i(2) + 3, {\mbox{ \ \ and \ \ }} \beta_i \cdot
n_i(2) -3 \leq n^{s}_i(2) \leq \beta_i \cdot n_i(2) + 3 \ .
$$
\end{claim}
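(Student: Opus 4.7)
The plan is to establish a single invariant that drives both halves of the claim: at every moment during the algorithm's execution and for every class $i\in\{1,\dots,M\}$, if $N$ denotes the current number of inactive class $i$ bins and $S$ the number of small ones among them, then $|S-\beta_i N|\le 1$. With this invariant in hand, the first pair of inequalities (for $I_1$) will be obtained by applying it at the moment $x$ is packed, and the second pair (for $I_2$) by applying it at termination after accounting for the boundary bins.

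The invariant itself will be proved by induction on the number $k$ of class $i$ bins that have so far become inactive. The key observation is that the algorithm opens a new class $i$ bin only when no class $i$ bin is active, so class $i$ bins become inactive in the same order in which they were opened, and indexing them $1,2,\dots$ is unambiguous. Let $s(k)$ denote the number of small inactive class $i$ bins just after the $k$-th becomes inactive. When the $k$-th was opened, exactly $k-1$ bins of class $i$ were inactive with $s(k-1)$ of them small, and by the algorithm's rule the new bin is chosen small iff $s(k-1)\le \beta_i(k-1)$. A short two-case check then propagates $|s(k-1)-\beta_i(k-1)|\le 1$ to $|s(k)-\beta_i k|\le 1$, using $\beta_i\in[0,1]$.

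Applying the invariant at the moment $x$ arrives is straightforward: by the preceding claims, the class $i$ bins containing items of $I_1$ are exactly the class $i$ bins that are inactive at that moment---bins active at that moment are precisely the ones removed when forming $I'$, and the disjointness of $I_1$ bins from $I_2$ bins was already established. Substituting $N=n_i(1)$ and $S=n^s_i(1)$ into the invariant directly yields the stated inequalities for $n^s_i(1)$, and using $n^\ell_i(1)=n_i(1)-n^s_i(1)$ gives those for $n^\ell_i(1)$.

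Applying it at termination requires a little bookkeeping. Write the algorithm's total inactive count for class $i$ at termination as $n_i(1)+A+n_i(2)$ and its small count as $n^s_i(1)+a_s+n^s_i(2)$, where $A\in\{0,1,2\}$ counts how many class $i$ bins that were active at the time $x$ arrives subsequently became inactive (at most one small and one large such bin exist, so $A\le 2$) and $a_s\in\{0,\dots,A\}$ counts the small ones among them. The invariant at termination contributes one unit of slack, the already-proved $I_1$ bound contributes another, and a direct case check shows $|a_s-\beta_i A|\le 1$---the tightest case being $A=2,\ a_s=1$, where the difference equals $|1-2\beta_i|\le 1$. Summing the three contributions yields $|n^s_i(2)-\beta_i n_i(2)|\le 3$, which, together with $n^\ell_i(2)=n_i(2)-n^s_i(2)$, gives all four stated inequalities for $I_2$. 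The only nontrivial obstacle is the careful treatment of these boundary bins: they belong to neither $I_1$ nor $I_2$ but are still seen by the algorithm, and it is precisely their contribution that forces the $\pm 3$ slack for $I_2$ rather than the $\pm 1$ that holds for $I_1$.
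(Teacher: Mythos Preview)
Your argument is correct, and it organizes the bookkeeping differently from the paper. The paper does not establish a running invariant; instead, for $I_1$ it looks at the moment the \emph{last} large (respectively small) class-$i$ bin is opened, applies the opening rule at that moment together with $n_i(1)\ge N_1+1$, and reads off the upper bounds directly, obtaining the lower bounds from $n_i^\ell(1)+n_i^s(1)=n_i(1)$. For $I_2$ it repeats this at the moment the last large class-$i$ bin for $I_2$ is opened, now counting \emph{all} class-$i$ bins (including $I_1$ bins and the up to two bins that were active when $x$ arrived) as $N_2$, uses $n_i(1)+n_i(2)\ge N_2-1$, and then subtracts off the already-proved $I_1$ lower bounds.

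Your invariant-based argument is arguably cleaner: the inductive proof of $|S-\beta_i N|\le 1$ is short and uniform, and the application to $I_1$ is immediate. The price is the explicit triangle-inequality decomposition for $I_2$, where the three units of slack come from the invariant at termination, the $I_1$ bound, and the boundary-bin term $|a_s-\beta_iA|$; the paper instead absorbs the boundary bins into a single count $N_2$ and loses two units there plus one more from transferring the $I_1$ bound. Both routes arrive at exactly $\pm 1$ for $I_1$ and $\pm 3$ for $I_2$.

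One small remark on your premise ``the algorithm opens a new class $i$ bin only when no class $i$ bin is active'': this is indeed what the algorithm as written does (an arriving item is packed into \emph{any} active class-$i$ bin before a new one is considered), so in fact at most one class-$i$ bin is ever active and your $A\le 2$ is even looser than necessary; but since you only need $|a_s-\beta_i A|\le 1$, which you verify for $A\le 2$ using the at-most-one-small/one-large observation, the argument goes through as stated.
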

\begin{proof}
We will prove the properties for the cases where $\beta_i \in
(0,1)$, since the cases $\beta_i\in\{0,1\}$ are trivial, since
$\beta_i=0$ means that there are no small bins, while $\beta_i=1$
means that there are no large bins.

Consider the opening time of the last large bin for class $i$ in
$I_1$, and let $N_1$ be the number of bins of class $i$ before it
is opened. Since at this moment, a new bin is being opened for
class, at this time there are no active bins for class $i$, and
none of the existing bins for this class will be removed in the
process of moving from $I$ to $I'$. Therefore before the bin is
opened, the number of large bins for class $i$ is exactly
$n^{\ell}_i(1)-1$, and $n_i(1) \geq N_{1} +1$, since there will be
exactly one additional large bin for class $i$ for $I_1$, and
possibly small bins. By $n^{\ell}_i(1) -1 \leq (1-\beta_i)N_1$
(which is the opening rule), we have $n^{\ell}_i(1) \leq
(1-\beta_i)\cdot N_1+1 \leq (1-\beta_i) \cdot (n_i(1)-1)+1 \leq
(1-\beta_i) \cdot n_i(1)+1$, where the second inequality is
because of the value of $N_1$ stated above. Similarly, we can get
$n^{s}_i(1) \leq \beta_i \cdot n_i(1)+1$. If there are no large
bins, or no small bins for class $i$ and $I_1$, the inequalities
hold trivially. As $n_i(1)=n^{s}_i(1)+n^{\ell}_i(1)$, we have
$n^{\ell}_i(1) \geq n_i(1)-(\beta_i \cdot n_i(1)+1)=(1-\beta_i)
n_i(1)-1$ and similarly, $n^{s}_i(1) \geq n_i(1)-((1-\beta_i)
\cdot n_i(1)+1)=\beta_i \cdot n_i(1)-1$.

Now, consider the last large bin for class $i$ that is opened for
$I_2$ and let $N_2$ be the number of bins of class $i$ before it
is opened (including bins of $I_1$ and removed bins). We still
assume here that $0 < \beta_i < 1$, and we assume that there is at
least one large bin for proving an upper bound on such bins, and
that there is at least one small bin for proving an upper bound on
such bins, since the bounds hold trivially otherwise. There are no
active bins for class $i$ at this time, and out of such bins at
most two existing bins for class $i$ will be removed (those that
were active when $x$ arrived) to obtain $n_i(1)+n_i(2)$ bins of
class $i$ at termination (active bins that might be removed at
termination do not exist yet). On the other hand, one large bin
for class $i$ will be created. Thus, $n_i(1)+n_i(2) \geq N_2-1$.
By the opening rule, the number of large class $i$ bins at this
time is at most $(1-\beta_i)\cdot N_2$, and the final number is at
most $(1-\beta_i)(n_i(1)+n_i(2)+1)+1 \leq
(1-\beta_i)(n_i(1)+n_i(2))+2$. Similarly, for small class $i$
bins, the final number is at most $\beta_i \cdot
(n_i(1)+n_i(2))+2$. Given the lower bounds on the number of bins
for $I_1$, the bin numbers for $I_2$ are at most $(1-\beta_i)
\cdot n_i(2)+3$ and $\beta_i \cdot n_i(2)+3$, respectively. Thus,
for $I_2$, since the sum of numbers is $n_i(2)$, the numbers are
at least $(1-\beta_i) \cdot n_i(2)- 3$ and $\beta_i \cdot n_i(2)-
3$, respectively.
\end{proof}

\begin{claim}
For every small bin of $I_1$ for some class $i$, the bin is used.
\end{claim}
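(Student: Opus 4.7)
The plan is to combine the two preceding claims (items of $I_1$ are packed into bins distinct from those of $I_2$ and non-active at termination, and there are no ready bins at the moment $x$ arrives). A small class-$i$ bin, once it receives its planned $i$ items of class $i$, is either ready or used; I will eliminate both other possible states (active, and ready at the moment $x$ arrives) and conclude the bin must be used.

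Concretely, let $B$ be a small bin of $I_1$ for some class $i \ge 1$. I would first verify that $B$ already exists at the instant $x$ arrives. The algorithm opens a new class-$\ge 1$ bin only upon the arrival of a regular or tiny item, never upon a $1$-item (a $1$-item is packed either into an existing ready class-$\ge 1$ bin or into a brand new class-$0$ bin). Hence the first item of $B$ is a class-$i$ item, not a $1$-item, and in particular it is distinct from $x$. Since all items of $B$ belong to $I_1$, they arrive at or before $x$; combined with the fact that the first item of $B$ is not $x$, the first item of $B$ arrives strictly before $x$, so $B$ is already present at the instant $x$ arrives.

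Next, since $B$ lies in $I_1$ it is not among the bins removed from $I$, and in particular it is not active at the moment $x$ arrives. By the definition of a small bin this means $B$ has already received its planned $i$ items of class $i$ and is inactive at that moment. By the preceding claim, there are no ready bins at the moment $x$ arrives, and an inactive small bin is either ready or used; therefore $B$ must already be used, i.e., some $1$-item was packed into $B$ at a moment strictly before $x$. This is exactly the desired conclusion. The main potential pitfall is the temporal bookkeeping — confirming that $B$ really is in the inactive state at the instant $x$ arrives — but once the observation that a class-$\ge 1$ bin is never opened with a $1$-item is in hand, the argument reduces to a direct application of the two preceding claims.
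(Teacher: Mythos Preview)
Your proof is correct and follows essentially the same approach as the paper. The paper's argument is terser---it simply observes that since $x$ is packed into a new bin there can be no ready bins for $I_1$ at the termination of $I_1$---while you make explicit the temporal verification that any small class-$i$ bin of $I_1$ is already inactive at the instant $x$ arrives (because the items that deactivate it are not $1$-items and hence arrive strictly before $x$), which the paper leaves implicit.
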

\begin{proof}
When the last item $x$ of $I_1$ arrives, it is a $1$-item packed
into a new bin, and it is not removed. Since $x$ cannot be packed
into a ready bin, there are no ready bins for $I_1$, at the
termination of $I_1$.
\end{proof}

\begin{claim}\label{noready}
There are no bins of $I_2$ with a single item that is a $1$-item.
\end{claim}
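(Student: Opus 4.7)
The plan is to argue by contradiction, directly from the definition of $x$ and the algorithm's rule for $1$-items. Suppose toward a contradiction that some bin $B$ made up entirely of items from $I_2$ contains a single item, a $1$-item which I will call $y$. By the rule for packing a $1$-item, $y$ was placed either into a ready bin of some class $i \geq 1$ -- which by definition already contains $i$ other items -- or into a new empty bin of class $0$. Since $y$ is the only item in $B$, the first option is impossible, so $y$ was packed into a new bin.

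Now I would invoke the construction of $I_2$: its items are precisely the non-removed items of $I$ that arrive strictly after $x$. Hence $y$ arrives strictly after $x$, yet is itself a $1$-item packed into a new bin, contradicting the definition of $x$ as the last such item. In the degenerate case where no $x$ exists (so $I_1$ is empty), no $1$-item of $I$ is ever packed into a new bin, and the same contradiction applies to $y$. The only piece of bookkeeping worth double-checking is the reading of ``bins of $I_2$'' as bins whose items all lie in $I_2$, which is exactly the content of the earlier claim that items of $I_1$ and items of $I_2$ are packed by the algorithm into disjoint bins. I expect no real obstacle here: the statement is essentially a direct unwinding of the definition of $x$, mirroring the already proved observation that there are no ready bins at the arrival time of $x$.
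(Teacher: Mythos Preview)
Your proof is correct and follows essentially the same approach as the paper: both argue that any $1$-item in $I_2$ arrives after $x$, hence cannot have been packed into a new bin, and a $1$-item that is alone in its bin must have been packed into a new bin. Your version simply makes explicit the step that a ready bin is never empty (so a single-item bin containing a $1$-item implies the item went into a new bin), which the paper leaves implicit.
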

\begin{proof}
All items of $I_2$ arrived after $x$, which is the last $1$-item
to be packed into a new bin. This holds if $x$ does not exist as
well, because in that case no $1$-item of $I$ is packed into a new
bin.
\end{proof}

\paragraph{The weight functions.}
We define two sets of weights, denoted by $w$ and $v$, where
$v,w:(0,1] \rightarrow \mathbb{R}$, one for $I_1$ and one for
$I_2$. Letting $W$ be the total weight of items of $I_1$ according to $w$  and
letting  $V$ be the total weight of items of $I_2$ according to $v$, we will show
that the cost of the algorithm is at most $W+V+16\cdot M$. We will show this claim by considering the non-active bins of $I'$ and show that (almost) every bin
has total weight at least $1$ according to the suitable weight
function. For an optimal solution, we will find a value $R$ such
that no bin has total weight above $R$ where we define a weight
function $f:I' \rightarrow \mathbb{R}$ for which $f(a)=w(a)$ if $a
\in I_1$ and $f(a)=v(a)$ if $a \in I_2$. Since $V+W  \leq R \cdot
OPT(I')$, the upper bound on the asymptotic competitive ratio will follow.

Function $w$ is defined as follows.

\begin{itemize}
\item The weight of a $1$-item is $1$.

 \item The weight of any item of class $1
\leq i \leq M-1$ is $\frac{1-\beta_i}{i+1-\beta_i}$.

\item The weight of any item of class $M$ of size $\rho$ is
$\frac{1-\beta_M}{1-\beta_M/M}\cdot \rho$.
\end{itemize}

Function $v$ is defined as follows.
\begin{itemize}
\item The weight of a $1$-item is $0$.

\item The weight of any item of class $1 \leq i \leq M-1$ is
$\frac{1}{i+1-\beta_i}$.

\item The weight of any item of class $M$ of size $\rho$ is
$\frac{1}{1-\beta_M/M} \cdot \rho$.
\end{itemize}

We have $ALG(I) \leq n_1+n_2+4M$, where $n_j$ is the number of
bins for $I_j$ for $j=1,2$, and we show $n_1 \leq W+3M$ and $n_2
\leq V+9M$.

\begin{claim}
The total weight of bins of the algorithm for $I_1$ with respect
to $w$ is at least $n_1-3M$.
\end{claim}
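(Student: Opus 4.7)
The plan is to decompose the bins of $I_1$ by item class and bound each class's contribution separately. By the earlier claim that every small bin of $I_1$ is used, a small bin of class $i\ge 1$ contains $i$ regular items of class $i$ (or, for $i=M$, tiny items of total size at least $1-1/M$) together with a single $1$-item, whereas a large bin of class $i$ contains $i+1$ regular items (or tiny items of total size at least $1$ when $i=M$) and no $1$-item; a class-$0$ bin contributes weight exactly $1$. For $1\le i\le M$, I will let $a_i$ denote the weight of a large bin of class $i$, and let $b_i$ denote a lower bound on the weight of a used small bin of class $i$ (obtained, for tiny items, from the size guarantees just listed).

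The key algebraic identity, which is exactly the design principle behind the weight function $w$, is
\[
(1-\beta_i)\,a_i + \beta_i\,b_i \;=\; 1 \qquad\text{for every }1\le i\le M.
\]
Indeed, for $1\le i\le M-1$ one has $a_i=(i+1)(1-\beta_i)/(i+1-\beta_i)$ and $b_i=i(1-\beta_i)/(i+1-\beta_i)+1$, and for $i=M$ one has $a_M=M(1-\beta_M)/(M-\beta_M)$ and $b_M=(M-1)(1-\beta_M)/(M-\beta_M)+1$; verifying the identity reduces in each case to a short cancellation, and an easy inspection shows that $a_i\le 1$ and $b_i\le 2$, so $a_i+b_i\le 3$.

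Plugging in the bin-count bounds $n^{\ell}_i(1)\ge(1-\beta_i)\,n_i(1)-1$ and $n^{s}_i(1)\ge\beta_i\,n_i(1)-1$ from the preceding claim, the total $w$-weight collected from class-$i$ bins of $I_1$ satisfies
\[
a_i\,n^{\ell}_i(1) + b_i\,n^{s}_i(1) \;\ge\; \bigl((1-\beta_i)a_i+\beta_i b_i\bigr)\,n_i(1) \;-\; (a_i+b_i) \;\ge\; n_i(1)-3.
\]
Summing over $i=1,\ldots,M$ and adding the contribution from the class-$0$ bins (which equals their number) will then yield a total $w$-weight of at least $n_1-3M$, as required. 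The whole argument hinges on the averaging identity above; once it is verified, the rest is routine bookkeeping, and there is no substantive obstacle beyond keeping track of the fact that small bins of $I_1$ carry the extra weight $1$ coming from their $1$-item.
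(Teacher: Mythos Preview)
Your proof is correct and follows essentially the same approach as the paper: both decompose by class, use that every small bin of $I_1$ carries a $1$-item, plug in the bin-count bounds $n^{\ell}_i(1)\ge(1-\beta_i)n_i(1)-1$ and $n^{s}_i(1)\ge\beta_i n_i(1)-1$, and verify the resulting constant loss is at most $3$ per class. Your formulation via the averaging identity $(1-\beta_i)a_i+\beta_i b_i=1$ together with $a_i+b_i\le 3$ is a cleaner packaging of exactly the same computation the paper carries out explicitly.
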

\begin{proof}
First, consider bins with $1$-items packed into new bins. Every
such bin has weight $1$. Every remaining bin belong to a class
$i$, where $0< i \leq M$.

Consider a class $1 \leq i\leq M-1$. Recall that by Claim
\ref{noready} there are no ready bins, so every small bin has a
$1$-item. The total weight including $1$-items and the complete
numbers of items of class $i$ (as there are no active bins) is
$\frac{1-\beta_i}{i+1-\beta_i}((i+1)n^{\ell}_i(1)+i\cdot
n^{s}_i(1))+ n^{s}_i(1) =
\frac{(i+1)(1-\beta_i)}{i+1-\beta_i}n^{\ell}_i(1)+
\frac{i(1-\beta_i)+i+1-\beta_i}{i+1-\beta_i} n^{s}_i(1) \geq
\frac{(i+1)(1-\beta_i)}{i+1-\beta_i} ((1-\beta_i) \cdot n_i(1) -1)
+ \frac{i(1-\beta_i)+i+1-\beta_i}{i+1-\beta_i} (\beta_i \cdot
n_i(1) -1) \geq \frac{(i+1)(1-\beta_i)}{i+1-\beta_i} n_i(1)+
\frac{i\beta_i}{i+1-\beta_i}n_i(1)-\frac{2(i+1)(1-\beta_i)+i}{i+1-\beta_i}
\geq n_i(1)-3$, since
$\frac{2(i+1)(1-\beta_i)+i}{i+1-\beta_i}=2+\frac{i(1-2\beta_i)}{i+1-\beta_i}
\leq 3$.

Finally, consider class $M$. The total weight is at least

$$\frac{1-\beta_M}{1-\beta_M/M}((n^{\ell}_M(1)+(1-1/M)\cdot
n^{s}_M(1))+ n^{s}_M(1)$$

$$= \frac{1-\beta_M}{1-\beta_M/M}\cdot n^{\ell}_M(1)+
\frac{(1-\beta_M)(1-1/M)+1-\beta_M/M}{1-\beta_M/M} n^{s}_M(1)$$

$$\geq \frac{1-\beta_M}{1-\beta_M/M}\cdot ((1-\beta_M)n_M(1)-1)+
\frac{(1-\beta_M)(1-1/M)+1-\beta_M/M}{1-\beta_M/M} (\beta_M\cdot
n_M(1)-1)$$

$$=\frac{1-\beta_M}{1-\beta_M/M}\cdot
n_M(1)+\frac{\beta_M(1-1/M)}{1-\beta_M/M}\cdot
n_M(1)-\frac{1-\beta_M+(1-\beta_M)(1-1/M)+1-\beta_M/M}{1-\beta_M/M}
$$

$\geq n_M(1)-3$.  \end{proof}

\begin{claim}
The total weight of bins of the algorithm for $I_2$ with respect
to $v$ is at least $n_2-9M$.
\end{claim}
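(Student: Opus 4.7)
The plan is to follow the template of the preceding claim (the one for $I_1$), with three adjustments specific to $I_2$: (i) the bin-count bounds for $I_2$ are weaker by a factor of three, which will inflate the additive loss from $3$ to $9$ per class; (ii) the weight of a $1$-item under $v$ is $0$, so no extra term for used small bins is needed; and (iii) by Claim~\ref{noready}, together with the fact that no $1$-item of $I_2$ opens a new bin, there are no class-$0$ bins in $I_2$, hence $n_2=\sum_{i=1}^M n_i(2)$ and the $1$-items sitting inside small bins of higher classes contribute nothing to $V$.

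For each regular class $1\le i\le M-1$, I would sum the $v$-weights of the $(i+1)n^{\ell}_i(2)+i\,n^{s}_i(2)$ class-$i$ items (active bins are excluded from these counts), each item carrying weight $\frac{1}{i+1-\beta_i}$. Substituting $n^{\ell}_i(2)\ge (1-\beta_i)n_i(2)-3$ and $n^{s}_i(2)\ge \beta_i n_i(2)-3$, together with the identity $(i+1)(1-\beta_i)+i\beta_i=i+1-\beta_i$, yields a lower bound of $n_i(2)-\frac{3(2i+1)}{i+1-\beta_i}$. Since $i+1-\beta_i\ge i$ and $\frac{3(2i+1)}{i}=6+\frac{3}{i}\le 9$ for every integer $i\ge 1$, this is at least $n_i(2)-9$.

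For the tiny class $M$, every inactive large bin carries items of total size at least $1$ and every inactive small bin carries items of total size at least $1-\frac1M$, so the class-$M$ $v$-weight is at least $\frac{1}{1-\beta_M/M}\bigl(n^{\ell}_M(2)+(1-\tfrac1M)\,n^{s}_M(2)\bigr)$. The same substitutions, together with the identity $(1-\beta_M)+(1-\tfrac1M)\beta_M=1-\beta_M/M$, give at least $n_M(2)-\frac{3(2-1/M)}{1-\beta_M/M}$, which is at least $n_M(2)-\frac{3(2M-1)}{M-1}\ge n_M(2)-9$ for all $M\ge 2$, since $\frac{3(2M-1)}{M-1}=6+\frac{3}{M-1}$.

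Summing these per-class lower bounds over $i=1,\ldots,M$ gives a total $v$-weight of at least $\sum_{i=1}^M n_i(2)-9M=n_2-9M$, which is exactly what is claimed. The only delicate step is uniformly bounding the per-class loss by the constant $9$; this is where the looser $\pm 3$ bounds for $I_2$ (as opposed to $\pm 1$ for $I_1$) propagate through, but thanks to the two algebraic identities $(i+1)(1-\beta_i)+i\beta_i=i+1-\beta_i$ and $(1-\beta_M)+(1-\tfrac1M)\beta_M=1-\beta_M/M$ the verification reduces to elementary estimates and is not a genuine obstacle.
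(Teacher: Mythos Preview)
Your proof is correct and follows the same approach as the paper: both split into classes, use the inactive bin counts together with the $\pm 3$ bounds on $n^{\ell}_i(2)$ and $n^{s}_i(2)$, apply the identities $(i+1)(1-\beta_i)+i\beta_i=i+1-\beta_i$ and $(1-\beta_M)+(1-\tfrac1M)\beta_M=1-\beta_M/M$, and absorb the per-class additive loss into the constant $9$. The paper simply asserts the last step (it writes ``$\geq n_i(2)-9$'' and ``$\geq n_M(2)-9$'' without the intermediate expressions), whereas you make the loss terms $\frac{3(2i+1)}{i+1-\beta_i}\le 6+\frac{3}{i}\le 9$ and $\frac{3(2M-1)}{M-1}=6+\frac{3}{M-1}\le 9$ explicit; your version is therefore slightly more detailed but otherwise identical in structure and content.
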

\begin{proof}
By definition of $x$, the bins of $I_2$ do not include bins where
there is only a $1$-item. Other bins may contain such items, but
their weights are equal to $0$, so we do not discuss such items.

Consider a class $1 \leq i\leq M-1$. The total weight is
$\frac{1}{i+1-\beta_i}((i+1)n^{\ell}_i(2)+i\cdot n^{s}_i(2)) \geq
\frac{i+1}{i+1-\beta_i}((1-\beta_i) \cdot n_i(2) -3)+
\frac{i}{i+1-\beta_i} \cdot (\beta_i \cdot n_i(2) -3)  \geq
n_i(2)-9$.

Finally, consider class $M$.  Recall that $M\geq 2$ and $\beta_M \leq 1$, and thus $\frac{1}{1-\frac{\beta_M}{M}} \leq 2$. The total weight is at least
$\frac{1}{1-\beta_M/M}((n^{\ell}_M(1)+(1-1/M)\cdot n^{s}_M(1))
\geq \frac{1}{1-\beta_M/M}\cdot
((1-\beta_M)n_M(2)-3)+(1-1/M)\cdot(\beta_M\cdot n_M(2)-3))\geq
n_M(2)-9$.
\end{proof}

The next claim holds using the definition of weights directly.
\begin{claim}
For any $1 \leq i \leq M-1$, both weights are in $[0,\frac 1i]$.
\end{claim}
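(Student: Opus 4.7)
The plan is to verify the two inequalities (non-negativity and the $1/i$ upper bound) directly from the closed-form definitions of $w$ and $v$ on class $i$ for $1 \leq i \leq M-1$, using only the assumption $\beta_i \in [0,1]$ which is built into the algorithm's parameters.

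For non-negativity I would simply observe that on class $i$, the $w$-weight equals $\frac{1-\beta_i}{i+1-\beta_i}$ and the $v$-weight equals $\frac{1}{i+1-\beta_i}$, and since $\beta_i \in [0,1]$ we have both $1-\beta_i \geq 0$ and $i+1-\beta_i \geq i \geq 1 > 0$, so both weights are $\geq 0$.

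For the upper bound, the key observation is that the $w$-weight is exactly $(1-\beta_i)$ times the $v$-weight, and $1-\beta_i \leq 1$, so it suffices to bound the $v$-weight. For the $v$-weight I would use $\beta_i \leq 1$ to deduce $i+1-\beta_i \geq i$, hence $\frac{1}{i+1-\beta_i} \leq \frac{1}{i}$. This gives $v$-weight $\leq 1/i$ and therefore also $w$-weight $\leq (1-\beta_i)/i \leq 1/i$.

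There is no real obstacle here; the claim is essentially a one-line consequence of $0 \leq \beta_i \leq 1$ and $i \geq 1$, and its only purpose is to provide a clean per-item cap that will be used when bounding the total weight of an \textsc{opt}-bin in the subsequent analysis (together with the analogous trivial bounds for the class-$M$ tiny items and for the $1$-items, which are handled separately). Accordingly my written proof would be a single short paragraph quoting the formulas for $w$ and $v$ on class $i$, pointing out the two monotonicity facts above, and concluding that both weights lie in $[0,1/i]$.
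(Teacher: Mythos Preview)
Your proof is correct and is exactly the direct verification from the definitions that the paper has in mind; the paper itself gives no written proof beyond stating that the claim ``holds using the definition of weights directly.'' Your argument is the natural one-line elaboration of that remark.
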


The next claim show that a bin does not contain an item of $I_1$
and an item of $I_2$ at the same time.

\begin{claim}
It is sufficient to analyze bins of $\OPT$ containing items of
only one of the sets $I_1$ and $I_2$.
\end{claim}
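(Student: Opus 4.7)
The plan is to show that the weight of any ``mixed'' bin of $\OPT$ (one containing items from both $I_1$ and $I_2$) is dominated by its $v$-weight, so that mixed bins are automatically covered once we have a bound on $v$-weights for $I_2$-only bins.

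First, I would record a structural fact about $\pr$: in any feasible bin, a $1$-item must be the last item packed into it, because packing a $1$-item raises the load to at least $1$ and thereby closes the bin to any further arrivals. Combined with the definition of the split --- every item of $I_1$ occurs at or before $x$ in the input order and every item of $I_2$ occurs strictly after $x$ --- this rules out the possibility that a mixed bin contains a $1$-item from $I_1$: such a $1$-item would sit before every $I_2$-item in the input order and hence block any $I_2$-item from being packed into the same bin.

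Next, I would verify the pointwise inequality $w(a) \leq v(a)$ for every non-$1$-item $a$. For a regular item of class $1 \leq i \leq M-1$ this reduces to $\tfrac{1-\beta_i}{i+1-\beta_i} \leq \tfrac{1}{i+1-\beta_i}$, i.e.\ $\beta_i \geq 0$, and for a tiny item of size $\rho$ it reduces to $\beta_M \geq 0$; both are immediate from $\beta_i,\beta_M\in[0,1]$. Putting the two observations together, for a mixed bin $B$ every item in $B \cap I_1$ is a non-$1$-item, so
\[
\sum_{a \in B} f(a) \;=\; \sum_{a \in B \cap I_1} w(a) + \sum_{a \in B \cap I_2} v(a) \;\leq\; \sum_{a \in B} v(a).
\]
Therefore, once one has an upper bound $R$ on the total $v$-weight of any feasible $\pr$ bin (the bound to be proved for $I_2$-only bins), the same $R$ also bounds the $f$-weight of every mixed bin. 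Establishing $V+W \leq R \cdot \OPT(I')$ thus reduces to verifying $R$ on pure $I_1$ bins (under $w$) and on pure $I_2$ bins (under $v$), which is exactly the statement of the claim.

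The only delicate step is the $1$-item observation in the first paragraph, which leans on the precise $\pr$ feasibility rule; the remainder is routine bookkeeping driven by the explicit formulas for $w$ and $v$.
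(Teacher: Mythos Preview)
Your proof is correct and follows essentially the same approach as the paper. The paper first discards $1$-items of $I_2$ (which have $v$-weight $0$) and then splits into the two cases ``bin contains a $1$-item'' (forcing it to be pure $I_1$) and ``bin has no $1$-item'' (allowing replacement of $w$ by $v$); you instead argue directly that a mixed bin cannot contain a $1$-item from $I_1$ and then apply $w\le v$ on non-$1$-items, which is the same idea organized slightly differently.
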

\begin{proof}
For every bin of $\OPT$, if the bin has a $1$-item of weight $0$,
we can remove it from the bin for the calculation without changing
the total weight for the bin. Now all $1$-items included in bins
are those that arrived not later than $x$.

We consider two cases.  If the bin has a $1$-item, then this item
has to be the last item of the bin, based on the definition of
\pr. Therefore, the bin only has items that arrived not later than
$x$.

If the bin does not have a $1$-item, we can analyze it according
the weight function $v$, as for any item that is not a $1$-item,
the weight according to $v$ is not smaller than that of $w$, and
this result in the same or a larger value of $R$ for the
algorithm, so it can be assumed that it only has items arriving
strictly after $x$. That is,  it is analyzed as if only items of
$I_2$ were packed into it originally.
\end{proof}

\subsection*{A bad example for algorithms of this class}

A simple example showing that an algorithm of the type we consider
cannot have a very small asymptotic competitive ratio, i.e., a
lower bound is given for this kind of algorithms. Let $M>500$.
There will be no items smaller than $\frac 1{M}$ and for
simplicity we use parameters $\gamma_i$. The value $\gamma_i$
denotes for class $i$ the fraction of items of small bins so
$\gamma_i=\frac{i\cdot \beta_i}{(i+1)\cdot(1-\beta_i)+i\cdot
\beta_i}$. We ignore rounding issues in this example that can be
assumed by using large enough value of $N$ where $N>0$ be a very
large integer.

%Note that increasing the value of $M$ does not harm the algorithm
%in terms of its asymptotic competitive ratio, and it may only
%increase the additive constant.

% 000000000000000000000000000000000000000000
% HOW DO WE JUSTIFY THE LARGE VALUE OF M ??
% 000000000000000000000000000000000000000000

%Before we proceed, consider first the case $\gamma_1 \geq 0.4$.
%Consider an input consisting of $N$ items of size $\frac 13$ and
%$2N$ items of size $\frac 12$. For this input, $\OPT=N$. The
%algorithm will create:
%
%$(\gamma_{2} \cdot N)/2$ small bins of class $2$ and
%$((1-\gamma_{2}) \cdot N)/3$ large bins of this class.
%
%and $(\gamma_{1} \cdot 2N)$ small bins of class $1$ and
%$((1-\gamma_{1}) \cdot 2N)/2$ large bins of this class.
%
%The total number of bins is $\frac N6 \cdot (3\gamma_2
%+2(1-\gamma_2)+12\gamma_1+6(1-\gamma_1))=\frac
%N6(8+\gamma_2+6\gamma_1)\geq \frac N6 \cdot 10.4 > 1.73 \cdot N$.
%
%If $\gamma_1 < 0.4$, we construct another input.

The input is as follows, consisting of $11$ batches.

\begin{itemize}
\item $\frac N{462}$ items of size $\frac{1}{463}$.

\item $\frac N{21}$ items of size $\frac{1}{22}$.

\item $\frac N{3}$ items of size $\frac{1}{7}$.

\item $N$ items of size $\frac 13$.

\item $N$ items of size $\frac 12$.

\item $N$ $1$-items.

\item $N$ items of size $\frac{1}{463}$.

\item $N$ items of size $\frac{1}{22}$.

\item $2N$ items of size $\frac{1}{7}$.

\item $2N$ items of size $\frac 13$.

\item $N$ items of size $\frac 12$.
\end{itemize}

To obtain an offline packing, pack items of the first three
batches into different bins (i.e., one item from the union of
these three batches into each bin), and add one item of each of
the next three batches into these bins. As not all items of
batches 4 an 5 were packed (only $\frac{N}{462}+\frac{N}{21}+\frac
N3$ bins were used), the remaining items are packed into bins with
one item of each size. Before packing the $1$-item, every bin has
a total size below $1$.

For the five last batches, every bin will have one item of size
$\frac 1{463}$, one item of size $\frac1{22}$, two items of size
$\frac 17$, two items of size $\frac 13$, and one item of size
$\frac 12$.

We find $\OPT \leq 2N$.

Consider the action of the algorithm. For the five first batches,
it has the following bins:

\begin{itemize}
\item $(\gamma_{462} \cdot \frac N{462})/462$ small bins of class
$462$ and $((1-\gamma_{462}) \cdot \frac N{462})/463$ large bins
of this class.

\item $(\gamma_{21} \cdot \frac N{21})/21$ small bins of class
$21$ and $((1-\gamma_{21}) \cdot \frac N{21})/22$ large bins of
this class.

\item $(\gamma_{6} \cdot \frac N{3})/6$ small bins of class $6$
and $((1-\gamma_{6}) \cdot \frac N{3})/7$ large bins of this
class.

\item $(\gamma_{2} \cdot N)/2$ small bins of class $2$ and
$((1-\gamma_{2}) \cdot N)/3$ large bins of this class.

\item $(\gamma_{1} \cdot N)$ small bins of class $1$ and
$((1-\gamma_{1}) \cdot N)/2$ large bins of this class.
\end{itemize}

%We claim that the number of small bins is at most $N$. (On second
%thought if it is large it is also fine, so forget it, and we do
%not need the simple example.)

The number of bins after the first six batches are presented is
$\max\{N,(\gamma_{462} \cdot \frac N{462})/462+\gamma_{21} \cdot
\frac N{21})/21+(\gamma_{6} \cdot \frac N{3})/6+(\gamma_{2} \cdot
N)/2+(\gamma_{1} \cdot N)\}+((1-\gamma_{462}) \cdot \frac
N{462})/463+((1-\gamma_{21}) \cdot \frac N{21})/22+((1-\gamma_{6})
\cdot \frac N{3})/7+((1-\gamma_{2}) \cdot N)/3+((1-\gamma_{1})
\cdot N)/2 \geq N+((1-\gamma_{462}) \cdot \frac
N{462})/463+((1-\gamma_{21}) \cdot \frac N{21})/22+((1-\gamma_{6})
\cdot \frac N{3})/7+((1-\gamma_{2}) \cdot N)/3+((1-\gamma_{1})
\cdot N)/2 $.

For the second part of the input, the next bins are built:

\begin{itemize}
\item $(\gamma_{462} \cdot N)/462$ small bins of class $462$ and
$((1-\gamma_{462}) \cdot  N)/463$ large bins of this class.

\item $(\gamma_{21} \cdot N)/21$ small bins of class $21$ and
$((1-\gamma_{21}) \cdot N)/22$ large bins of this class.

\item $(\gamma_{6} \cdot 2N)/6$ small bins of class $6$ and
$((1-\gamma_{6}) \cdot 2N)/7$ large bins of this class.

\item $(\gamma_{2} \cdot 2N)/2$ small bins of class $2$ and
$((1-\gamma_{2}) \cdot 2N)/3$ large bins of this class.

\item $(\gamma_{1} \cdot N)$ small bins of class $1$ and
$((1-\gamma_{1}) \cdot N)/2$ large bins of this class.
\end{itemize}

The numbers of large bins for the different classes are:

\begin{itemize}
\item For class $462$, $((1-\gamma_{462}) \cdot
N)/463+((1-\gamma_{462}) \cdot \frac N{462})/463=((1-\gamma_{462})
\cdot N)/462$.

\item For class $21$, $((1-\gamma_{21}) \cdot
N)/22+((1-\gamma_{21}) \cdot \frac N{21})/22=((1-\gamma_{21})
\cdot N)/21$.

\item For class $6$, $((1-\gamma_{6}) \cdot
2N)/7+((1-\gamma_{6}) \cdot \frac N{3})/7=((1-\gamma_6)\cdot
N)/3$.

\item For class $2$,  $((1-\gamma_{2}) \cdot
2N)/3+((1-\gamma_{2}) \cdot N)/3=(1-\gamma_2)\cdot N$.

\item For class $1$,  $((1-\gamma_{1}) \cdot
N)/2+((1-\gamma_{1}) \cdot N)/2=(1-\gamma_1)\cdot N$.
\end{itemize}

The numbers of bins for the different classes are (excluding small
bins of the first part of the input), in addition to the $N$ bins containing $1$-item.

\begin{itemize}
\item For class $462$, $((1-\gamma_{462}) \cdot
N)/462+(\gamma_{462} \cdot N)/462=N/462$.

\item For class $21$, $((1-\gamma_{21}) \cdot N)/21+(\gamma_{21}
\cdot N)/21=N/21$.

\item For class $6$, $((1-\gamma_6)\cdot N)/3+(\gamma_{6} \cdot
2N)/6=N/3$.

\item For class $2$, $(1-\gamma_2)\cdot N+(\gamma_{2} \cdot
2N)/2=N$.

\item For class $1$,  $(1-\gamma_1)\cdot N+(\gamma_{1} \cdot
N)=N$.
\end{itemize}

In total we have $3N+N/3+N/21+N/462=1563N/462$.

The ratio for the algorithms in the examined class is at least
$\frac{1563}{924}\approx 1.691558441558442$.

To obtain a set of parameters and a tight example, we define the following sequence:
$t_1=22$, $t_{i+1}=t_i(t_i-1)+1$ (so $t_2=463$, $t_3=213907$
etc.). We have $\sum_{i=1}^{\infty} \frac {1}{t_i} < \frac 1{21}$,
and we let $C=\sum_{i=1}^{\infty} \frac {1}{t_i-1} \approx
0.04978822$.

We also let $R=\frac 53+\frac C2 \approx 1.691560779$ (where $R >
1.691560779$ and $R<1.69156078$).

The example above can be modified to give a lower bound of $R$. In
both parts of the input, instead of items of sizes
$\frac{1}{463}=\frac{1}{t_2}$ and $\frac{1}{22}=\frac{1}{t_1}$,
there will be items of sizes $\frac{1}{t_g}$,
$\frac{1}{t_{g-1}},\ldots,\frac{1}{t_1}$, for a fixed integer $g
\geq 3$, where $M>\frac{1}{t_g}$. The numbers of these items are
$N$ in the second part of the input (after the $1$-items arrive),
and $\frac{N}{t_i-1}$ for items of size $\frac{1}{t_i}$ in the
beginning of the input. Since $C<0.05$, it is still possible to
pack $N$ bins by assigning the items of the first $g+1$ batches of
sizes  $\frac{1}{t_g}$,
$\frac{1}{t_{g-1}},\ldots,\frac{1}{t_1},\frac 17$ into different
bins, and pack a triple of items of sizes $\frac 13$, $\frac 12$,
and $1$ into these bins and new bins, such that $N$ bins are
created. Since $\sum_{i=1}^{\infty} \frac {1}{t_i} < \frac 1{21}$,
the second part of the input can be still packed into $N$ bins by
an offline solution.

For items of size $\frac 1{t_i}$, the number of large bins for the
first part of the input is
$(1-\gamma_{t_i-1})\frac{N}{t_i-1}/t_i$. The number of bins for
the second part of the input is $\gamma_{t_i-1}\cdot
\frac{N}{t_i-1}+(1-\gamma_{t_i-1})\frac{N}{t_i}$. In total we have
$$\frac{N(1-\gamma_{t_i-1}+t_i\gamma_{t_i-1}+(t_i-1)(1-\gamma_{t_i-1}))}{t_i(t_i-1)}=\frac{N}{t_i-1}\ .$$
Thus, while the bound on the optimal cost is unchanged, we can
replace $3N+N/3+N/21+N/462$ with $\frac{10N}3+\sum_{i=1}^{g} \frac
{N}{t_i-1}$ in the cost of the algorithm. Letting $g$ grow without
bound, we get $N(\frac{10}3+C)=N \cdot 2R$, for a lower bound of
$R$ on the asymptotic competitive ratio of the above class of
algorithms.

\subsection*{The best possible algorithms of this class}

The set of parameters we will use for our algorithm is as follows.

We let $M$ be a large integer.  We exhibit an algorithm for every
such value of $M$ whose asymptotic competitive ratio tends to $R$
as $M$ grows unbounded. Here $R$ is the same as in the previous
section, i.e., in the bad example. We also use
$\beta_1=\frac{58}{529}$, and
$\beta_2=3-\frac{4-2\beta_1}{(3-R)(2-\beta_1)-1} \approx
0.434052654632836$, $\beta_3=\frac{148}{287}$,
$\beta_4=\frac{15}{23}$, $\beta_5=\frac{13}{23}$, and $\beta_i=1$
for $i \geq 6$.

We have
$\frac{1-\beta_1}{2-\beta_1}+\frac{1-\beta_2}{3-\beta_2}=2-\frac{1}{2-\beta_1}-\frac{2}{3-\beta_2}=2-\frac{1}{2-\beta_1}-2
\cdot \frac{(3-R)(2-\beta_1)-1}{4-2\beta_1}=R-1$.

Recall that $w_i=(1- \beta_i)/(1+ i - \beta_i)$. The weights
according to $w$ for classes $1,2,3,4,5$ are  not larger than
$0.471$, $0.22056078$, $0.139$, $0.08$, $0.08$, and for $6 \leq i
\leq M$, the weight is $0$.

The weights according to $v$ for classes $1,2,3,4,5$ are not
larger than: $0.529$, $0.389719611$, $0.287$, $0.23$, $0.184$,
respectively. For class $6 \leq i \leq M-1$, the weight according
to $v$ is $\frac{1}i$, and for class $M$ it is $\frac{M}{M-1}$
times the size.

For $1 \leq i \leq M-1$, let $v_i$ and $w_i$ denote the weight of
an item of class $i$, based on the weight functions $v$ and $w$,
respectively. We will analyze the possible total weights of packed
bins.

\begin{claim}
Every bin of $\OPT$ with items of $I_1$ has weight not larger than
$R$ according to the weight function $w$.
\end{claim}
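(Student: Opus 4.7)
The plan is to exploit that under the chosen parameters only items of weight classes $0$ (the $1$-items) and $1,2,3,4,5$ contribute to $w$: since $\beta_i = 1$ for every $i \geq 6$ we have $w_i = 0$ there, and since $\beta_M = 1$ the tiny-item weight coefficient $\frac{1-\beta_M}{1-\beta_M/M}$ vanishes as well. A bin $B$ of $\OPT$ is a valid $\pr$-bin, so either $B$ has total size strictly below $1$, in which case it has no exceeding item, or its unique exceeding item (the item of maximum index) can be removed to leave a set of items of total size strictly below $1$. Moreover, a $1$-item, whenever present in a bin, is always an exceeding item and is unique within that bin.

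The central technical step will be an auxiliary bound: for every set $S$ of items other than $1$-items with total size strictly less than $1$, the total $w$-weight of $S$ is at most $w_1 + w_2 = R - 1$. To prove it, I let $n_i$ denote the number of class-$i$ items in $S$ for $1 \leq i \leq 5$ (items of higher index class and tiny items contribute zero and can be ignored). Since each class-$i$ item has size at least $\frac{1}{i+1}$, the inequality $\sum_{i=1}^5 n_i/(i+1) < 1$ restricts $(n_1,\ldots,n_5)$ to a finite set with $n_i \leq i$ for every $i$. I then carry out a case analysis using the numerical bounds $w_1 \leq 0.471$, $w_2 \leq 0.22056078$, $w_3 \leq 0.139$, $w_4 \leq 0.08$, $w_5 \leq 0.08$, showing that the unique pattern achieving $R - 1$ is $(1,1,0,0,0)$: once $n_1 = n_2 = 1$ the residual size budget is strictly below $\frac{1}{6}$, which forces $n_3 = n_4 = n_5 = 0$ and leaves exactly $w_1 + w_2$; every other feasible pattern yields strictly less total weight.

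Given this auxiliary bound, the claim follows by a case split on $B$. If $B$ has no exceeding item then $B$ cannot contain a $1$-item (which would force total size at least $1$), so the auxiliary bound applies to all items of $B$ and the weight is at most $R - 1 < R$. If the exceeding item of $B$ is a $1$-item, it contributes weight exactly $1$, while the remaining items of $B$ are non-$1$-items of total size strictly less than $1$ (no other $1$-items can be present, as $1$-items are always exceeding and every bin has at most one exceeding item), and by the auxiliary bound they contribute at most $R - 1$, giving total weight at most $R$. Finally, if the exceeding item of $B$ is a non-$1$-item then $B$ contains no $1$-item, its exceeding item contributes weight at most $w_1 = 0.471$, and its remaining items contribute at most $R - 1$, for a total of at most $w_1 + (R-1) = R - 0.529 < R$. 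The main obstacle will be the finite enumeration of vectors $(n_1,\ldots,n_5)$ that underlies the auxiliary bound, and in particular checking that no pattern other than $(1,1,0,0,0)$ beats $R - 1$; once this is established, the three-case argument is routine.
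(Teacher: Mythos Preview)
Your proposal is correct and follows essentially the same route as the paper: isolate the exceeding item (whose $w$-weight is at most $1$), observe that only classes $1$--$5$ carry positive $w$-weight, and bound the remaining items (of total size below $1$) by a finite case analysis over $(n_1,\ldots,n_5)$, with the maximum $w_1+w_2=R-1$ attained at the pattern $(1,1,0,0,0)$. The only stylistic difference is that the paper dispatches the sub-case with no class-$1$ item in the remainder via the density inequality $w(u)<\tfrac{2u}{3}$ for $u\in(0,\tfrac12)$ (giving total weight at most $\tfrac53$), which trims the enumeration you describe; your explicit three-way split on the exceeding item is a harmless refinement, since the paper's uniform bound of $1$ on that item already covers all three of your cases at once.
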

\begin{proof}
A given bin can contain a total size below $1$, and one additional
item. The largest possible weight of the additional item is $1$.

Items of sizes in $(0,1)$ having positive weights  according to $w$ are in fact
items of sizes in $[\frac 16,1)$.

For any item of size $u \in (0,\frac 12)$, we have $w(u) <
\frac{2u}3$, and therefore if there is no item of size in $[\frac
12,1)$, the total weight is at most $\frac 53$.

We are left with the case that there is an item of size in $[\frac
12,1)$ packed into the bin. There can be at most two other items
with positive weights packed into the bin, where if there are two
such items, at least one of them has size strictly below $\frac
14$, and if there is an item of size above $\frac 13$, there is no
second item with a positive weight.

Therefore, if there is also an item of size in $[\frac 13,\frac
12)$, there cannot be another item of size above $\frac 16$, and
the total weight is at most  $1+w_1+w_2=
1+\frac{1-\beta_1}{2-\beta_1}+\frac{1-\beta_2}{3-\beta_2}=R$.

We are left with the case that there is no item of size in $[\frac
13,\frac 12)$. In this case the total weight of additional items
is at most $0.139 + 0.08$, and the total weight for the bin is at
most $1.69 < R$.
\end{proof}

\begin{claim} Every bin of $\OPT$ with items of $I_2$ has weight not larger than $R$
according to weight function $v$.
\end{claim}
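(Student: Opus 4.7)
The plan is to mimic the case analysis of the previous claim for weight function $w$, with a finer structure since under $v$ every class contributes positive weight. First, $1$-items have $v$-weight zero and can be ignored; a bin of $\OPT$ consists of non-exceeding items (total size $<1$) and possibly one exceeding item of arbitrary size in $(0,1]$. Since two items of size $\geq 1/2$ cannot both be non-exceeding, each bin contains at most one class-$1$ item among its non-exceeding items, which gives natural cases based on how class-$1$ items are placed.

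The crux is the following key inequality, which I would establish and then apply to all cases: for items (no $1$-items, no class-$1$) with total size strictly less than $1$, the total $v$-weight is at most $2v_2 + 1/3 + C$, where $C = \sum_{i=1}^{\infty} 1/(t_i - 1) \approx 0.0498$. By the defining choice of $\beta_1$ and $\beta_2$ one has $v_1 + 2v_2 + 1/3 + C = R$, so this bound equals $R - v_1$; this is the same identity that aligns with the lower-bound example from the preceding subsection. Given this bound, the critical case, in which a class-$1$ exceeding item combines with non-exceeding items summing to size $<1$, immediately yields total weight at most $v_1 + (R - v_1) = R$. The other cases---no class-$1$, or a class-$1$ non-exceeding item (with or without a second class-$1$ as the exceeding item)---are handled by loose bounds: either the exceeding item contributes only at most $v_2 \approx 0.39$, or the non-exceeding class-$1$ item forces the remaining non-exceeding items to total size $<1/2$ (bounding their weight by about $0.58$ via a short computation), yielding totals strictly below $R$ in each.

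To prove the key inequality, I would subdivide by the number $n_2 \in \{0,1,2\}$ of class-$2$ items among the non-exceeding items. After placing $n_2$ such items (weight $n_2 v_2$, size $\geq n_2/3$), it suffices to show the fillers (classes $\geq 3$, total size $<1 - n_2/3$) have $v$-weight at most $(2 - n_2) v_2 + 1/3 + C$. The tight subcase is $n_2 = 2$: fillers of classes $\geq 3$ with total size $<1/3$ have $v$-weight at most $1/3 + C$, achieved by two class-$6$ items (weight $1/3$, size $2/7$) together with items of sizes $1/t_i$ (class $t_i - 1$, weight $1/(t_i-1)$). The telescoping identity $1/t_i = 1/(t_i - 1) - 1/(t_{i+1} - 1)$ gives $\sum_{i \geq 1} 1/t_i = 1/(t_1 - 1) = 1/21$, so these items fit within the remaining slack $<1/21$ with total weight approaching (but not reaching) $C$.

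The main obstacle is establishing this tight subcase $n_2 = 2$: one must rule out all alternative configurations that might exceed $1/3 + C$. This requires a finite but detailed case analysis over the possible mixtures of classes $3, 4, 5$ (at most one item each, by size constraints since two items of any of these classes would already exceed size $1/3$) and small classes $\geq 6$, comparing each candidate to the $t_i$-based configuration; the exact values $v_3, v_4, v_5$ and the bound $v_k = 1/k \leq (1 + 1/k) \cdot u$ for any item of size $u$ in class $k \geq 6$ together imply that deviations from the extremal structure strictly reduce filler weight. This combinatorial verification is the technical heart of the proof.
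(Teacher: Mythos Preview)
Your approach is correct and leads to a valid proof, with an organization that differs modestly from the paper's. Both arguments first peel off the exceeding item (weight at most $v_1$) and handle the two-class-$1$ case by a crude density bound on the residual size $<\frac12$. Where you then formulate a single ``key inequality'' --- that items of classes $\geq 2$ with total size $<1$ have $v$-weight at most $2v_2+\frac13+C=R-v_1$ --- and organize its proof by the count $n_2\in\{0,1,2\}$ of class-$2$ items, the paper instead enumerates vectors $(z_2,z_3,z_4,z_5,z_6)$ and uses the filter ``total size in classes $2$ and $6$ exceeds $0.62$'' to cut the list to sixteen vectors, each dispatched by a density-and-remainder calculation. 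Your formulation is conceptually cleaner and makes the role of the identity $v_1+2v_2+\frac13+C=R$ explicit; the paper's filter is a nice device that avoids stating the key inequality abstractly but reaches the same tight configuration $(2,0,0,0,2)$.

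One caveat: you present $n_2=2$ as \emph{the} technical heart and treat $n_2\in\{0,1\}$ as essentially routine. They are not tight, but they are not routine either. For $n_2=0$ the naive density bound for classes $\geq 3$ gives weight at most $\frac{7}{6}\approx 1.167$, which \emph{exceeds} the target $2v_2+\frac13+C\approx 1.163$; the gap closes only after a refined analysis (e.g.\ six class-$6$ items plus a greedy tail, exactly the paper's case $(0,0,0,0,6)$, which lands within about $0.0015$ of $R$). Similarly, for $n_2=1$ the configurations $(1,0,0,0,3)$ and $(1,0,0,0,4)$ come within $0.003$ of $R$. So the case analysis you anticipate for $n_2=2$ must in fact be carried out with comparable care across all three values of $n_2$, and your write-up should not suggest otherwise. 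A minor related point: in the $n_2=2$ subcase you note classes $3,4,5$ admit at most one item each, but in fact the size constraint $<\frac13$ forces at most one item \emph{total} from $\{3,4,5\}$, which simplifies that branch further.
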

\begin{proof}
For an item of size $u$ that belongs to one of the classes
$i=2,3,4,5$, we have $\frac{v(u)}{u} < 1.169158832$, $1.148$,
$1.15$, and $1.104$, respectively. For other values of $i$, this
ratio does not exceed the value $\frac 76$. For $i \geq 6$, and
for any item of size $u$, of class $j \geq i$, it holds that
$\frac{v(u)}{u} \leq \frac {i+1}i$. We call the ratio between
weight and size {\it the density of the item}.

A given bin can contain a total size below $1$, and one additional
item (which is an exceeding item). The largest possible weight of
the additional item is at most $v_1 \leq 0.529$. If there is an
additional such item (where there can be at most one additional
item of this class), the total weight will not exceed $2\cdot
0.529 + 1.169158832 \cdot \frac 12 < 1.6426$.

If no other item except for the additional item has size of at
least $\frac 17$, all items are of classes whose indices are $7$
or larger, the densities are no larger than $\frac 87$, and the
total weight is at most $\frac 87+ 0.529< 1.672$. Thus, we are
left with the case that there is one item (of weight at most
$v_1$), there is at least one item of classes $2,3,4,5,6$, and the
remaining items are of classes $2,3,\ldots$. For classes
$2,3,\ldots,M-1$ the weights are equal for all items of one class,
and therefore we can assume that an item of class $i$ has size
$\frac 1{i+1}$ (the total weight may only increase by decreasing
item sizes like this and possibly filling up the possible
remaining space by possible other items).

Note that the only classes for which the density may be larger
than $1.15$ are $2$ and $6$, and the density for these classes is
below $1.169158832$ and $\frac 76$, respectively. The bin can
contain at most two items of class $2$. We will use the properties
$\frac 23+\frac 37 >1$, $\frac 23+\frac 17 > 0.8$, $\frac 13 +
\frac 57>1$, $\frac 13+\frac 27 < 0.62$, and $\frac 47< 0.58$.

If the total size of items of classes 2 and 6 is not larger than
$0.62$, the total weight is at most $v_1 +0.62 \cdot 1.169158832 +
0.38 \cdot 1.15 <1.69088 < R$.

We consider vectors of length $5$ to denote multisets of items of
classes $2,3,4,5,6$ that may be packed into a common bin. A vector
$(z_2,z_3,z_4,z_5,z_6)$ means that there are $z_j$ items of class
$j$. It is valid if $\sum_{j=2}^6 \frac{z_j}{j+1} <1$. The vector
$(0,0,0,0,0)$ was already considered. We consider all valid
vectors for which the total size of items of classes $2,6$ is
above $0.62$.

For each vector we find the remaining space excluding the already
existing items, and we find an integer $i$ such that all remaining
items are of classes with indices $i$ or larger. The first
component of any valid vector is at most $2$. By the lower bound
of $0.62$ on the total size of items of classes $2$ and $6$, we
conclude the following.
\begin{enumerate} \item If it is equal to
$2$, the last component is $0$, $1$ or $2$. \item If it is equal to $1$,
the last component is $3$ or $4$. \item If it is equal to $0$, the last
component is $5$ or $6$ (since this component is at most $6$ and
$\frac 47 < 0.62$). \end{enumerate}

Taking into account the other three components and the property
that the total size is strictly below $1$, there are 16 suitable
vectors, and we analyze each one of them separately.

\begin{itemize}

\item $(0,1,0,0,5)$. In this case the remaining size is below
$\frac 1{28}$. Thus, all remaining items have sizes below $\frac
1{28}$ and their density is at most $\frac {29}{28}$. We get a
total weight of at most $v_1+0.287+\frac 56+\frac {29}{28} \cdot
\frac 1{28}<1.687$.

\item $(0,0,1,0,5)$. In this case the remaining size is  $\frac
3{35}<\frac 1{11}$. Thus, all remaining items have sizes below
$\frac 1{11}$ and their density is at most $\frac {12}{11}$. We
get a total weight of at most $v_1+0.23+\frac 56+\frac {12}{11}
\cdot \frac 3{35}<1.686$.

\item $(0,0,0,1,5)$. In this case the remaining size is
$\frac{5}{42} < \frac 1{8}$. Thus, all remaining items have sizes
below $\frac 1{8}$ and their density is at most $\frac {9}{8}$. We
get a total weight of at most $v_1+0.184+\frac 56+\frac {9}{8}
\cdot \frac 5{42}<1.681$.

\item $(0,0,0,0,5)$. In this case the remaining size is below
$\frac 2{7}$. All remaining items have sizes below $\frac 1{7}$
and their density is at most $\frac {8}{7}$. We get a total weight
of at most $v_1+\frac 56+\frac {8}{7} \cdot \frac 2{7}<1.689$.

\item $(0,0,0,0,6)$. In this case the remaining size is below
$\frac 1{7}$. All remaining items have sizes below $\frac 1{7}$
and their density is at most $\frac {8}{7}$. If they are in fact
smaller than $\frac 18$, the densities are at most $\frac 98$, and
we get a total weight of at most $v_1+1+\frac {9}{8} \cdot \frac
1{7}<1.69$.

Otherwise, there is an item of class $7$ and weight $\frac 17$,
and since its size is $\frac 18$, the remaining items have total
size below $\frac 1{56}$ and density at most $\frac{57}{56}$. If
the maximum density is in fact at most $\frac{58}{57}$, we get a
total weight of at most $v_1+1+\frac {1}{7}+\frac {58}{57} \cdot
\frac 1{56}<1.6900276$.

Finally, if there is an item of size $\frac 1{57}$ and weight
$\frac 1{56}$, the remaining items have a total size below
$\frac{1}{3192}$ and density at most $\frac{3193}{3192}$, and the
total weight is at most $v_1+1+\frac {1}{7}+\frac{1}{56}+\frac
{3193}{3192} \cdot \frac 1{3192}<1.6900277$.
\end{itemize}

In the next cases, we will use the property $v_1+v_2 \leq
0.529+0.389719611=0.918719611$.

\begin{itemize}
\item $(1,0,0,1,3)$. In this case the remaining size is below
$\frac 1{14}$. Thus, all remaining items have sizes below $\frac
1{14}$ and their density is at most $\frac {15}{14}$. We get a
total weight of at most $v_1+v_2+0.184+\frac 36+\frac {15}{14}
\cdot \frac 1{14}<1.679250223245$.

\item $(1,0,1,0,3)$. In this case the remaining size is below
$\frac 4{105}$. Thus, all remaining items have sizes below $\frac
1{26}$ and their density is at most $\frac {27}{26}$. We get a
total weight of at most $v_1+v_2+0.23+\frac 36+\frac {27}{26}
\cdot \frac 4{105}<1.6882800505605$.

\item $(1,0,0,0,3)$. In this case the remaining size is below
$\frac 5{21}$. First, consider the case where all remaining items
have sizes below $\frac 1{8}$. Their density is at most $\frac
{9}{8}$. We get a total weight of at most $v_1+v_2+\frac 36+\frac
{9}{8} \cdot \frac 5{21}<1.6865767538572$.

Otherwise, there is one item of class $7$ (there can be at most
one such item). Excluding this last item, the remaining total size
is at most $\frac{19}{168}$, and the density for it is at most
$\frac{9}8$. We get a total weight of at most $v_1+v_2+\frac 36+
\frac 17+\frac {9}{8} \cdot \frac {19}{168}<1.6888088967143$.

\item $(1,0,0,0,4)$. In this case the remaining size is below
$\frac 2{21}$. Thus, all remaining items have sizes below $\frac
1{10}$ and their density is at most $\frac {11}{10}$.

If there is no item of size $\frac 1{11}$, the density is at most
$\frac {12}{11}$, and we get a total weight of at most
$v_1+v_2+\frac 46+\frac {12}{11} \cdot \frac 2{21}<1.6893$.

Otherwise, the total size of remaining items in this case is below
$\frac 1{231}$ and the density is at most $\frac{232}{231}$, and
we get a total weight of at most $v_1+v_2+\frac 46+\frac
1{10}+\frac {232}{231} \cdot \frac 1{231}<1.68974$.
\end{itemize}

In the remaining cases we use the property that there is one item
of class $1$ and two items of class $2$. Their total weight is
$v_1+2v_2=\frac{1}{2-\beta_1}+\frac{2}{3-\beta_2}=3-R \leq
1.308439221$. In the last case we use the definition of $R$ but in
the other cases we show a slightly better upper bound.

\begin{itemize}
\item $(2,0,0,0,0)$. In this case the remaining size is below
$\frac 1{3}$. All remaining items have density below $\frac 8{7}$.
We get a total weight of at most $v_1+2\cdot v_2+\frac {8}{7}
\cdot \frac 1{3}<1.6893916019524$.

\item $(2,0,0,1,0)$. In this case the remaining size is below
$\frac 1{6}$. All remaining items have density at most $\frac
{8}{7}$. We get a total weight of at most $v_1+2\cdot
v_2+0.184+\frac {8}{7} \cdot \frac 1{6}<1.6829154114762$.

\item $(2,0,1,0,0)$. In this case the remaining size is below
$\frac 2{15}$. All remaining items have density at most $\frac
{8}{7}$.

If there is no item of size $\frac 18$, the densities are not
larger than $\frac 98$, and we get a total weight of at most
$v_1+2\cdot v_2+0.23+\frac {9}{8} \cdot \frac 2{15}<1.688439222$.

If there is an item of size $\frac 18$, the remaining items have total size below $\frac{1}{120}$ and
density at most $\frac{121}{120}$, and we get a total weight of at
most $v_1+2\cdot v_2+0.23+\frac 17+ \frac {121}{120} \cdot \frac
1{120}<1.689699141635$.

\item $(2,1,0,0,0)$. In this case the remaining size is below
$\frac 1{12}$. All remaining items have density at most $\frac
{13}{12}$. We get a total weight of at most $v_1+2\cdot
v_2+0.287+\frac {13}{12} \cdot \frac 1{12}<1.6857169988$.

\item $(2,0,0,1,1)$. In this case the remaining size is below
$\frac 1{42}$. All remaining items have density at most $\frac
{43}{42}$. We get a total weight of at most $v_1+2\cdot
v_2+0.184+\frac 16+\frac {43}{42} \cdot \frac
1{42}<1.6834823049003$.

\item $(2,0,0,0,1)$. In this case the remaining size is below
$\frac 4{21}$. All remaining items have density at most $\frac
{8}{7}$.

If all other items have densities of at most $\frac 98$, we get a
total weight of at most $v_1+2\cdot v_2+ \frac 16+\frac {9}{8}
\cdot \frac 4{21}<1.6893916019524$.

Otherwise, there is also an item of size $\frac 18$ and weight
$\frac 17$ and the total size of other items is at most
$\frac{11}{168}$, and their densities are at most $\frac
{16}{15}$, and  we get a total weight of at most $v_1+2\cdot
v_2+\frac 16+\frac 17+\frac {16}{15} \cdot \frac
{11}{168}<1.6878043003651$.

\item $(2,0,0,0,2)$.

In this case, we will find the total weight of the largest five
items of the analyzed bin.

We have $\frac{1}{2-\beta_1}+\frac{2}{3-\beta_2}+\frac
26=3-R+\frac 26$.

The analysis of the remaining items is similar to the algorithm
Harmonic Fit \cite{LeeLee85}, and we get from these items a weight
of $C$ as shown below. Since $R=\frac 53+\frac C2$, the total
weight of a bin is at most $\frac{10}3-R+C=R$.

More specifically, let $\updelta=\frac{1}{M(M-1)}$. We show that
the total weight of remaining items is at most $C+\updelta$, so as
$M$ grows to infinity, the bounds tends to $R$.

The total size of these items is at most $\frac 1{21}$. Consider a
non-increasing sorted list of items of sizes not smaller than
$\frac{1}{M}$. Assume that all these items have sizes that are
reciprocals of integers by rounding them down (and keeping the
weights unchanged). Comparing this list to the list
$\frac{1}{t_1}$, $\frac{1}{t_2},\ldots,\frac{1}{t_f}$, where $f$
is the largest integer such that $t_f \leq M$, consider the first
item that is different. We will show and use a certain greedy
reciprocal sum property. If no item is different and the lengths
of the two lists are equal, we have a total weight of at most
$\sum_{i=1}^{f} \frac 1{t_i-1} + \frac{1}{t_{f+1}-1}\cdot
\frac{M}{M-1}=\sum_{i=1}^{f+1} \frac 1{t_i-1} +
\frac{1}{t_{f+1}-1}\cdot \frac{1}{M-1}< C+\frac{1}{M(M-1)}$.
Otherwise, the list may be shorter, or it is possible that there
is a different item instead of $\frac{1}{t_j}$ for some $j$. Since
the remaining total size excluding two items of size $\frac 13$,
two items of size $\frac 17$, and one item of every size
$\frac{1}{t_1},\frac{1}{t_2},\ldots,\frac{1}{t_{j-1}}$ (where this
list could be empty if $j=1$) satisfies that the remaining space
is below $\frac{1}{t_j-1}$, so the largest reciprocal of an
integers that is next in the list can be $\frac{1}{t_j}$. If the
next element is different, it must be smaller.

No matter whether the list is shorter or whether the $j$th item is
smaller than $\frac{1}{t_j}$, the weight of items of the space
smaller than $\frac{1}{t_j-1}$ is at most $\frac{t_j+1}{t_j}$
times their total size. We have a total weight of at most
$\frac{t_j+1}{t_j(t_j-1)} = \frac{1}{t_j-1}+\frac{1}{t_{j+1}-1}$,
since $t_{j+1}=t_j(t_j-1)+1$, and therefore the total weight does
not exceed $\sum_{i=1}^{j+1} \frac{1}{t_i-1} <C$.
\end{itemize}
\end{proof}

We conclude with the following theorem.
\begin{theorem}
Our algorithm with the above discussed type parameter set has an
asymptotic competitive ratio of at most $R$ for online \pr\ (with
1-items).\end{theorem}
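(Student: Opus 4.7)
The plan is to assemble the pieces already established in the preceding subsection into a single chain of inequalities. All of the real work (the partition of the input into $I_1$ and $I_2$, the lower bounds on the total weight of the algorithm's bins, and the per-bin upper bounds on the weight of any $\OPT$ bin) has been done in the preceding claims; what remains is to combine them and let $M$ grow.

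First I would fix an input $I$ and recall the partition $I = I_1 \cup I_2$ together with the set of at most $4M$ removed items (those in bins active at the moment $x$ arrives or at termination). The residual input $I'$ satisfies $\OPT(I') \le \OPT(I)$ because any feasible packing of $I$ restricts to a feasible packing of $I'$ in order, while the algorithm's cost satisfies $ALG(I) \le n_1 + n_2 + 4M$, where $n_j$ is the number of non-active bins holding items from $I_j$. Combining the two weight claims for the algorithm's side, $n_1 \le W + 3M$ and $n_2 \le V + 9M$ (with $W$ the total $w$-weight of $I_1$ and $V$ the total $v$-weight of $I_2$), gives the key bound $ALG(I) \le W + V + 16M$.

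Next I would bound $W + V$ from above using $\OPT(I')$. Define $f: I' \to \mathbb{R}$ by $f(a) = w(a)$ if $a \in I_1$ and $f(a) = v(a)$ if $a \in I_2$, so that $W + V = \sum_{a \in I'} f(a)$. By the separation claim, we may assume each bin of $\OPT(I')$ contains items from only one of $I_1, I_2$ (after removing weight-zero $1$-items from $I_2$ and, if the bin contains no $1$-item, analyzing it under the larger function $v$). Then the two per-bin claims established above guarantee that every bin of $\OPT(I')$ has total $f$-weight at most $R$. Summing over bins yields $W + V \le R \cdot \OPT(I') \le R \cdot \OPT(I)$.

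Combining the two bounds gives $ALG(I) \le R \cdot \OPT(I) + 16M$, and since $M$ is a fixed constant (chosen when the algorithm is specified), the asymptotic competitive ratio is at most $R$. I expect no real obstacle here, since the heavy lifting---the case analysis of bin configurations and the verification that densities combine to at most $R$---has already been carried out in the two preceding claims; the only thing to be careful about is ensuring the separation reduction is applied correctly so that the choice between $w$ and $v$ for each $\OPT$ bin is consistent with the per-bin upper bounds, in particular that bins without a $1$-item are analyzed under $v$ (where all non-$1$-item weights dominate their $w$-counterparts).
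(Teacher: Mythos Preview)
Your proposal is correct and follows exactly the paper's approach: the theorem is stated in the paper without a separate proof, since it is simply the combination of the preceding claims via the chain $ALG(I)\le W+V+16M\le R\cdot\OPT(I')+16M\le R\cdot\OPT(I)+16M$, which is precisely what you assemble.

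One small inconsistency to tidy up: you write both ``let $M$ grow'' at the outset and ``$M$ is a fixed constant'' at the end. For a \emph{fixed} $M$ the per-bin bound in the $I_2$ claim (specifically the $(2,0,0,0,2)$ case) is only $R+\frac{1}{M(M-1)}$, not $R$; the paper explicitly says ``as $M$ grows to infinity, the bound tends to $R$.'' So the accurate statement is that for each $M$ the asymptotic competitive ratio is at most $R+\frac{1}{M(M-1)}$, and hence the family of algorithms achieves ratios arbitrarily close to $R$. The paper itself is slightly informal on this point, but your write-up should pick one reading and stick with it.
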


\section{An online algorithm without 1-items}
The algorithm works similarly  to the previous one, but for items
of class $1$ the packing is different and they act as the
$1$-items of the previous algorithm. However, if there is no ready
bin, and there is a bin with exactly one such item (packed there
because there was no ready bin at the time of its arrival), the
new item of class $1$ is packed with another such item and not
into an empty bin. Thus, except for possibly one bin, bins
containing only items of class $1$ contain pairs of items, and
every such item can have a weight of $\frac 12$. Here, item $x$ is
defined to be last item of class $1$ assigned into either a new
bin or a bin already containing one such item (i.e., not into a
ready bin). We stress that here, a bin containing the one item of
class $1$ is not defined as ready.

We will use two parameters for the algorithm, denoted by $\theta$
and $\Delta$, and we let $Q=\frac{1}{1-2\theta}$. We fix the exact
values later, but we will ensure that the following properties
will hold: $0.319 \leq \Delta \leq 0.3195$, $0.03852 \leq \theta
\leq 0.03853$, and $1.0834 < Q  < 1.0835$.

Consider the next values. Let $M=100$. Let
$\beta_2=\frac{2-6\Delta}{1-2\Delta}$ (and therefore $0.22   \leq
\beta_2 \leq 0.24$), $\beta_i=2\theta(i+1) \leq 1$ for any integer
$3 \leq i \leq \frac{1}{2\theta}-1$ (in this case, we have
$i+1<2\theta$, i.e. $\frac{1}{i+1}>\frac{1}{2\theta}$). Otherwise
(for $i \leq M$) $\beta_i=1$, where $11<\frac{1}{2\theta}-1<12$.
We will prove that the (asymptotic) competitive ratio for an
algorithm of the type defined above with these parameters is at
most $1.44465$.

The formulas for the weights for the input part $I_2$ (denoted by
$v$) as functions of $\beta_i$ are the same as before for all
classes except for class $1$, for which the weights are equal to
$0$ (and class $0$ does not exist at all). Recall that class $0$
does not exist for the inputs studied in this section, since there
are no $1$-items.

For $I_1$, the weights $w$ of items of class $1$ are $\frac 12$.
For classes $2 \leq i \leq M-1$, instead of the weight
$\frac{1-\beta_i}{i+1-\beta_i}$, the weight is
$\frac{1-\beta_i/2}{i+1-\beta_i}$. For  items of class $M$, the
weight $\frac{1-\beta_M}{1-\beta_M/M}$ times the size is replaced
with $\frac{1-\beta_M/2}{1-\beta_M/M}$ times the size. The
difference is based on the property that the weight of class $1$
items is just $\frac 12$ so a bin that received such an item still
requires a weight $\frac 12$ from the other items.

We get
$w_2=\frac{1-\beta_2/2}{3-\beta_2}=\frac{1-\frac{1-3\Delta}{1-2\Delta}}{3-\frac{2-6\Delta}{1-2\Delta}}=\Delta$
and $v_2=\frac{1}{3-\frac{2-6\Delta}{1-2\Delta}}=1-2\Delta$, where
$0.361 \leq 1-2\Delta\leq 0. 362$.

For $ 3 \leq i \leq \frac{1}{2\theta}-1$ we have
$v_i=\frac{1}{i+1-\beta_i}=\frac{1}{i+1-2\theta(i+1)}=\frac{1}{i+1}\cdot\frac{1}{1-2\theta}=\frac{Q}{i+1}$
(and the density is at most $Q$), and
$w_i=\frac{1-\beta_i/2}{i+1-\beta_i}=\frac{1-\theta(i+1)}{i+1-2\theta(i+1)}=\frac
Q{i+1}-\frac{\theta}{1-2\theta}=\frac{1}{i+1}\cdot(Q-(i+1)\theta/(1-2\theta))=Q\frac{1-(i+1)\theta}{i+1}$
(and the density is at most $Q(1-(i+1)\theta)$).

For $ M-1 \geq i > \frac{1}{2\theta}-1$ we have
$v_i=\frac{1}{i+1-\beta_i}=\frac{1}{i}=\frac{1}{i+1}\cdot\frac{i+1}{i}=
\frac{1}{i+1}\cdot(1+\frac {1}{i}) \leq
\frac{1}{i+1}\cdot(1+\frac{1}{\frac{1}{2\theta}-1})=\frac{1}{i+1}\cdot\frac{1}{1-2\theta}
= \frac{Q}{i+1}$ (for a density of at most $(\frac{i+1}{i} \leq
Q$) and $w_i=\frac{1-\beta_i/2}{i+1-\beta_i}\leq
\frac{1/2}{i+1}\cdot\frac{1}{1-2\theta} = \frac{Q/2}{i+1} \leq
\frac{0.54175}{i+1}$ (for a density of at most $\frac Q2 \leq
0.54175$).
%  I suggest an $=\frac{1/2}{i+1-1} = \frac{1/2}{(i+1)(1-\frac{1}{i+1})}$ term after the second one.}

For class $M$, we have $\frac{1-\beta_M/2}{1-\beta_M/M}=\frac
{50}{99}<0.5051<Q/2$ and $\frac{1}{1-\beta_M/M}=\frac
{100}{99}<1.0102<Q$, for the weights $w$ and $v$, respectively.

For a set of items of classes in $\{j,j+1,\ldots,M\}$, such that
$\frac{1}{2\theta}-1< j <M$, the density is at most
$\frac{j+1}{2j}$ for $w$ and at most $\frac{j+1}j$ for $v$, due to
properties discussed earlier. Similarly, for a set of items of
classes in $\{j,j+1,\ldots,M\}$, such that $3 \leq j \leq
\frac{1}{2\theta}-1$, the density is at most $Q$ for $v$ and at
most $Q(1-(j+1)\theta)$ for $w$, because of earlier properties as
well, and since the function
$Q(1-(i+1)\theta)=\frac{1-(i+1)\theta}{1-2\theta}$ is a
monotonically decreasing function of $i$ (for a fixed value of
$\theta$).

%
%
%
% and $w_6=0.101075$ and $v_6=0.15957$,
%where $\frac{w_6}{1/7}=0.707525$ and $\frac{v_6}{1/7}=1.11699$.
%
%For $i=3,4,5,7,8,9$, we have $w_i=\frac{1-\beta_i/2}{i+1-\beta_i}=
%\frac{1-(i+1)/22}{i+1-(i+1)/11}=\frac{21-i}{20}\cdot \frac 1{i+1}$
%and
%$v_i=\frac{1}{i+1-\beta_i}=\frac{1}{i+1-(i+1)/11}=\frac{11}{10}
%\cdot \frac 1{i+1}$. For classes $i=10,11,\ldots, 99$, we have
%$w_i=\frac{1/2}{i}=\frac{1}{2i}=\frac{i+1}{2i} \cdot \frac
%{1}{i+1}=(\frac 12+\frac{1}{2i}) \cdot \frac {1}{i+1}$ and
%$v_i=\frac 1i = \frac{i+1}{i}\cdot \frac{1}{i+1} \leq
%\frac{11}{10}\cdot \frac{1}{i+1}$. For an item of class $M$, the
%size is multiplied by densities of $\frac {50}{99}$ and $\frac
%{100}{99}$, respectively, to get the weights.

For $I_1$, the largest weight of any item is $\frac 12$, according
to $w$ and for $I_2$, the largest weight of any item according to
$v$ is $1-2\Delta$. These values will be used as upper bounds for
the last items of bins.

We bound the total weight of items of a fixed bin that are not the
last item of the bin (that is, without the exceeding item of the
bin). For $I_2$, the ratio between the weight and the size never
exceeds $Q$ except for items of class $2$. The number of items of
class $2$ is at most two, and we use multiplier $Q$ for other
items, so the largest weight is below $(1-2\Delta)+\max\{ \frac
Q3+2(1-2\Delta),2\cdot\frac Q3+(1-2\Delta), Q \}$. This value is
calculated later using the exact values of $\theta$ (and $Q$) and
$\Delta$.

For $I_1$, we consider several cases.

We start with the case where there is no item of size $\frac 12$
(that is not the last item). In this case there can be at most two
items of size $\frac 13$. We use the property that the density for
all other items is at most
$\max\{Q/2,Q(1-4\theta)\} \leq \max\{0.55,\frac{1-4\theta}{1-2\theta}\}$. The function $\frac{1-4\theta}{1-2\theta}$ is a
monotonically decreasing function of $\theta$ and therefore the
density is at most $0.91653$. For items of size $\frac 13$, the density is at most $3\Delta \leq 0.9585$.
The total weight is at most $0.5+0.9585\cdot\frac 23+0.91653\cdot \frac 13 \leq 1.44451$.

%
%
%If there are no items of size
%$\frac 13$, the total weight is at most $0.91653+1/2<1.42$.
%
%Consider the case that there is one item of size $\frac 13$. The
%density of items of sizes $\frac 15$ or less is at most
%$\frac{1-5\theta}{1-2\theta}\leq 0.8748$. There can be at most two
%items of size $\frac 14$, and we get at most $\frac 12+0.32+
%0.91653\cdot \frac 12+ 0.8748 \cdot \frac 16 < 1.425$.
%
%If there are two such items, consider the remaining items, whose
%total size is below $\frac 13$. If there is no item of size $\frac
%14$ and no item of size $\frac 15$, the density for all other
%items is at most $\frac{1-6\theta}{1-2\theta}\leq 0.847$ for a
%total of $\frac 12+2\cdot 0.3215  +0.847\cdot \frac 13 < 1.43$.
%Otherwise (in addition to the last item of size $\frac 12$), there
%are two items of sizes $\frac 13$, an item of size $\frac 14$ or
%an item of size $\frac 15$, and smaller items in the remaining
%space. In the first option, the remaining items have total size
%below $\frac 1{12}$, and in the second case, the remaining items
%have total size below $\frac {2}{15} < \frac 17$. The densities of
%remaining items do not exceed $0.5784$ and $0.77$, respectively.
%So the total weight is at most $\frac 12+2\cdot
%0.3215+\max\{0.886/4+0.5784/12,0.847/5+0.77\cdot (2/15)\}<1.42$.

Otherwise, there are two items of size $\frac 12$ (the last item
and another item), and the remaining items have total size below
$\frac 12$. If there are no items of size $\frac 13$ and $\frac
14$, the density of remaining items is at most
$Q(1-5\theta)<0.875$ for a total weight of at most $1+0.875 \cdot
\frac 12 < 1.44$. There can be just one item of size $\frac 13$ or
$\frac 14$. If its size is $\frac 14$, the total size of other
items is below $\frac 14$. If all other items have sizes of at
most $\frac 16$, the total weight is at most $1+0.91653\cdot \frac
14+0.8331\cdot \frac 14<1.44$. Otherwise, there is also an item of
size $\frac 15$, and the remaining items have total size below
$\frac{1}{20}$, and sizes of at most $\frac 1{21}$. Their
densities do not exceed $\frac{21}{40}$. The total weight in this
case is at most $1+0.91653\cdot \frac 14+0.875\cdot \frac
15+\frac{21}{40}\cdot \frac 1{20}<1.44$.

We are left with the case that there is also an item of size
$\frac 13$, and the total size of other items is below $\frac 16$.
If there is no item whose size is in the set $\{\frac 17,\frac 18\}$ (i.e., of class $6$ or $7$, where there is at most one such item), the density
of other items is at most $0.707853$, and the total weight is at
most $1+0.3195+0.707853\cdot \frac 16 <1.44$.

%If there is an item of size $\frac 1{10}$, the remaining items
%have total size $\frac 1{15}$ and densities at most $0.54$, for a
%total weight at most $1+0.3215+0.69337\cdot \frac 1{10}+0.54\cdot
%\frac 1{15} <1.43$.
%
%If there is an item of size $\frac 19$, the remaining items have
%total size $\frac 1{18}$ and densities at most $0.54$, for a total
%weight at most $1+0.3215+0.7317\cdot \frac 19+0.54\cdot \frac
%1{18}<1.433$.

If there is an item of size $\frac 18$, the remaining items have
total size below $\frac 1{24}$, so their sizes do not exceed
$\frac 1{25}$,  and the densities are at most $\frac{25}{48}$, for
a total weight at most $1+\Delta+Q\cdot
\frac{1-8\theta}{8}+\frac{25}{48}\cdot \frac{1}{24}< 1.435$.

Finally, we have the case where there is an item of size $\frac
17$,  the remaining total size is below $\frac{1}{42}$, the sizes
do not exceed $\frac{1}{43}$, and densities are at most
$\frac{43}{84}$. The total weight is at most $1+\Delta+Q\cdot
\frac{1-7\theta}{7}+\frac{43}{84}\cdot \frac 1{42}$.

%The specific values we use are $\theta=0.0357742745$,
%$\Delta=0.320489651825869$, and therefore
%$1.07706224049<Q<1.0770622405$. We have $\frac Q3<0.35902074683$
%and $1-2\Delta<0.359020696348263$, so $(1-2\Delta)+\max\{ \frac
%Q3+2(1-2\Delta),2\frac Q3+(1-2\Delta), Q \}<1.4360828863565$. We
%also have $1+\Delta+Q\cdot \frac{1-8\theta}{8}+\frac{25}{48}\cdot
%\frac{1}{24}<$.

The specific values we use are $\theta=0.038526551295994$,
$\Delta=0.319418991002646$, and therefore
$1.0834859544<Q<1.08348595441$.  We
have $1+\Delta+Q\cdot \frac{1-8\theta}{8}+\frac{25}{48}\cdot
\frac{1}{24}<1.435$, and  $1+\Delta+Q\cdot
\frac{1-7\theta}{7}+\frac{43}{84}\cdot \frac 1{42}<1.44465$. We have $\frac Q3<0.3611619848005$
and $1-2\Delta<0.361162018$, so $(1-2\Delta)+\max\{ \frac
Q3+2(1-2\Delta),2\cdot\frac Q3+(1-2\Delta), Q \}<1.44465$.

%1.08348595440148
%objective value:                     1.44464794464794
%R                                    1.44464794464794   (obj:1)
%Q                                    1.08348595440143   (obj:0)
%q                                   0.038526551295994   (obj:0)
%P                                  0.0417429772007168   (obj:0)
%D                                   0.319418991002646   (obj:0)
%A                                    0.01218820861678   (obj:0)
%B                                  0.0217013888888889   (obj:0)

\subsection*{A simple bad example for algorithms of this class}
Next, we show that our analysis of algorithms of this class, without
$1$-items, is almost tight. Let $M>100000$. There will be no
items smaller than $\frac 1{M}$ and for simplicity we use
parameters $\gamma_i$. Similarly to the example for algorithms with $1$-items, the value $\gamma_i$ denotes for class $i$ the
fraction of items of small bins. We ignore rounding issues in this
example. Let $N>0$ be a very large integer.

The input is as follows, consisting of $13$ batches.

\begin{itemize}
\item $\frac {12N}{90901}$ items of size $\frac{1}{90903}$.

\item $\frac {2N}{450}$ items of size $\frac{1}{452}$.

\item $\frac {12N}{300}$ items of size $\frac{1}{302}$.

\item $\frac {80N}{41}$ items of size $\frac{1}{43}$.

\item $16N$ items of size $\frac{1}{7}$.

\item $16N$ items of size $\frac 13$.

\item $32N$ items of size $\frac 12$.

\item $6N$ items of size $\frac{1}{90903}$.

\item $N$ items of size $\frac{1}{452}$.

\item $6N$ items of size $\frac{1}{302}$.

\item $40N$ items of size $\frac{1}{43}$.

\item $40N$ items of size $\frac{1}{7}$.

\item $8N$ items of size $\frac 13$.
\end{itemize}

To obtain an offline packing, pack $16N$ bins, where every bin has
at most one item of the first four batches (such that all items
are packed), one item of size $\frac 17$, one item of size $\frac
13$, and one item of size $\frac 12$. All these bins have total
sizes strictly below $1$, and each one receives another item of
size $\frac 12$. Afterwards, $N$ bins receive (each) one item of
size $\frac 1{452}$, four items of size $\frac 1{43}$, four items
of size $\frac 17$, and one item of size $\frac 13$, for a total
size below $1$. Another $6N$ bins receive (each) one item of size $\frac{1}{90903}$, one item of size
$\frac 1{302}$, six items of size $\frac 1{43}$, and six items of
size $\frac 17$, for a total size below $1$. Each of these $7N$
bins also receives one item of size $\frac 13$ as the last item.

We find $\OPT \leq 23N$.

Consider the action of the algorithm. For the first seven batches,
it has the following bins:

\begin{itemize}
\item $(\gamma_{90902} \cdot \frac {12N}{90901})/90902$ small bins
of class $90902$ and $((1-\gamma_{90902}) \cdot \frac
{12N}{90901})/90903$ large bins of this class.

\item $(\gamma_{451} \cdot \frac N{225})/451$ small bins of class
$451$ and $((1-\gamma_{451}) \cdot \frac N{225})/452$ large bins
of this class.

\item $\gamma_{301} \cdot \frac{N}{25} /301$ small bins of class
$301$ and $(1-\gamma_{301}) \cdot \frac{N}{25} /302$ large bins of
this class.

\item $\gamma_{42} \cdot \frac{80N}{41} /42$ small bins of class
$42$ and $(1-\gamma_{42}) \cdot \frac{80N}{41} /43$ large bins
of this class.

\item $\gamma_{6} \cdot 16N /6$ small bins of class $6$ and
$(1-\gamma_{6}) \cdot 16N/7$ large bins of this class.

\item $\gamma_{2} \cdot 16N /2$ small bins of class $2$ and
$(1-\gamma_{2} )\cdot 16N/3$ large bins of this class.

\end{itemize}

Items of size $\frac 12$ are packed into the existing small bins,
one per bin, and the remaining items are packed in pairs. The
total number of bins created before these items are packed is
below $11N$, and every small bin has items of total size above
$\frac 12$ but below $1$.

%We claim that the number of small bins is at most $N$. (On second
%thought if it is large it is also fine, so forget it, and we do
%not need the simple example.)

After rearranging, we see that the number of bins after the first
seven batches are presented is:
$$\frac 12 \left(\frac{\gamma_{90902} \cdot {12N}}{90901 \cdot 90902}+\frac{\gamma_{451} \cdot N}{225 \cdot 451}+\frac{\gamma_{301} \cdot  N}{25 \cdot 301}+\frac{\gamma_{42}\cdot {80N}}{41 \cdot 42}
+\frac{\gamma_{6} \cdot 16N}6 +\frac{\gamma_{2} \cdot
16N}2\right)+16N$$
$$+\frac{(1-\gamma_{90902}) \cdot {12N}}{90901 \cdot 90903}+\frac{(1-\gamma_{451}) \cdot  N}{225 \cdot 452}+\frac{(1-\gamma_{301}) \cdot  N}{25 \cdot 302}
+\frac{(1-\gamma_{42}) \cdot {80N}}{41\cdot 43}$$
$$+\frac{(1-\gamma_{6}) \cdot 16N}7+\frac{(1-\gamma_{2} )\cdot 16N}3
 \ .  $$

For the second part of the input, the next bins are built:

\begin{itemize}
\item $\gamma_{90902} \cdot  6N/90902$ small bins
of class $90902$ and $(1-\gamma_{90902}) \cdot
6N/90903$ large bins of this class.

\item $\gamma_{451}  N/451$ small bins of class
$451$ and $(1-\gamma_{451}) N/452$ large bins
of this class.

\item $\gamma_{301} \cdot 6N/301$ small bins of class
$301$ and $(1-\gamma_{301}) \cdot 6N/302$ large bins of
this class.

\item $\gamma_{42} \cdot  40N/42$ small bins of class
$42$ and $(1-\gamma_{42}) \cdot 40N/43$ large bins
of this class.

\item $\gamma_{6} \cdot 40N /6$ small bins of class $6$ and
$(1-\gamma_{6}) \cdot 40N/7$ large bins of this class.

\item $\gamma_{2} \cdot 8N /2$ small bins of class $2$ and
$(1-\gamma_{2} )\cdot 8N/3$ large bins of this class.

\end{itemize}

%The numbers of large bins for the different classes are:
When we compute the total number of bins, we can write it down as a weighted sum of the parameters $\gamma_i$ plus some constant term (independent of these parameters).  In this weighted sum the multipliers of $\gamma_i$'s are as follows.

\begin{itemize}

\item The multiplier of $\gamma_{90902}N$ is $\frac{6}{90901\cdot
90902}-\frac{12}{90901\cdot
90903}+\frac{6}{90902}-\frac{6}{90903}=0$.

%%%\frac{6\cdot 90903-12\cdot 90902+6\cdot90901}{90901\cdot90902\cdot90903}=0$.

\item The multiplier of $\gamma_{451}N$ is $\frac{1}{2\cdot
225\cdot 451} -\frac{1}{225\cdot 452}+\frac 1{451}-\frac
1{452}=0$.

\item The multiplier of $\gamma_{301}N$ is $\frac{1}{50\cdot
301}-\frac{1}{25\cdot 302}+\frac 6{301}-\frac 6{302}=0$.

\item The multiplier of $\gamma_{42}N$ is $\frac{40}{41\cdot
42}-\frac{80}{41\cdot 43}+\frac{20}{21}-\frac{40}{43}=0$.

\item The multiplier of $\gamma_{6}N$ is $\frac
43-\frac{16}{7}+\frac{20}3-\frac{40}7=0$.

\item The multiplier of $\gamma_{2}N$ is $4-\frac{16}{3}+4-\frac
83=0$.

\end{itemize}

Thus, we are left with the constant term where we used that it
can be seen after rearranging that the above values are all
zeroes. That is,  there are
$N(16+\frac{12}{90901\cdot 90903}+\frac{1}{225\cdot
452}+\frac{1}{25\cdot 302}+\frac{80}{41\cdot
43}+\frac{16}7+\frac{16}3+\frac{6}{90903}+\frac{1}{452}+\frac{6}{302}+\frac{40}{43}+\frac{40}7+\frac
83)$ bins. The approximate number of bins is $32.9978979841943$,
and the resulting ratio is $1.4346912167$.

%
%\begin{itemize}
%\item For class $462$, $((1-\gamma_{462}) \cdot
%N)/463+((1-\gamma_{462}) \cdot \frac N{462})/463=((1-\gamma_{463})
%\cdot N)/462$.
%
%\item For class $21$, $((1-\gamma_{21}) \cdot
%N)/22+((1-\gamma_{21}) \cdot \frac N{21})/22=((1-\gamma_{21})
%\cdot N)/21$.
%
%\item For class $6$ and $((1-\gamma_{6}) \cdot
%2N)/7+((1-\gamma_{6}) \cdot \frac N{3})/7=((1-\gamma_6)\cdot
%N)/3$.
%
%\item For class $2$ and $((1-\gamma_{2}) \cdot
%2N)/3+((1-\gamma_{2}) \cdot N)/3=(1-\gamma_2)\cdot N$.
%
%
%\item For class $1$ and $((1-\gamma_{1}) \cdot
%N)/2+((1-\gamma_{1}) \cdot N)/2=(1-\gamma_1)\cdot N$.
%\end{itemize}
%
%The numbers of bins for the different classes are (excluding small
%bins of the first part of the input).
%
%
%\begin{itemize}
%\item For class $462$, $((1-\gamma_{463}) \cdot
%N)/462+(\gamma_{462} \cdot N)/462=N/462$.
%
%\item For class $21$, $((1-\gamma_{21}) \cdot N)/21+(\gamma_{21}
%\cdot N)/21=N/21$.
%
%\item For class $6$ and $((1-\gamma_6)\cdot N)/3+(\gamma_{6} \cdot
%2N)/6=N/3$.
%
%\item For class $2$ and $(1-\gamma_2)\cdot N+(\gamma_{2} \cdot
%2N)/2=N$.
%
%
%\item For class $1$ and $(1-\gamma_1)\cdot N+(\gamma_{1} \cdot
%N)=N$.
%\end{itemize}

%%In total we have $3N+N/3+N/21+N/462=1563/462$.

%The ratio is at least $\frac{1563}{924}\approx 1.691558441558442$.

\section{Lower bounds on the asymptotic competitive ratio for \pr}
In this section we improve the known lower bounds on the
asymptotic competitive ratio slightly. The main goal of this
section is to provide complete analytic proofs for these bounds,
as previous work stated them without proving them analytically.
Specifically, the analysis was done using packing patterns, and
the number of such patterns can be large. Since we provide
analytic proof, we can use arbitrarily long sequences. The
improvement results also from using a modified input sequence, as
in \cite{BBG} instead of using a sequence similar to that of
\cite{Vliet92}.

Let $M,N \geq 3$ be integers, where $M$ is divisible by
$(7^{N+1})!$.

An item of type $(i,7)$ for $1 \leq i \leq N$ has size
$\theta_i=\frac{1}{7^i}$. An item of type $j$ for $j=1,2,3$ has
size $\phi_j=\frac 1j$. In other parts of this work items of type
$1$ are called $1$-items (and class $0$),  but for consistency of
the current part we call them items of type $1$ here.

Items are always presented sorted by non-decreasing size and in
batches of $M$ identical items (except for one case where there is
a batch of $2M$ identical items, which can be seen as two batches
with $M$ items each, but we analyze it as a single batch). For $1
\leq k \leq N$, input $I_k$ consists of the first $N-k+1$ batches, where
every batch has $M$ items, and these items are of types $(N,7)$,
$(N-1,7), \ldots, (k,7)$. Input $J_3$ consists of $I_1$ followed
by a batch of $M$ items of type $3$. Input $J_2$ consists of $J_3$
followed by a batch of $M$ items of type $2$. Input $J_{22}$
consists of $J_3$ followed by $2M$ items of type $2$, and this is
the unique case where the batch has $2M$ items. Input $J_1$
consists of $J_2$ followed by $M$ items of size $1$. We say that
input $I_{k_1}$ is a (proper) prefix of input $I_{k_2}$ if $k_2 >
k_1$, and similarly, for every $k$, input $I_k$ is a prefix of
$J_3$, $J_2$, $J_{22}$, and $J_1$, additionally, $J_3$ is a prefix
of $J_2$, $J_{22}$, and $J_1$, and finally $J_2$ is a prefix of
$J_1$.

Input $J_1$ and its prefixes are used for proving a lower bound on
the asymptotic competitive ratio of any algorithm where $1$-items
are possible, and $J_{22}$ and its prefixes are used for the case
without $1$-items.

Let $\Theta_k=\sum_{i=k}^N \theta_i$. We use
$\mu=\theta_N=\Theta_N=\frac{1}{7^N}$, which is the size of the
smallest item. It holds that $\theta_k=7^{N-k} \mu$ for
$k=1,2,\ldots,N$.

We start with finding upper bounds on optimal costs. Let $M'_k=
\frac{M\cdot \Theta_k}{1-\mu+\theta_k}$ for any $1 \leq k \leq N$.

\begin{lemma}
We have $\OPT(I_k) = M'_k$ for any $1 \leq k \leq N$.
\end{lemma}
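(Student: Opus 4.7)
The plan is to prove the two inequalities $\OPT(I_k)\ge M'_k$ and $\OPT(I_k)\le M'_k$ separately.

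For the lower bound I would use a volume argument. The total size of all items in $I_k$ is $M\Theta_k$, since there are $M$ items of each size $\theta_k,\theta_{k+1},\ldots,\theta_N$. In any feasible packing, each bin has load at most $1-\mu+\theta_k$: a bin without an exceeding item has load strictly below $1$; a bin with an exceeding item of size $s\le\theta_k$ had load strictly below $1$ right before the exceeding item was placed, and since all item sizes are integer multiples of $\mu$ so is this prior load, hence it is at most $1-\mu$, giving total load at most $1-\mu+s\le 1-\mu+\theta_k$. Summing over bins gives $\OPT(I_k)\cdot(1-\mu+\theta_k)\ge M\Theta_k$, i.e.\ $\OPT(I_k)\ge M'_k$. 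The divisibility $(7^{N+1})!\mid M$ makes $M'_k$ an integer, so the inequality yields the desired lower bound.

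For the upper bound I would construct an explicit packing with exactly $M'_k$ bins, each of load $1-\mu+\theta_k$ and with exceeding item of size $\theta_k$. Each bin is encoded by a non-negative integer tuple $(a_k,\ldots,a_N)$ with $a_k\ge 1$ and $\sum_{i=k}^N a_i\theta_i=1-\mu+\theta_k$; the validity of the packing \emph{as a sequence} is automatic because items arrive in non-decreasing size order, so the load just before the final (largest) item of each bin equals $1-\mu<1$. To match the global target of $M$ items of each size across all bins, I would combine a short list of such configurations related by the load-preserving ``trade'' that replaces one item of size $\theta_i$ by seven items of size $\theta_{i+1}$ (since $\theta_i=7\theta_{i+1}$). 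For example, two types already suffice in the simplest cases: the canonical tuple $(7^k,6,6,\ldots,6)$ and the once-traded tuple $(7^k-1,13,6,\ldots,6)$, both of load $1-\mu+\theta_k$; their mix is determined by a small linear system whose coefficients are bounded by $7^{N+1}$, so the assumption $(7^{N+1})!\mid M$ guarantees a non-negative integer solution.

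The main obstacle is the upper bound, specifically verifying that the trade-based linear system admits a non-negative integer solution for every $k$. For larger $k$ and $N$ one may need a whole family of configurations (one per ``trade level''), and one must confirm all mixing coefficients are non-negative. I would handle this by an iterative construction that applies trades top-down through the sizes $\theta_k,\theta_{k+1},\ldots,\theta_{N-1}$, at each stage cancelling the current imbalance between used and available items of a given size; the divisibility of $M$ by $(7^{N+1})!$ ensures that the intermediate counts remain integers throughout.
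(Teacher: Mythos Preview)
Your lower bound is exactly the paper's argument: load per bin is at most $1-\mu+\theta_k$ because sizes are multiples of $\mu$, and the total load is $M\Theta_k$. Nothing to add there.

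For the upper bound you are on the right track but you stop at precisely the point that requires work. Your claim that the canonical tuple $(7^k,6,\dots,6)$ and one traded variant ``already suffice in the simplest cases'' is not quite right: with only those two configurations, the counts for sizes $\theta_{k+2},\dots,\theta_N$ are all $6M'_k$, which equals $M$ only when $M'_k=M/6$, and this fails in general. So you genuinely need the full family of trades, and you correctly flag non-negativity of the mixing coefficients as the obstacle --- but you do not verify it.

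The paper's construction sidesteps this verification entirely by a change of viewpoint that you may find instructive. Rather than mixing a family of configurations and solving a linear system, the paper fixes a common \emph{base} in every bin (six items of each size $\theta_i$ for $i>k$, plus one final item of size $\theta_k$), and then observes that the remaining capacity in each bin is exactly $(7^k-1)\theta_k$. It fills this capacity with ``blocks'' of size $\theta_k$, where a block is either a single $\theta_k$-item or a group of $7^{i-k}$ items of size $\theta_i$ for some $i>k$. The leftover items after the base is laid down are $M-6M'_k$ of each size $\theta_i$ ($i>k$) and $M-M'_k$ of size $\theta_k$; these are partitioned into blocks, and a one-line calculation shows the total block count is exactly $(7^k-1)M'_k$. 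Non-negativity is then trivial: it reduces to $M'_k<M/6$, which follows from $\Theta_k<1/6$ and $1-\mu+\theta_k\ge 1$.

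So your trade idea is essentially the same combinatorics --- a trade of one $\theta_i$ for seven $\theta_{i+1}$'s is exactly swapping one block type for another --- but the block abstraction turns your unresolved iterative non-negativity check into a single inequality $M'_k<M/6$. If you want to complete your route, you would need to show inductively that after fixing sizes $\theta_k,\dots,\theta_{k+j-1}$ to $M$ each, the current count at level $k+j$ is at least $M$; this is doable but noticeably messier than the paper's argument.
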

\begin{proof}
We have
$1-\mu+\theta_k=1-\frac{1}{7^N}+\frac{1}{7^k}=\frac{1}{7^N}\cdot(7^N-1+7^{N-k})=\mu\cdot(7^N-1+7^{N-k})
\geq 1$ and $\Theta_k=\sum_{i=k}^N
\frac{1}{7^{i}}=\frac{1}{7^N}\cdot \sum_{i=k}^N
7^{N-i}=\mu\cdot\sum_{i=0}^{N-k} 7^i $. We have that $M$ is
divisible by $7^N-1+7^{N-k}$, because it is lower than
$7^{N+1}$, and we get that $\frac{M\cdot
\Theta_k}{1-\mu+\theta_k}$, which will be a number of bins,  is an
integer. Since $\frac{1}{\mu}=7^N$, this integer is divisible by
$7^N$.

 We have $\sum_{i=0}^{N-k} 7^i = \frac{7^{N-k+1}-1}6 <
\frac{7^N}6$. Thus, for any $1 \leq k \leq N$,  we get
$M'=\frac{M\cdot \Theta_k}{1-\mu+\theta_k}<\frac{M}6$.

An upper bound on $\OPT(I_k)$ follows from the property that every
bin can contain item of total size strictly below $1$ and one
additional item. The size of the additional item is at most
$\theta_k$. All items have sizes that are integer multiples of
$\mu$, so the total size of a set of items that is strictly below
$1$ is in fact at most $1-\mu$. Next, we will show that it is
possible to produce such a packing.

In the case $k=N$, the claim is $\OPT(I_N)=\frac{M\cdot
\Theta_N}{1-\mu+\theta_k}=M\cdot \mu=\frac{M}{7^N}$, which is
achieved by packing $7^N$ items (each of size $\mu$) into every
bin. To prove the claim for other cases, consider a fixed value $1
\leq k \leq N-1$.

We start the packing of $M'$ bins as follows. Pack six items of
each size $\theta_i$ for every $k+1 \leq i \leq N$ into each bin.
We packed $6M'$ items so far, therefore there are still $M-6M'$
unpacked items for each size $\theta_i$. For the remaining items
of these sizes, create a partition into subsets called blocks. For
every $i=k+1,k+2,\ldots,N$, we create blocks from $M-6M'>0$ items,
so there are separate blocks for  each size $\theta_i$. The number
of items of such a block is $7^{i-k}$. Their total size is
$7^{i-k} \cdot \theta_i = 7^{i-k} \cdot \frac
{1}{7^i}=\frac{1}{7^k}=\theta_k$. The number of blocks is
\begin{equation}(M-6M') \sum_{i=k+1}^N \frac {1}{7^{i-k}}=7^k (M-6M')
\Theta_{k+1}\label{eqq2}\end{equation}

We add $M-M'$ blocks consisting of a single item of size
$\theta_k$, where the remaining $M'$ items of this size will be
the last items of the $M'$ bins. Every bin receives $7^k-1$ blocks
of size $\theta_k$. We show that the total size of items excluding
the last item is strictly below $1$ (it is equal to $1-\theta_N$),
and that the number of packed blocks is exactly the number of
blocks.

Indeed, we have $6\cdot \sum_{i=k+1}^N \frac 1{7^i} + (7^k-1)\cdot
\frac 1{7^k}=6 \cdot \frac{7^{N-k}-1}{6 \cdot
7^N}+1-\frac{1}{7^k}=1-\frac{1}{7^N}$, and the total sizes packed
into bins are as claimed. The number of blocks is $7^k (M-6M')
\Theta_{k+1}+M-M'$, where the first term is exactly
(\ref{eqq2}), and we show that this number of block is equal to $(7^k-1)M'$. This
property is equivalent to
$M'(1+6\Theta_{k+1})=M(\Theta_{k+1}+\frac{1}{7^k})$. Therefore, as
$\Theta_{k+1}+\frac{1}{7^k}=\Theta_k$, and by definition
$M'=\frac{M\Theta_k}{1-\mu+\theta_k}$, we will show that
$1+6\Theta_{k+1}=1-\mu+\theta_k$, or alternatively,
$6\Theta_{k+1}+\mu=\theta_k$ holds for $k<N$. Indeed we have
$6\Theta_{k+1}+\mu= 6(\sum_{i=k+1}^N \frac 1{7^i})+\mu=
7(\sum_{i=k+1}^N \frac 1{7^i})-(\sum_{i=k+1}^N \frac
1{7^i})+\mu=\sum_{i=k+1}^N \frac 1{7^{i-1}}+\mu-\sum_{i=k+1}^N
\frac 1{7^i}=\sum_{i=k}^N \frac 1{7^{i}}-\sum_{i=k+1}^N \frac
1{7^i}=\frac{1}{7^k}=\theta_k$.
\end{proof}

\begin{corollary}
We have $\OPT(I_k) \leq \frac{7M}{6\cdot(7^k+1)}$ for any $1 \leq
k \leq N$.
\end{corollary}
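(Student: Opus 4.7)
The plan is to combine the exact expression for $\OPT(I_k)$ provided by the preceding lemma with a direct algebraic simplification. By the lemma, $\OPT(I_k) = M'_k = \frac{M \cdot \Theta_k}{1 - \mu + \theta_k}$, so the corollary amounts to the numerical inequality
$$\frac{\Theta_k}{1 - \mu + \theta_k} \leq \frac{7}{6(7^k+1)},$$
which is independent of $M$. My proposal is therefore to reduce this to a clean inequality in powers of $7$ and verify it in one line.

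First, I would rewrite both sides of the ratio in a common form. Using the geometric sum, $\Theta_k = \sum_{i=k}^{N} 7^{-i} = \frac{7^{N-k+1} - 1}{6 \cdot 7^N}$. Similarly, as already noted in the proof of the lemma, $1 - \mu + \theta_k = \frac{7^N - 1 + 7^{N-k}}{7^N}$. Substituting these expressions, the $7^N$ factors cancel and the desired bound becomes
$$\frac{7^{N-k+1} - 1}{6\,(7^N - 1 + 7^{N-k})} \;\leq\; \frac{7}{6\,(7^k+1)}.$$

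Clearing denominators gives the equivalent inequality $(7^{N-k+1} - 1)(7^k + 1) \leq 7(7^N - 1 + 7^{N-k})$. Expanding the left-hand side yields $7^{N+1} + 7^{N-k+1} - 7^k - 1$, while the right-hand side equals $7^{N+1} - 7 + 7^{N-k+1}$. After cancellation, the inequality collapses to $-7^k - 1 \leq -7$, i.e., $7^k \geq 6$, which holds for every integer $k \geq 1$.

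I do not anticipate a real obstacle; the only subtlety is bookkeeping the exponents carefully so that cancellations line up. If one prefers an even slicker route, one can first bound $\Theta_k < \frac{7}{6 \cdot 7^k}$ from the geometric sum and observe $1 - \mu + \theta_k \geq \frac{7^k + 1}{7^k} \cdot \frac{7^N - 1}{7^N} \cdot (\text{a correction})$, but this requires more care than the direct calculation above, so I would favor the straightforward reduction just described.
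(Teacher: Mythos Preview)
Your proposal is correct and follows essentially the same route as the paper's own proof: reduce via the lemma to $\frac{\Theta_k}{1-\mu+\theta_k}\le \frac{7}{6(7^k+1)}$, substitute the closed forms for $\Theta_k$ and $1-\mu+\theta_k$, clear denominators, expand, and arrive at the equivalent inequality $7^k\ge 6$. The only cosmetic difference is that the paper multiplies through by $7^{N+1}$ on the right-hand side at an intermediate step, but the algebra and the final reduction are identical.
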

\begin{proof}
We prove that $\frac{\Theta_k}{1-\mu+\theta_k} \leq
\frac{7}{6\cdot(7^k+1)}$ holds.

This is equivalent to $6\Theta_k(7^k+1) \leq
7(1-\frac{1}{7^N}+\frac 1{7^k})$. By $\Theta_k=\frac {1}{7^N}
\cdot \frac{7^{N-k+1}-1}6$, which is the sum of a geometrical
series with common ratio 7, we find that it is required to prove
that $(7^{N-k+1}-1)(7^k+1) \leq 7^{N+1}(1-\frac 1{7^N}+\frac
1{7^k})$. This is equivalent to $7^{N+1}-7^k+7^{N-k+1}-1 \leq
7^{N+1}-7+7^{N-k+1}$ and to $7^k\geq 6$, which holds for any $k
\geq 1$.
\end{proof}

\begin{lemma}
We have $\OPT(J_3) \leq \frac{3M}{8}$, $\OPT(J_2) \leq
\frac{2M}3$, $\OPT(J_{22}) \leq M$, and $\OPT(J_{1}) \leq M$.
\end{lemma}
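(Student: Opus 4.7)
The plan is to give an explicit feasible packing for each of the four inputs that hits the stated bound. The key arithmetic observation is that since $7^N \equiv 1 \pmod 6$, the integer $(7^N-1)/6$ has the convenient base-$7$ expansion $\underbrace{11\cdots 1}_{N}$, and so a bundle consisting of exactly one item of each small size $\theta_1,\ldots,\theta_N$ has total size $(1+7+\cdots+7^{N-1})\mu = (7^N-1)/6 \cdot \mu = 1/6 - \mu/6$. Doubling or quadrupling a bundle gives total sizes $1/3-\mu/3$ and $2/3-2\mu/3$, which slot neatly below $1/3$ and $2/3$. This means a bundle (or pair of bundles) plus one item of size $1/3$ and one or two items of size $1/2$ almost exactly fills a bin, hitting a load just below $3/2$.

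Concretely I would describe the packings as follows. For $J_3$, I would use $M/4$ \emph{type-A} bins, each holding two of every small size (total small load $1/3-\mu/3$) followed by three items of size $1/3$, and $M/8$ \emph{type-B} bins, each holding four of every small size (total small load $2/3-2\mu/3$) followed by two items of size $1/3$; the item counts check: $2(M/4)+4(M/8)=M$ of each small size and $3(M/4)+2(M/8)=M$ items of size $1/3$, for a total of $M/4+M/8 = 3M/8$ bins. For $J_2$, I would partition $2M/3$ bins into $M/3$ \emph{double-half} bins (one of each small size, one $1/3$-item, two $1/2$-items) and $M/3$ \emph{single-half} bins (two of each small size, two $1/3$-items, one $1/2$-item); again each item class is used up exactly. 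For $J_{22}$ I would use $M$ identical bins, each containing one of each small size, one $1/3$-item, and two $1/2$-items. For $J_1$ I would use $M$ identical bins, each additionally receiving a $1$-item at the end.

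The remaining task is to verify the sequence condition: items enter each bin in input order (smallest $\theta_i$ first, then larger small items, then $1/3$-items, $1/2$-items, and possibly a $1$-item), and the load must be strictly below $1$ just before each item. The base-$7$ arithmetic makes this routine; for instance in a type-A bin of $J_3$ the loads just before the three $1/3$-items are $1/3-\mu/3$, $2/3-\mu/3$, and $1-\mu/3$, all strictly below $1$, with the last $1/3$-item becoming the exceeding item and the final load being $4/3-\mu/3$. Analogous cumulative computations handle the other bin types, in every case showing that the last item packed becomes the exceeding item.

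I do not expect a serious obstacle. The only subtlety is the tightness of the construction—there is essentially zero slack between the total small-item size $M(1-\mu)/6$ and the aggregate small-item capacity of the bins—so the arithmetic identities must be set up precisely so that the exact count of items of each size matches what the bundles require. The divisibility hypothesis $(7^{N+1})!\mid M$ is what guarantees that $M/4$, $M/8$, $M/3$ and $2M/3$ are integers so that the bin counts make sense.
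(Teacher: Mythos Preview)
Your approach is essentially the same as the paper's: form \emph{bundles} (the paper calls them \emph{blocks}) consisting of one item of each small size, note that a bundle has total size $\frac16-\frac{\mu}{6}<\frac16$, and then distribute bundles and large items across bins so that the item counts balance. Your packings for $J_3$, $J_2$, and $J_{22}$ are in fact identical to the paper's once one tracks which item in each bin ends up as the exceeding item.

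However, there is a genuine error in your $J_1$ construction as written. You say the $M$ bins for $J_1$ are those of $J_{22}$ ``each additionally receiving a $1$-item at the end'', which means each bin would contain one bundle, one $\frac13$-item, two $\frac12$-items, and one $1$-item. This fails twice: first, $J_1$ contains only $M$ items of size $\frac12$ (it extends $J_2$, not $J_{22}$), so you cannot place $2M$ of them; second, even if you could, the load just before the $1$-item would be $\frac32-\frac{\mu}{6}\geq 1$, so the $1$-item could not legally be packed. The correct construction (and the one the paper uses) is that each of the $M$ bins for $J_1$ contains one bundle, one $\frac13$-item, one $\frac12$-item, and finally one $1$-item; the load before the $1$-item is then $1-\frac{\mu}{6}<1$ and all item counts match. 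This is an easy fix, but you should state it explicitly rather than pointing back to the $J_{22}$ packing.
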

\begin{proof}
Create blocks of items in the following way. A block has one item
of every size $\theta_k$ for $1 \leq k \leq N$. The total size for
a block is $\sum_{i=1}^N \frac 1{7^i} < \sum_{i=1}^{\infty} \frac
1{7^i} = \frac 16$. The number of blocks is $M$.

For $J_3$, create $\frac{M}{8}$ bins with four blocks defined
above, and $\frac M4$ bins with two blocks. Add one item of size
$\frac 13$ to each bin out of the first $\frac{M}{8}$ bins and two
such items to the last $\frac{M}{4}$ bins. Every bin has total
size below $1$. For every $1 \leq k \leq N$, the number of packed
items of type $(k,7)$ is $4\cdot \frac M8+ 2\cdot \frac M4=M$. For
type $3$ items, the number of packed items is $\frac M8+2\cdot
\frac M4=\frac{5M}8$. The remaining type $3$ items ($\frac {3M}8$
items) are added as last items into the bins. This results in a
valid solution.

%%%A question. Is there no further potential in this?} I could not improve it..

For $J_2$, create $\frac M3$ bins with one block and $\frac M3$
bins with two blocks. The first kind of bins will also have one
item of type $2$ and one item of type $3$. The second kind of bins
will also have two items of type $3$. Every bin has total size below
$1$. All items are packed except for $\frac{2M}3$ items of type
$2$, and these items are packed as the last items for the bins.

For $J_{22}$, create $M$ bins, each containing one block, one item
of type $3$ and one item of type $2$. Every bin has total size
below $1$. All items are packed except for $M$ items of type $2$,
and these items are packed as the last items for the bins. For
$J_1$, the packing is the same as for $J_{22}$ with the difference
that the last item of each bin is of type $1$.
\end{proof}

We assign weights to items. An item of type $j$ for $j=1,2,3$ has
weight $2$. An item of type $(i,7)$ has weight
$\frac{1}{7^{i-1}}$. Let $W_i$ and $V_i$ denote the maximum total
weights of bins containing items of types presented not earlier
than type $(i,7)$ items for $1 \leq i \leq N$, for the case with
and without items of type $1$ (that is, where the complete inputs
are $J_1$ and $J_{22}$, respectively). Let $A_j$ and $B_j$ be the
maximum total weights for bins containing items arriving not
earlier than items of type $j$ for the two cases ($B_1$ is
undefined).

\begin{lemma}
It holds that $A_1=2$, $A_2=B_2=4$, and $A_3=B_3=6$. Additionally,
$W_k=V_k=9-\frac{1}{7^{k-1}}$.
\end{lemma}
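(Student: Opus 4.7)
For each of the six quantities my approach is the same: take an arbitrary bin in a feasible solution whose items lie entirely in the prescribed type range, split them into a possible exceeding item and the remaining (non-exceeding) items whose combined size is strictly below $1$, bound each of these two weight contributions separately, then exhibit an explicit bin attaining the sum. The values $A_1=2$, $A_2=B_2=4$, $A_3=B_3=6$ will follow by routine enumeration. For $A_1$, a type-$1$ item has size $1$, so a bin contains at most one such item, of weight $2$. For $A_2=B_2$ the ``$<1$'' condition on the non-exceeding part forbids two type-$2$ items and forbids any type-$1$ non-exceeding item, so at most one non-exceeding item (weight $2$) plus the exceeding item (weight $\leq 2$) gives the bound $4$. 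For $A_3=B_3$ the analogous enumeration shows the non-exceeding compositions of maximum weight are either two type-$3$ items or one type-$2$ and one type-$3$ item (each of weight $4$), and the exceeding adds at most $2$, for a bound of $6$.

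The main case is $W_k=V_k=9-1/7^{k-1}$. The crucial observation is that every $(i,7)$ item has weight-to-size ratio exactly $7$, while type-$3,2,1$ items have ratios $6,4,2$; moreover, for $i\leq k$ the size $1/7^i$ is an integer multiple of $1/7^k$. Let $S_7$ be the total non-exceeding $(i,7)$ size and $n_j$ the number of non-exceeding items of type $j\in\{1,2,3\}$, so that $S:=S_7+n_3/3+n_2/2+n_1<1$. A short rewriting yields
\[
7S_7+2(n_3+n_2+n_1)\;=\;7S\,-\,\tfrac{1}{3}n_3\,-\,\tfrac{3}{2}n_2\,-\,5n_1.
\]
If $n_3=n_2=n_1=0$, then $S_7\leq 1-1/7^k$ (being a multiple of $1/7^k$ and $<1$), giving non-exceeding weight at most $7-1/7^{k-1}$. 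Otherwise $n_1=0$ (since a single type-$1$ item already has size $1$), and for $k\ge 2$ any $n_3\ge 1$ or $n_2\ge 1$ contributes slack of at least $\tfrac{1}{3}$ or $\tfrac{3}{2}$, both exceeding $1/7^{k-1}$, so the weight is strictly below $7-1/7^{k-1}$. The exceptional case $k=1$ is handled by a brief direct check of the feasible mixed compositions $(n_3,n_2)\in\{(1,0),(2,0)\}$: in each case the integrality of $S_7$ on multiples of $1/7$ forces $S_7\leq 4/7$ or $S_7\leq 2/7$ respectively, and the total non-exceeding weight is exactly $6=7-1/7^0$.

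The exceeding item contributes weight at most $2$ (since every item has weight $\leq 2$), so the total is at most $9-1/7^{k-1}$. Attainment is immediate: take $7^k-1$ items of type $(k,7)$ as the non-exceeding part (total size $1-1/7^k$, total weight $7-1/7^{k-1}$) and append an exceeding item of maximum weight, namely a type-$1$ item for $W_k$ or a type-$2$ or type-$3$ item for $V_k$. The main obstacle is the mixed-composition inequality for the non-exceeding weight: the pure $(i,7)$ case reduces instantly to the integrality of $S_7$ on $1/7^k$, but the mixed case needs the weight-size rewriting above, together with the small special analysis at $k=1$ where the slack bound $1/3>1/7^{k-1}$ fails.
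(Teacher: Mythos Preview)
Your proof is correct but takes a different route from the paper's. The paper uses a pure \emph{exchange argument}: in the non-exceeding subset $S$, every item of size $\ge \tfrac{1}{3}$ (weight $2$) is replaced by two items of size $\tfrac{1}{7}$ (same weight, strictly smaller total size), and every item of size $1/7^{i'}$ with $i'<k$ is replaced by $7^{k-i'}$ items of size $1/7^{k}$ (same weight, same total size). After these exchanges $S$ consists solely of items of size $1/7^{k}$, so $|S|\le 7^{k}-1$ and the non-exceeding weight is at most $(7^{k}-1)/7^{k-1}=7-1/7^{k-1}$; adding $2$ for the last item gives $9-1/7^{k-1}$. This avoids any case split on $k$.

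Your approach instead bounds the non-exceeding weight directly via the density identity
\[
7S_7+2(n_3+n_2+n_1)=7S-\tfrac{1}{3}n_3-\tfrac{3}{2}n_2-5n_1,
\]
using integrality of $S_7$ on multiples of $1/7^{k}$ when no type-$j$ item is present, and otherwise the slack term. This is also valid, and has the mild advantage of making the ``why'' of the bound visible in one line (ratio $7$ for $(i,7)$-items versus ratios $6,4,2$ for types $3,2,1$). The cost is the small separate check at $k=1$.

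One presentational point: for $k=1$ you only list the mixed compositions $(n_3,n_2)\in\{(1,0),(2,0)\}$, but $(0,1)$ and $(1,1)$ are also feasible non-exceeding compositions. These are of course still covered by your slack argument since $\tfrac{3}{2}>1=1/7^{0}$, but as written your sentence restricts the slack bound to $k\ge 2$; it is worth stating explicitly that the $n_2\ge 1$ case is handled by $\tfrac{3}{2}>1/7^{k-1}$ for \emph{all} $k\ge 1$, so that only the $n_2=0$, $n_3\ge 1$ cases remain at $k=1$.
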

\begin{proof}
Consider a bin $B$. The last item never has weight above $2$, and
we find an upper bound on the weight of other items of the bin,
whose total size is strictly below $1$.  Let $S$ be such a subset
of items with total size strictly smaller than $1$ that maximizes
the total weight of its items.

We have $A_1=2$, since in this case $S$ must be empty. Additionally, $A_2=B_2=4$, since
$S$ has at most one item of type $2$ and one such item is a valid $S$. Finally
$A_3=B_3=6$, since $S$ has at most two items and a pair of such items is a valid $S$.

Next, assume that $B$ can have items arriving not earlier than
type $(i,7)$ items. Consider the subset $S$. Every item of size $\frac 13$ or larger can be replaced with
two items of size $\frac 17$ without decreasing the total weight.
Every item of size $\frac 1{7^{i'}}$ with $i'<i$ can be replaced
with $7^{i-i'}$ items of size $\frac 1{7^i}$, having the same total
weight. Thus, we can assume that the subset has only items of size
$\frac{1}{7^i}$. Their number is at most $7^i-1$ and therefore
their weight is at most
$\frac{7^i-1}{7^{i-1}}=7-\frac{1}{7^{i-1}}$ and we note that such number of items of size $\frac{1}{7^i}$ is indeed a valid $S$. The claim follows
from adding the last item of weight at most $2$.
\end{proof}

In the recent work on proving lower bounds for online bin packing
type of problems, it was shown
\cite{BDE,BBDEL_newlb,balogh2017lower,Ep19}  that if we can assign
weights to items as it is done here, then a lower bound on the
asymptotic competitive ratio as follows holds. This lower bound is
defined as a ratio between a given pair of a numerator and a
denominator. The numerator is the total weight of all items while
the denominator is a valid upper bound on the value
$\sum_{i=1}^{N-1}
(W_{i+1}-W_i)OPT(I_{i+1})+(W_1-A_3)OPT(I_1)+(A_3-A_2)OPT(J_3)+(A_2-A_1)OPT(J_2)+A_1\cdot
OPT(J_1)$.

We now compute the total weight of all items in the inputs $J_1$
and $J_{22}$, where these values are equal, i.e., we consider the
numerator of the above ratio. The total weight of items of types
$1,2,3$ is $6M$. The weight of other items is $M \cdot \frac
{7^{N}-1}{6\cdot 7^{N-1}}$ (using the sum of a geometrical series
with common ratio 7).

Next, we consider the denominator.  We have $\sum_{i=1}^{N-1}
(W_{i+1}-W_i)OPT(I_{i+1})+(W_1-A_3)OPT(I_1)+(A_3-A_2)OPT(J_3)+(A_2-A_1)OPT(J_2)+A_1\cdot
OPT(J_1)=\sum_{i=1}^{N-1}
(\frac{1}{7^{i-1}}-\frac{1}{7^{i}})OPT(I_{i+1})+2\cdot
(OPT(I_1)+OPT(J_3)+OPT(J_2)+OPT(J_1)) \leq \sum_{i=1}^{N-1}
(\frac{1}{7^{i-1}}-\frac{1}{7^{i}})\frac{7M}{6\cdot(7^{i+1}+1)}+2M\cdot
(\frac{7}{48}+\frac 38+\frac 23+1)=M((\sum_{i=1}^{N-1}
\frac{1}{7^{2i}+7^{i-1}})+4.375)$.

We find an upper bound on the series $\sum_{i=1}^{N-1}
\frac{1}{7^{2i}+7^{i-1}}$. This is done by calculating the first
six elements and bounding the other ones. We have
$\frac{1}{7^{2i}+7^{i-1}} \leq \frac{1}{7^{2i}}=\frac{1}{49^i}$
and therefore $\sum_{i=7}^{N-1} \frac{1}{7^{2i}+7^{i-1}} \leq
\sum_{i=7}^{N-1} \frac{1}{49^{i}}=\frac {49^{N-7}-1}{48\cdot
49^{N-1}}$. The first six elements are
$\frac{1}{50}+\frac{1}{2408}+\frac{1}{177698}+\frac1{5765144}+\frac{1}{282477650}+\frac{1}{13841304008}
\approx 0.0204239557816752$.

Letting $N$ grow to infinity both in
the numerator and denominator gives us a numerator of $6+\frac
76=\frac{43}6$ and denominator of at  most
$\frac{1}{50}+\frac{1}{2408}+\frac{1}{177698}+\frac1{5765144}+\frac{1}{282477650}+\frac{1}{13841304008}+\frac{1}{49^6
\cdot 48}+4.375 \approx 4.39542395578318$. The resulting lower
bound on the competitive ratio is $1.6304835981151$.
The same
value is obtained even if the exact sum of the series is found via
simulation. The difference is in the fourth digit after the
decimal point.

For the case without $1$-items, using $W_k=V_k$ for all $k$,
instead of
$(A_3-A_2)\cdot\OPT(J_3)+(A_2-A_1)\cdot\OPT(J_2)+A_1\cdot
\OPT(J_1)$ we have $(B_3-B_2)\cdot\OPT(J_3)+B_2\cdot
\OPT(J_{22})$, so instead of $2\cdot \frac 38+2 \cdot \frac 23+2
\cdot 1$ we have $2 \cdot \frac 38+4\cdot 1$. The denominator is
larger by $\frac 23$, and the resulting lower bound is
approximately $1.41575234447271$. The difference due to using the
exact sum of the series is in the fifth digit after the decimal
point.

\bibliographystyle{abbrv}
\bibliography{ooebp}

\end{document}